\documentclass[12pt,onecolumn,letter]{IEEEtran}

\usepackage[dvipdfmx]{graphicx}
\usepackage{amssymb}
\usepackage{amsmath}
\usepackage{latexsym}
\usepackage{colortbl}
\usepackage{multirow}
\usepackage{bm}
\usepackage{multirow}

\def\bW{\bm{W}}
\def\bw{\bm{w}}
\def\bV{\bm{V}}
\def\btheta{\bm{\theta}}
\def\bTheta{\bm{\Theta}}
\def\bxi{\bm{\xi}}

\def\RR{\mathbb R}

\def\NN{\mathbb N}

\def\argmax{\mathop{\rm argmax}}

\def\QED{\mbox{\rule[0pt]{1.5ex}{1.5ex}}}

\def\endproof{\hspace*{\fill}~\QED\par\endtrivlist\unskip}

\newcommand{\qed}{\hfill \QED}

 \newenvironment{proofof}[1]{\vspace*{5mm} \par \noindent
         \quad{\it Proof of #1:\hspace{2mm}}}{\qed
}

\newtheorem{lemma}{Lemma}
\newtheorem{theorem}{Theorem}

\newtheorem{rem}{Remark}
\newtheorem{proposition}{Proposition}
\newtheorem{example}{Example}

\def\Label#1{\label{#1}\ \text{[\ #1\ ]}\ }
\def\Label{\label}


\begin{document}
\title{Universal channel coding\\ for general output alphabet}
\author{Masahito Hayashi~\IEEEmembership{Fellow,~IEEE}
\thanks{This research was partially supported by 
the MEXT Grant-in-Aid for Scientific Research (A) No.\ 23246071,
and was also partially supported by the National Institute of Information and Communication Technology (NICT), Japan.
The Centre for Quantum Technologies is funded
by the Singapore Ministry of Education and the National Research
Foundation as part of the Research Centres of Excellence programme.
This paper was presented 
in part
at 2014 the International Symposium on Information Theory and Its Applications,
Melbourne, Australia from 26 to 29 October 2014 \cite{Con}.}
\thanks{Masahito Hayashi   is with the Graduate School of Mathematics, Nagoya University,
Furocho, Chikusaku, Nagoya, 464-860, Japan,
and
Centre for Quantum Technologies, National University of Singapore, 3 Science Drive 2, Singapore 117542.
(e-mail: masahito@math.nagoya-u.ac.jp)}
}

\markboth{M. Hayashi: Universal channel coding for general output alphabet}{}

\maketitle

\begin{abstract}

We propose two types of universal codes that are suited to two asymptotic regimes when the output alphabet is possibly continuous. 
The first class has the property that the error probability decays exponentially fast and we identify an explicit lower bound on the error exponent. The other class attains the epsilon-capacity the channel and we also identify the second-order term in the asymptotic expansion. 
The proposed encoder is essentially based on the packing lemma of the method of types.
For the decoder, we first derive a R\'enyi-relative-entropy version of Clarke and
Barron's formula the distance between the true distribution and the
Bayesian mixture, which is of independent interest.
The universal decoder is stated in terms of this formula and quantities used in the
information spectrum method. 
The methods contained herein allow us to analyze universal codes for channels with continuous and discrete output alphabets in a unified manner, and to analyze their performances in terms of the exponential decay of the error probability and the second-order coding rate.
\end{abstract}

\begin{IEEEkeywords}
Universal coding; information spectrum; Bayesian; method of types
\end{IEEEkeywords}

%
\IEEEpeerreviewmaketitle

\section{Introduction}\Label{s1}
In wireless communication, 
the channel is described with continuous output alphabet,
e.g., additive white Gaussian noise (AWGN) channel and
Gaussian fading channel.
In these cases,
it is not so easy to identify the channel even though the channel is stationary and memoryless.
Then, it is needed to make a code that achieves good performances for any channel in a set
of multiple stationary and memoryless channels, e.g. a set of MIMO (multiple-input and multiple-output) Gaussian channels (e.g. \cite{A,de4,B,C,D}).
More precisely, it is desired to construct a code that works well for any stationary and
memoryless channel in a given parametric family of possible single-antenna/multi-antenna AWGN
channels for a real wireless communication alphabet.

In the discrete case, to resolve this problem,
Csisz\'{a}r and K\"{o}rner\cite{CK}
proposed universal channel coding 
by employing the method of types.
Since their code construction depends only on the input distribution and the coding rate,
it does not depend on the form of the channel,
which is a remarkable advantage. 
They also provide an explicit form of a lower bound of the exponential decreasing rate of the decoding error probability.
However, their method works only  
when input and output alphabets have finite cardinality.
Hence, their method cannot be applied to any continuous output alphabet
while several practical systems have a continuous output alphabet.
Indeed, even in the continuous output case, 
universal channel codes have been discussed for MIMO Gaussian channels \cite{C,D,E,F},
in which, this problem was often discussed in the framework of compound channel.
Although the studies \cite{Do,D,E} did not cover the general discrete memoryless case,
the paper \cite{F} covered the MIMO Gaussian channels as well as the general discrete memoryless case\footnote{%
While the review paper \cite{C} discussed a universal code in the both cases, their treatments were separated.}.
However, they did not provide any explicit form of a lower bound of the exponential decreasing rate of the decoding error probability.
Therefore, it is desired to invent a universal channel code satisfying the following two conditions.
(1) The universal channel code can be applied to the discrete memoryless case and the continuous case in a unified way. 
(2) The universal channel code has an explicit form of a lower bound of the exponential decreasing rate of the decoding error probability.

Even in the discrete case, Csisz\'{a}r and K\"{o}rner\cite{CK}'s analysis is restricted to the case when 
the transmission rate is strictly smaller than the capacity.
When the transmission rate equals the capacity,
the asymptotic minimum error depends on the second-order coding rate \cite{Strassen,Hsec,Pol}.
For the fixed-length source coding and the uniform random number,
this kind of analysis was done in \cite{H-source}.
Recently, 
with the second order rate,
Polyanskiy \cite{G} addressed a universal code in the framework of compound channel.
Also, Moulin \cite{Moulin2} discussed a universal code in the discrete memoryless case after the conference version of this paper \cite{Con}.
The papers \cite{YN1,YN2,H,I} addressed the optimal second order rate for the mixed channel.
However, no study discussed the universal channel coding with the second-order coding rate for the continuous case although the case of fixed continuous channels was discussed with the higher order analysis including the second-order analysis by Moulin \cite{Moulin}.
Hence, it is also desired to propose a universal channel code working with the second-order coding rate even with the continuous case.
Further, the universal coding with the second order analysis has another problem as follows.
The conventional second order analysis \cite{Strassen,Hsec,Pol} has meaning only when the mutual information is the first order coding rate.
However, the set of such channels has measure zero.
So, such a analysis might be not so useful when we do not know the transition matrix of the channel.

In this paper, we deal with the universal coding with a general output alphabet (including the continuous case)
and derive the exponential decreasing rate of the average error probability
and the second order analysis.
Further, to resolve the above problem for the second order analysis,
we introduce the perturbation with the order 
$O(\frac{1}{n^{\frac{1}{2}}})$ of the channel.
Notice that such a perturbation for distribution is often employed \cite[8.11 Theorem]{Vaart}.
Under this perturbation, we derive the second order analysis, which is the maximum order to achieve the asymptotic constant average error probability.
Indeed, even if the input alphabet is continuous, 
we usually use only a finite subset of the input alphabet for encoding.
Hence, it is sufficient to realize a 
universal channel code for the case
when the input alphabet is finite and the output alphabet is continuous.
In a continuous alphabet,
we need an infinite number of parameters to identify a distribution
when we have no assumption for the true distribution.
In statistics, based on our prior knowledge,
we often assume that the distribution belongs to a certain 
parametric family of distributions as in \cite{F}.
In particular, an exponential family is often employed as a typical example.
This paper adopts this typical assumption as one of our assumptions.
This paper addresses three assumptions.
One is that the output distribution $P_{Y|X=x}$ belongs to 
an exponential family on a general set ${\cal Y}$
for each element $x$ of a given finite set ${\cal X}$.
As explained in Example \ref{ex1},
this assumption covers the usual setting with finite-discrete case
because the set of all distributions on a given finite-discrete set forms
an exponential family.
This assumption also covers the Gaussian fading channel and the multi-antenna Gaussian channel
as addressed in Examples \ref{ex2} and \ref{ex3}.
However, in real wireless communication, the additive noise is not subject to Gaussian distribution \cite{Niranjayan}.
In this case, the channel is not given as an exponential family of channels 
when the fading coefficient is unknown or the distribution of the noise contain unknown parameter,
To cover this case, we consider two general conditions in this paper.
Under these assumptions, 
we provide a universal code with an explicit form of a lower bound of the exponential decreasing rate of the decoding error probability.

To construct our universal encoder,
we employ the method given for the quantum universal channel coding \cite{Ha1},
in which, the packing lemma \cite{CK} 
is employed independently of the output alphabet.
That is, the paper \cite{Ha1} showed that
the encoder given by the packing lemma
can simulate the average of the decoding error probability under the random coding.
Since the method of \cite{Ha1} does not depend on the output alphabet,
it works well with a continuous alphabet.

To construct our universal decoder, 
we focus on the method given for the quantum universal channel coding \cite{Ha1}.
The paper \cite{Ha1} considered a universally approximated output distribution 
by employing the method of types
in the sense of the maximum relative entropy.
Then, it employed the decoder constructed by the information spectrum method 
based on the approximating distributions.
However, in a continuous alphabet, we cannot employ the method of types
so that it is not so easy to give a universally approximated output distribution in the sense of maximum relative entropy.
That is, we cannot directly employ the method in the paper \cite{Ha1}.

To resolve this problem, we focus on Clarke and Barron formula \cite{CB} that shows that
the Baysian average distribution well approximates any independent 
and identical distribution in the sense of relative entropy, i.e., Kullback-Leibler divergence.
Its quantum extension was shown in \cite{Ha2}.
Their original motivation is rooted in universal data compression.
However, they did not discuss the $\alpha$-R\'{e}nyi relative entropy.
In this paper, 
we evaluate the quality of this kind of approximation in terms of the $\alpha$-R\'{e}nyi relative entropy.
Then, we apply the universal decoder constructed by the information spectrum method 
based on the Baysian average distributions.
Modifying the method given in \cite{Ha1}, we derive our lower bound of the error exponent of our universal code.
We also derive the asymptotic error of our universal code with the second-order coding rate.

The remaining part of this paper is the following.
Section \ref{s2} explains notations and our assumptions
In this section, we provide three examples of channels whose output distributions form an exponential family.
Section \ref{s2-2} explains our formulation and obtained results.
In Section \ref{s3}, we give notations for the method of types and our universal encoder.
This part is similar to the previous paper \cite{Ha1}.
In Section \ref{s4},
based on the result by Clarke and Barron \cite{CB},
we derive an $\alpha$-R\'{e}nyi-relative-entropy version of 
Clarke and Barron formula as another new result.
Section \ref{s5} gives our universal decoder and its properties to be applied in the latter sections.
In Section \ref{s6}, we prove our lower bound of our error exponent.
In Section \ref{s8}, we prove the universal achievability for the second order sense.
Appendixes are devoted in several lemmas used in this paper.

\section{Preliminary}\Label{s2}
\subsection{Information quantities}
We focus on an input alphabet ${\cal X}:=\{1, \ldots,d\}$ with finite cardinality
and an output alphabet ${\cal Y}$ that may have infinite cardinality and is a general measurable set.
In this paper, the output alphabet ${\cal Y}$ is treated as a general probability space with a measure $\mu(dy)$
because this description covers the probability space of finite elements and the set of real values.
Hence, when the alphabet ${\cal Y}$ is a discrete set including a finite set,
the measure $\mu(dy)$ is chosen to be the counting measure.
When the alphabet ${\cal Y}$ is a vector space over the real numbers $\RR$,
the measure $\mu(dy)$ is chosen to be the Lebesgue measure.
When we treat a probability distribution $P$ on the alphabet ${\cal Y}$,
it is restricted to a distribution absolutely continuous with respect $\mu(dy)$. 
In the following, we use the lower case $p(y)$ to express the Radon-Nikodym derivative of $P$ with respect to the measure $\mu(dy)$, i.e., the probability density function of $P$ 
so that $P(dy)=p(y) \mu(dy)$.

In general, a channel from ${\cal X}$ to ${\cal Y}$ is described as a collection $\bW$ of conditional
probability measures $W_{x}$ on ${\cal Y}$ for all inputs $x \in {\cal X}$.
Then, we impose the above assumption to $W_{x}$ for any $x \in {\cal X}$.
So, we have $W_x(dy)=w_x(y)\mu(dy)$.
We denote the conditional probability density function by $\bw=(w_x)_{x\in {\cal X}}$.
When a distribution on ${\cal X}$ is given by a probability distribution $P$,
and a conditional distribution on a set ${\cal Y}$ with the condition on ${\cal X}$ is given by $\bV$,
we define the joint distribution $\bW \times P$
on ${\cal X} \times {\cal Y}$ by $\bW \times P(B,x):=W(B|x)P(x)$, 
and the distribution $\bW \cdot P$ on ${\cal Y}$ by $\bW \cdot P(B):=\sum_x W(B|x)P(x)$
for a measurable set $B \subset {\cal Y}$.
Also, we define  the notations $\bw \times P$ and $\bw \cdot P$ as
$\bw \times P(y,x)\mu(dy):=\bW \times P(dy,x)=w_x(y)P(x)\mu(dy)$ and
$\bw \cdot P(y)\mu(dy):=\bW \cdot P(dy)=\sum_{x\in {\cal X}}w_x(y)P(x)\mu(dy)$.
We also employ the notations $W_P:= \bW \cdot P$ and $w_P:= \bw \cdot P$.

We denote the expectation under the distribution $P$ 
by $E_P[~]$. 
Throughout this paper, the base of the logarithm is chosen to be $e$. 
For two distributions $P$ and $Q$ on ${\cal Y}$,
we define the relative entropy
$D(P\|Q):= E_{P}[\log \frac{p(Y)}{q(Y)}]$,
and the value $s D_{1+s}(P\|Q):= \log E_{P}[(\frac{p(Y)}{q(Y)})^{s}]$
for $s >-1 $ when these expectations $E_{P}[\log \frac{p(Y)}{q(Y)}]$ and
$E_{P}[(\frac{p(Y)}{q(Y)})^{s}]$ exist.
The function $s \mapsto s D_{1-s}(P\|Q)$ is a concave function for $s \in [0,1]$
because $-s D_{1-s}(P\|Q)$ can be regarded as a cumulant generating 
function of $- \log \frac{p(Y)}{q(Y)}$.
For $s \in [-1,0) \cup (0,\infty)$, the R\'{e}nyi relative entropy $D_{1+s}(P\|Q)$ is defined as $D_{1+s}(P\|Q):=\frac{s D_{1+s}(P\|Q)}{s}$.
Here, the function $s D_{1+s}(P\|Q)$ is defined for $s=0$, but 
the R\'{e}nyi relative entropy
 $D_{1+s}(P\|Q)$ is not necessarily defined for $s=0$.
In addition, we can define the max relative entropy
\begin{align} 
D_{\max}( P\|Q):= 
\inf \bigg\{a \bigg| a \ge \log \frac{p(y)}{q(y)} 
\hbox{ almost every where with respect to }y 
\hbox{ under the distribution } P\bigg\}
\Label{e5-21-T}.
\end{align} 
So, we have $D_{\max}( P\|Q)= \lim_{\alpha \to \infty}D_\alpha(P\|Q) $.

Given a channel $\bW$ from ${\cal X}$ to ${\cal Y}$ and a distribution $P$ on ${\cal X}$, 
we define the value $s I_{1-s}(P,\bW)$ for $s \in [0,1]$ as
\begin{align}
s I_{1-s}(P,\bW)
:= -  \sup_Q \log \sum_x P(x) e^{-s D_{1-s}(W_x \|Q)}
=\inf_Q s D_{1-s}(\bW \times P\| Q \times P).
\end{align}
Since the minimum of concave functions is a concave function,
the function $s\mapsto s I_{1-s}(P,W)$ is also a concave function.
In fact, H\"{o}lder inequality guarantees that
\begin{align}
s I_{1-s}(P,\bW) =  -(1-s) \log \int
 (\sum_{x}P(x) w_x(y)^{1-s} )^{\frac{1}{1-s}} \mu(dy) 
\Label{eq20}
\end{align}
when the RHS exists \cite{Sibson}\cite[(34)]{q-wire}.
For $s \in (0,1]$,
when $s I_{1-s}(P,\bW)$ is finite, 
$ I_{1-s}(P,\bW)$ is defined as $\frac{s I_{1-s}(P,\bW)}{s}$.
The quantity $I_{1-s}(P,W)$ with (\ref{eq20}) is the same as the Gallager function \cite{Gal} 
with different parametrization for $s$.
We also define the mutual information 
\begin{align}
I(P,\bW):=
\sum_{x}  P(x) D(W_x \| \bW\cdot P)
=
\sum_{x}  P(x) 
\int_{{\cal Y}}
w_x (y) \log \frac{w_x (y)}{\bw\cdot P(y)} \mu(dy)
\end{align}
and its variance
\begin{align}
V(P,\bW)
:=&
\sum_{x}P(x) E_{W_x}
\Big[\log \frac{w_x(Y)}{\bw \cdot P(Y)} - I(P,\bW)\Big]^2 \nonumber \\
=& \sum_{x}P(x) 
\int_{{\cal Y}}
w_x (y)
\Big(\log \frac{w_x(y)}{\bw \cdot P(y)} - I(P,\bW)\Big)^2
\mu(dy).
\end{align}
When the channel satisfies some suitable conditions, we have
\begin{align}
\lim_{s \to 0}I_{1-s}(P,\bW)
=I(P,\bW).\Label{2-26-A}
\end{align}
For example, when the alphabet ${\cal Y}$ has a finite cardinality,
the relation \eqref{2-26-A} holds by choosing the measure $\mu(dy)$ to be the counting measure.

\subsection{Exponential family}
To state the conditions for the main results, in this subsection, we consider a parametric family of distributions
$\{P_{\btheta}\}_{\btheta\in \bTheta }$ on the alphabet ${\cal Y}$
with a finite-dimensional parameter set $\bTheta \subset \mathbb{R}^k$.
Here, we assume that the Radon-Nikodym derivative $p_{\btheta}(y)$ is differentiable with respect to $\theta$.
To cover a class of proper channels, we introduce 
an exponential family of distributions, which covers so many useful distributions in statistics,
and has been widely recognized as a key concept of statistics \cite{Lehmann2,AN}.
A parametric family of distributions $\{P_{\btheta}\}_{\btheta\in \bTheta }$
on the measurable set ${\cal Y}$
is called an {\it exponential family}
with a parametric space $\bTheta \subset \mathbb{R}^k$ 
when a distribution $P_0$ is absolutely continuous with respect to a measure $\mu(dy)$
on a measurable set ${\cal Y}$
and the parametric family of distributions $\{P_{\btheta}\}_{\btheta\in \bTheta }$ is written as \cite{AN}
\begin{align}
p_{\btheta}(y)=
p_{0}(y) e^{\sum_{j=1}^{k} \theta^j g_{j}(y)- \phi(\btheta)}
\end{align}
with generators $g_{j}(y)$ and satisfies the following conditions.
\begin{description}
\item[A1]
The potential function $\phi(\btheta)$ equals 
the cumulant generating function of $g_{j}$, i.e.,
\begin{align}
e^{\phi(\btheta)}= 
\int_{{\cal Y}} p_{0}(y) e^{\sum_{j=1}^{k} \theta^j g_{j}(y)}
\mu(dy)< \infty
\end{align}
for $\btheta \in \bTheta$.

\item[A2]
$\phi$ is a $C^2$ function on $\bTheta$, i.e.,
the Hessian matrix $J_{\btheta}=(J_{\btheta|i,j})_{i,j}$ of $\phi$ is continuous on $\bTheta$.
\end{description}
The function $\psi$ is called the potential.
In the above assumption,
the Hessian matrix $J_{\btheta}$ equals the covariance matrix of the random variables $(g_{j}(Y))_i$
under the distribution $P_{\btheta} $.
In statistics, the Hessian matrix $J_{\btheta}$ is called Fisher information matrix,
and expresses the bound of the precision of the estimation of the parameter $\btheta$.

\begin{example}[Multinomial distributions]\Label{Mul}
The set of multinomial distributions in a finite output set ${\cal Y}=\{0,1, \ldots, m\}$
forms an exponential family.
We choose the generator $g_j(y):= \delta_{j,y}$ for $j=1, \ldots,m$
and define $p_0(y):= \frac{1}{m+1}$. 
Remember that $\mu(dy)$ is given as the counting measure.
For parameters $\theta=(\theta^{j})$ with $ j=1, \ldots,m$,
the distributions $p_{\btheta}(y)$ is given as 
\begin{align}
p_{\btheta}(y):=
\left\{
\begin{array}{ll}
\frac{e^{\theta^{y}}}{1+\sum_{j=1}^{m} e^{\theta^{j}}}
& \hbox{ when } y \ge 1 \\
\frac{1}{1+\sum_{j=1}^{m} e^{\theta^{j}}}
& \hbox{ when } y =0 .
\end{array}
\right.
\end{align}
The potential $\phi(\btheta)$ is $\log (1+\sum_{j=1}^{m} e^{\theta^{j}})-\log(m+1)$.
\end{example}

\begin{example}[Poisson distributions]\Label{Poi}
The set of Poisson distributions is a one-parameter exponential family
on the set of natural numbers $\NN:=\{0,1, \ldots\}$.
We set $p_0(y):=\frac{e^{-1}}{n !}$ and choose the generator $g(y):= y $.
Then, we have 
$p_{\theta|Poi}(y):=\frac{e^{n \theta-e^\theta}}{n!}$
with the potential $\phi(\theta)= e^\theta -1 $.
\end{example}

\begin{example}[Gaussian distributions]\Label{Gau}
The set of Gaussian distributions forms a one-parameter exponential family on $\RR$.
We set $p_0(y):=\frac{e^{-\frac{(y)^2}{2}}}{\sqrt{2\pi}}$ and choose the generator $g(y):= y $.
Then, we have 
$p_{\theta}(y):=\frac{e^{-\frac{(y-\theta)^2}{2}}}{\sqrt{2\pi}}$
with the potential $\phi(\theta)= \frac{(\theta)^2}{2} $.
\end{example}

Now, we consider a parametric family of channels 
from ${\cal X}$ to ${\cal Y}$ with a finite-dimensional parameter set 
The channel is parameterized as $\{\bW_{\btheta}\}_{\btheta \in \bTheta}$
with a finite-dimensional parameter set $\bTheta \subset \mathbb{R}^{k}$.

Now, using the conditions  for an exponential family, we make a condition for a family for the channels $\{\bW_{\btheta}\}_{\btheta\in \bTheta }$.
A parametric family of channels $\{\bW_{\btheta}\}_{\btheta\in\bTheta}$ is called 
an {\it exponential family of channels} with a parametric space $\bTheta \subset \mathbb{R}^k$ 
when a channel $\bW_{0}=(W_{0,x})$ from ${\cal X}$ to ${\cal Y}$ is composed of 
absolutely continuous distributions $W_{0,x}$ with respect to a measure $\mu(dy)$
and the parametric family of channels $\{\bW_{\btheta}\}_{\btheta\in\bTheta}$ is written as 
\begin{align}
w_{\btheta,x}(y)=
w_{0,x}(y) e^{\sum_{j=1}^{k} \theta^{j} g_{j,x}(y)- \phi_x(\btheta)}
\end{align}
for any $x \in {\cal X}$ with generators $g_{j,x}(y)$ 
and satisfies the following conditions.
\begin{description}
\item[B1]
The potential function $\phi_x(\btheta)$ equals 
the cumulant generating function of $g_{j,x}$, i.e.,
\begin{align}
e^{\phi_x(\btheta)}= 
\int_{{\cal Y}} w_{0,x}(y) e^{\sum_{j=1}^{k} \theta^j g_{j,x}(y)}
\mu(dy)< \infty
\end{align}
for $\btheta \in \bTheta$.

\item[B2]
$\phi_x$ is a $C^2$ function on $\bTheta$, i.e.,
the Hessian matrix $J_{\btheta,x}=(J_{\btheta,x|i,j})_{i,j}$ of $\phi$ is continuous on $\bTheta$.

\end{description}
When ${\cal Y}$ equals ${\cal X}$, this definition of exponential family 
has been discussed in many papers in the context of Markovian processes
\cite{Bhat1,Bhat2,Hudson,KM,Feigin},
and is different from the definition in the papers \cite{NaKa,Naga,HW,WH}.
Here, for convenience, we say that a family of channels satisfies Condition A
when all of the above conditions hold, i.e., it is an exponential family of channels.


Here, we list typical examples as follows.
All of the below examples satisfy Condition B.

\begin{example}[Finite set]\Label{ex1}
Consider a finite input set ${\cal X}=\{1, \ldots, d\}$ and
a finite output set ${\cal Y}=\{0,1, \ldots, m\}$.
Then, the measure $\mu(dy)$ is chosen to be the counting measure. 
We choose the generators $g_{(i,j),x}(y):=\delta_{i,x}\delta_{j,y}$ with $i=1, \ldots, d, j=1, \ldots,m$.
For parameters $\btheta=(\theta^{i,j})$,
the exponential family of channels is given as
\begin{align}
\bw_{\btheta,x}(y)
:= 
\left\{
\begin{array}{ll}
\frac{e^{\theta^{x,y}}}{1+\sum_{j=1}^{m} e^{\theta^{x,j}}}
& \hbox{ when } y \ge 1 \\
\frac{1}{1+\sum_{j=1}^{m} e^{\theta^{x,j}}}
& \hbox{ when } y =0 .
\end{array}
\right.
\end{align}
Then, the set of output distributions
$\{\bW_{\btheta}\}_{\btheta}$ forms an exponential family for $x \in {\cal X}$.
\hfill $\square$\end{example}

\begin{example}[Gaussian fading channel]\Label{ex2}
Assume that the output set ${\cal Y}$ is the set of real numbers $\mathbb{R}$
and the input set ${\cal L}$ is a finite set $\{1, \ldots, d\}$.
We choose $d$ elements $x_1, \ldots, x_d \in \mathbb{R}$ as input signals.
The additive noise $Z$ is assumed to be subject to 
the Gaussian distribution with the expectation $b$ and the variance $v$.
Then, we assume that the received signal $Y$ is given by the scale parameter $a$ as
\begin{align}
Y= a x_i +Z.\Label{E5-18}
\end{align}
The class of these channels is known as Gaussian fading channels, and its compound channel is discussed in the paper \cite{E}.
We choose the generators 
$g_{1,\ell}(y):=-\frac{y^2}{2}$, 
$g_{2,\ell}(y):=y x_\ell$, and 
$g_{3,\ell}(y):=y$. 
By using the three parameters $\theta^1:=\frac{1}{v}$, $\theta^2:=\frac{a}{v}$,
and $\theta^3:=\frac{b}{v}$,
the channels form an exponential family of channels as
\begin{align}
& w_{\btheta,\ell}(y)
:=
\frac{1}{\sqrt{2\pi v}} 
e^{-\frac{(y-a x_\ell-b)^2}{2v} } \nonumber \\
=&
\frac{\theta^1}{\sqrt{2\pi}} 
e^{-\frac{-y^2}{2} \theta^1+ y (\theta^2 x_\ell+\theta^3) 
-\frac{(\theta^2 x_\ell+\theta^3)^2}{2 \theta^1}}.
\end{align}
\hfill $\square$\end{example}

\begin{example}[multi-antenna Gaussian channel]\Label{ex3}
We consider the constant multi-antenna (MIMO) Gaussian channel
when the sender has $t$ antennas and the receiver has $r$ antennas.
Then, the output set ${\cal Y}$ is given as the set of $r$-dimensional real numbers $\mathbb{R}^r$
and the input set ${\cal L}$ is given as a finite set $\{1, \ldots, d\}$.
In this case, the sender chooses 
$d$ elements $\vec{x}_1, \ldots, \vec{x}_d \in \mathbb{R}^t$ 
as input signals.
The $r$-dimensional additive noise $\vec{Z}$ is assumed to be subject to 
the $r$-dimensional Gaussian distribution with the expectation $\vec{b}\in \mathbb{R}^d$ and the covariance matrix $v_{i,i'}$.
When the input is $\vec{x}_{\ell}$,
we assume that the received signal $\vec{Y}$ is written by an $r \times t$ matrix $(a_{i,j})$ as
\begin{align}
\vec{Y}= \sum_{j=1}^t a_{i,j} x^j_{\ell} +\vec{Z}.
\end{align}
The class of these channels is known as constant multi-antenna Gaussian channels \cite{de4}.
We choose three kinds of generators 
$g_{(1,i,i'),\ell'}(y):=-\frac{y^i y^{i'}}{2}$, 
$g_{(2,i,j),\ell'}(y):=y^i x^{j}_{\ell'}$, and $g_{(3,i),\ell'}(y):=y^i$. 
Then, we define three kinds of parameters:
The symmetric matrix $(\theta^{1,i,i'})_{i,i'}$ is defined as the inverse matrix of the covariance matrix $v_{i,i'}$.
The matrix $(\theta^{2,i,i'})_{i,i'} $ is defined as
$\theta^{2,i,j}:=\sum_{i'} \theta^{1,i,i'} a_{i',j}$.
The vector $\theta^{3,i}:=\sum_{i'} \theta^{1,i,i'} b^{i'}$.
So, the channels form an exponential family of channels as
\begin{align}
& w_{\btheta,\ell}(\vec{y})
:=
\frac{ \det (\theta^1)}{\sqrt{2\pi}} 
e^{-\frac{1}{2} 
\sum_{i,i'} \theta^{1,i,i'} 
(y^i- \sum_j a_{i,j} x^j_\ell-b^i)
(y^{i'}- \sum_{j'} a_{i',j'} x^{j'}_{\ell}-b^{i'})
} \nonumber \\
=&
\frac{ \det (\theta^1)}{\sqrt{2\pi}} 
e^{-\frac{1}{2} 
\sum_{i,i'} \theta^{1,i,i'} y^i y^{i'}
+
\sum_{i} y^{i} (\theta^{3,i}+ \sum_{j}\theta^{2,i,j} x^j_\ell )
-\frac{1}{2} 
\sum_{i,i'} 
((\theta^{1})^{-1})^{i,i'} 
(\theta^{3,i}+ \sum_{j}\theta^{2,i,j} x^j_\ell )
(\theta^{3,i'}+ \sum_{j}\theta^{2,i',j} x^j_\ell )}.
\end{align}
\hfill $\square$\end{example}

\begin{example}[Poisson channel]\Label{ex4}
When the signal is too weak, we have Poisson channel as follows.
Assume that the output set ${\cal Y}$ is $\NN$
and the input set ${\cal X}$ is a finite set $\{1, \ldots, d\}$.
Then, the measure $\mu(dy)$ is chosen to be the counting measure. 
We choose three kinds of generators $g_{i,x}(y):=\delta_{i,x}y$. 
By using the parameter $\btheta=(\theta_i)_{i=1}^d$,
the Poisson channel is given as an exponential family of channel with
\begin{align}
& w_{\btheta,x}(y)
:=P_{\theta_x|Poi}(y).
\end{align}
For the definition of $P_{\theta_x|Poi}$, see Example \ref{Poi}.
\hfill $\square$\end{example}

Here, we summarize the number in Table \ref{T1}.

\begin{table}[htpb]
  \caption{Numbers used in this paper}
\Label{T1}
\begin{center}
  \begin{tabular}{|c|l|} 
\hline
$k$ & Dimension of parameter \\
\hline
$d$ & Cardinality of input alphabet \\
\hline
$n$ & No. of use of channels \\
\hline
  \end{tabular}
\end{center}
\end{table}

\subsection{General assumptions}
In real wireless communication, the additive noise is not subject to Gaussian distribution \cite{Niranjayan}.
Now, we consider the case when 
the additive noise $Z$ in \eqref{E5-18} is subject to a general parametric family of distributions 
$\{P_{\btheta}\}_{\btheta}$ on ${\cal Y}=\RR$, which is not a Gaussian distribution.
Then, the probability density function of the output distributions are given as 
$ w_{\btheta,\ell}(y)=p_{\btheta}(y-ax_\ell)$, 
and do not form an exponential family in general.
Further, even when the distribution $P_{\btheta}$ of the additive noise $Z$ is known,
when the fading coefficient $a$ is unknown, 
the set of our channels does not form an exponential family in general.
 Hence, it is needed to relax the condition of exponential family for practical purpose.
As a preparation, we consider the following condition for a family of distributions $\{P_{\btheta}\}_{\btheta\in \bTheta}$ on the measurable set ${\cal Y}$ with $\bTheta \subset \RR^d$.


\begin{description}
\item[C1]
The parametric space $\bTheta$ is compact.

\item[C2]
The likelihood ratio derivative 
$l_{\btheta|i}(y):= \frac{\partial}{\partial \theta^i}\log p_{\btheta}(y)$ exists for $i=1, \ldots,k$
even on the boundary of $\bTheta$.

\item[C3]
The Fisher information matrix
$ J_{\btheta|i,j}
:=\int_{{\cal Y}} p_{\btheta}(y) l_{\btheta|i}(y) l_{\btheta|j}(y) \mu(dy)$ can be defined.
Also, the map $\btheta \mapsto  J_{\btheta|i,j}$ is continuous for $i,j=1, \ldots, d$.

\item[C4]
For any $s >0$, 
the map $(\btheta,\btheta')\mapsto D_{1+s}(P_{\btheta}\| P_{\btheta'})$
is continuous.
Also, when $s>0$ is fixed, the convergence
$2\lim_{\epsilon\to 0}\frac{D_{1+s}(P_{\btheta}\| P_{\btheta+\bxi \epsilon})}{\epsilon^2}
=\sum_{i,j}\bxi_j \bxi_i J_{\btheta|i,j}$
holds 
and is uniform for $\btheta$ and $\bxi \in \{\bxi \in \RR^d|  \|\bxi\|=1\}$.
\end{description}

We say that a family of distributions satisfies Condition C when all of the above conditions hold.
For example, when an exponential family has a compact parameter space,
the likelihood ratio derivative $l_{\btheta|i}(y)$ is given as $g_{j}(y)-E_{P_{\btheta}}[g_{j}(Y)]$ so
that it satisfies Condition C.
Under this condition, to prove several required properties, we can effectively employ the compactness of the parameter space $\bTheta$.
For example,
when a probability density function $P$ is differentiable on ${\cal Y}=\RR$,
the support $P$ is ${\cal Y}=\RR$,
and the integral satisfies the condition
\begin{align}
\int_{-\infty}^{\infty} (\frac{d p(y)}{d y}/p(y) )^2 p(y) d y < \infty \Label{e5-21}
\end{align}
the family of distribution $p_{\btheta}(y):= p(y-\btheta)$
satisfies Condition C \cite{Akahira}.

Next, we extend Condition C to a family of channels
$\bW_{\btheta}=(W_{\btheta,x})$ from a discrete alphabet ${\cal X}$ to a general alphabet ${\cal Y}$
with the parameter set $\bTheta$. 
\begin{description}
\item[D1]
The parametric space $\bTheta$ is compact.

\item[D2]
The likelihood ratio derivative 
$l_{\btheta,x|i}(y):= \frac{\partial}{\partial \theta^i}\log w_{\btheta,x}(y)$ 
exists for $i=1, \ldots,k$ and $x \in {\cal X}$.

\item[D3]
The Fisher information matrix
$ J_{\btheta,x|i,j}
:=\int_{{\cal Y}} w_{\btheta,x}(y) l_{\btheta,x|i}(y) l_{\btheta,x|j}(y) \mu(dy)$ can be defined for
$x \in {\cal X}$.
Also, the map $\btheta \mapsto  J_{\btheta,x|i,j}$ is continuous for $i,j=1, \ldots, d$ and $x \in {\cal X}$.

\item[D4]
For any $s >0$ and $x \in {\cal X}$, 
the map $(\btheta,\btheta')\mapsto D_{1+s}(W_{\btheta,x}\| W_{\btheta,x'})$
is continuous.
Also, when $s>0$ and $x \in {\cal X}$ are fixed, the convergence
$2\lim_{\epsilon\to 0}\frac{D_{1+s}(W_{\btheta,x}\| W_{\btheta,x+\bxi \epsilon})}{\epsilon^2}
=\sum_{i,j}\bxi_j \bxi_i J_{\btheta,x|i,j}$
holds 
and is uniform for $\btheta$ and $\bxi \in \{\bxi \in \RR^{k}|  \|\bxi\|=1\}$.
\end{description}
Hence, we say that a family of channels satisfies Condition D
when all of the above conditions hold.
For example, when an exponential family of channels has a compact parameter space,
it satisfies Condition D.
For a given distribution $P$ on ${\cal Y}=\RR$ for the additive noise,
we consider the channel $ \bW_{\btheta}(y|i)=P(y- \theta_i)$ with a compact space $\bTheta$.
Even when the additive noise in a wireless communication is not subject to Gaussian distribution, 
when the distribution $P$ satisfies the condition \eqref{e5-21},
the family of channels $ \{\bW_{\btheta}\}_{\btheta}$ satisfies Condition D.

However, it is better to remove the assumption of the compactness.
For this purpose, we introduce the conditions E1 and E4 instead of C1 and C4
 for a family of distributions
$\{P_{\btheta}\}_{\btheta\in \bTheta}$ on the measurable set ${\cal Y}$ with $\bTheta \subset \RR^k$.

\begin{description}
\item[E1]
There exists a sequence of compact subsets $\{\bTheta^i\}_{i=1}^{\infty}$ 
satisfying the following conditions.
(1) $\bTheta^i \subset \bTheta^{i+1}$.
(2) For any $\btheta \in \bTheta$, there exists $\bTheta^i$ such that
$\btheta \in \bTheta^i$.
\item[E4]
The uniformity of the convergence in C4 holds on any compact subspace $\bTheta^i$. 
\end{description}
We say that a family of channels satisfies Condition E
when the conditions E1 and E4 hold as well as C2 and C3.

Next, we extend Condition E to a family of channels
$\bW_{\btheta}=(W_{\btheta,x})$ from a discrete alphabet ${\cal X}$ to a general alphabet ${\cal Y}$
with the parameter set $\bTheta$. 
Instead of D1 and D4, we introduce the conditions F1 and F4.

\begin{description}
\item[F1]
For each $x \in {\cal X}$, 
there exists a sequence of compact subsets $\{\bTheta^i\}_{i=1}^{\infty}$ 
satisfying the following conditions.
(1) $\bTheta^i \subset \bTheta^{i+1}$.
(2) For any $\btheta \in \bTheta$, there exists $\bTheta^i$ such that
$\btheta \in \bTheta^i$.
\item[F4]
The uniformity of the convergence in D4 holds on any compact subspace $\bTheta^i$. 
\end{description}
We say that a family of channels satisfies Condition F
when the conditions F1 and F4 hold as well as D2 and D3.
Since Condition F does not require compactness,
any exponential family of channels satisfies Condition F.

\begin{example}\Label{Ex8}
Assume that the output set ${\cal Y}$ is the set of real numbers $\mathbb{R}$
and the input set ${\cal L}$ is a finite set $\{1, \ldots, d\}$.
We choose $d$ elements $x_1, \ldots, x_d \in \mathbb{R}$ as input signals.
We assume that the additive noise $Z$ in \eqref{E5-18} is subject to a general parametric family of distributions 
$\{P_{\btheta}\}_{\btheta}$ on ${\cal Y}=\RR$ satisfying Condition E.
Our family of channels $\{\bW_{\btheta}\}_{\btheta}$ is given as 
$ w_{\btheta,\ell}(y)=p_{\btheta}(y-ax_\ell)$. 
Since $\{P_{\btheta}\}_{\btheta}$ on ${\cal Y}=\RR$ satisfies Condition E,
this family of channels satisfies Condition F.
\end{example}

\begin{example}\Label{Ex9}
Assume that the output set ${\cal Y}$ is the set of real numbers $\mathbb{R}$
and the input set ${\cal L}$ is a finite set $\{1, \ldots, d\}$.
We choose $d$ elements $x_1, \ldots, x_d \in \mathbb{R}$ as input signals.
We assume that the additive noise $Z$ in \eqref{E5-18} is subject to a distribution $P$ satisfying 
\eqref{e5-21}.
However, we do not know the fading coefficient $\theta$.
Our family of channels $\{\bW_{\theta}\}_{\theta \in \RR}$ is given as 
$ w_{\theta,\ell}(y)=p(y-\theta x_\ell)$. 
Due to the condition \eqref{e5-21},
this family of channels satisfies Condition F.
\end{example}

\section{Main results}\Label{s2-2}
\subsection{Exponential evaluation}
In this paper, we address the $n$-fold stationary memoryless channel of $\bW_{\btheta}$, i.e.,
we focus on the channel $\bW_{\btheta}^n$ whose probability density function is defined as 
$w_{\btheta,x^n}^n(y^n):=\prod_{i=1}^n w_{\btheta,x_i}(y_i)$
with $x^n:=(x_1,\ldots,x_n)\in {\cal X}^n$
and $y^n:=(y_1,\ldots,y_n)\in {\cal Y}^n$.
When the set of messages is ${\cal M}_n:=
\{1, \ldots, M_n\}$,
the encoder is given as a map 
$E_n$ from ${\cal M}_n $ to ${\cal X}^n$,
and the decoder is given as a map 
$D_n$ from ${\cal Y}^n$ to ${\cal M}_n $.
The triple $\Phi_n:=(M_n,E_n,D_n)$
is called a code, and 
the size $M_n$ is often written as $|\Phi_n|$.
The decoding error probability is 
$e_{\btheta}(\Phi_n):= \frac{1}{M_n}\sum_{i=1}^{M_n}\sum_{j \neq i} W_{\btheta,E_n(i)}^n( D_n^{-1}(j))$.

\begin{theorem}\Label{Th1}
Given real numbers $R$ and $R_1$ with $R_1>R$,
a distribution $P$ on ${\cal X}$,
and a family of channels $\{\bW_{\btheta}\}_{\btheta}$,
we assume that 
the family of channels $\{\bW_{\btheta}\}_{\btheta}$ satisfies Condition B, D, or F,
and that $s I_{1-s}(P,\bW_{\btheta} )$ can be defined for any $s \in [0,1]$.
Then, there exists a sequence of codes $\Phi_n$ 
with the size $|\Phi_n|=e^{nR}$ 
satisfying that
\begin{align}
\lim_{n \to \infty}\frac{-1}{n} 
\log
e_{\btheta} (\Phi_n)
\ge 
\min(\max_{s \in [0,1]}(s I_{1-s}(P,\bW_{\btheta} ) -s R_1), R_1-R)
\Label{eq2} 
\end{align}
for every $\btheta \in \bTheta$.
\hfill $\square$\end{theorem}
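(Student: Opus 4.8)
The plan is to construct encoder and decoder separately. The encoder will be the purely combinatorial codebook produced by the packing lemma of the method of types, which depends only on $P$ and $R$ and not on the channel or on ${\cal Y}$, exactly as in \cite{Ha1}; the decoder will be an information-spectrum threshold decoder at level $e^{nR_1}$ whose output reference is a Bayesian mixture of output distributions. One first reduces the analysis of the fixed codebook to a random-coding computation, then splits the decoding error into a ``missed-detection'' event (the transmitted codeword fails the threshold) and a ``false-alarm'' event (a wrong codeword passes it); these two events will contribute the two quantities inside the $\min$ in \eqref{eq2}. (The hypothesis that $sI_{1-s}(P,W_\theta)$ is defined for $s\in[0,1]$ is what makes both this right-hand side and the reference distributions below meaningful.)

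For the encoder I would invoke the packing lemma of \cite{CK} in the form used in \cite{Ha1}: for input type $P$ and rate $R$ there is a codebook $\{E_n(1),\ldots,E_n(e^{nR})\}\subset{\cal X}^n$, all of type converging to $P$, whose average decoding error under \emph{any} channel is at most an $e^{o(n)}$ factor times the corresponding random-coding average error under $P^n$. Because this step is taken verbatim from \cite{Ha1} and never refers to the output alphabet, it remains valid for a general measurable ${\cal Y}$, and it reduces everything to estimating the random-coding error.

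For the decoder, the method-of-types universal output distribution of \cite{Ha1} is not available on a continuous ${\cal Y}$, so I would replace it by a Bayesian mixture over both the channel parameter and the R\'enyi tilt: $\int_0^1\!\!\int_\Theta (Q^*_{\theta',s'})^{n}\,\mu(d\theta',ds')$, where $Q^*_{\theta',s'}$ is the Gallager-type output distribution attaining $s'I_{1-s'}(P,W_{\theta'})$ in \eqref{eq20}, i.e.\ $Q^*_{\theta',s'}(y)\propto(\sum_x P(x)W_{\theta',x}(y)^{1-s'})^{1/(1-s')}$, with the $s'=0$ slice equal to $W_{\theta'}\cdot P$; the numerator of the threshold test is the companion mixture $\int_\Theta W^n_{\theta',E_n(i)}\,\pi(d\theta')$. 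The function of the $\alpha$-R\'enyi version of Clarke--Barron's formula, proved in Section \ref{s4}, is quantitative: it lets one discretise such a mixture into $\mathrm{poly}(n)$ pieces, each an $n$-fold product distribution, so that on a high-probability set the numerator mixture is $\ge e^{-o(n)}W^n_{\theta,E_n(i)}$ and, piecewise, the denominator mixture is an $e^{o(n)}$-accurate proxy for $(Q^*_{\theta,s})^n$ for each $s$. Under Conditions A or B these estimates are uniform by compactness; under Condition C one localises to the compact exhausting sets $\Theta_i$, the estimates being needed only pointwise in $\theta$.

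The error analysis is then in two parts. For the false-alarm term I would use the union bound over competitors; for a fixed competitor, averaging $Y^n$ over $W^n_{\theta,E_n(i)}$ and over the random transmitted codeword turns the input average into $(W_\theta\cdot P)^n$, and Markov's inequality $\mathbf{1}[Z>t]\le Z/t$ applied to the competitor's numerator mixture (averaged over its own random codeword) collapses the bound, because that average is --- up to a constant --- the $s'=0$ slice of the denominator mixture and hence dominated by it; this gives at most $e^{-nR_1}$ per competitor, so $e^{n(R-R_1)}$ overall, i.e.\ the exponent $R_1-R$. For the missed-detection term, after the reductions above it suffices to bound $\Pr[W^n_{\theta,E_n(i)}(Y^n)/\mathrm{denom}(Y^n)\le e^{n(R_1+o(1))}]$; I would apply the Chernoff bound with a free parameter $t\in[0,1]$, use subadditivity of $x\mapsto x^t$ across the $\mathrm{poly}(n)$ pieces, and average over the random input, which produces the factor $\sum_x P(x)e^{-tD_{1-t}(W_{\theta,x}\|Q)}$ per piece. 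Since the mixture contains an $e^{o(n)}$-proxy for $Q^*_{\theta,t}$ and $\sup_Q\big(-\log\sum_x P(x)e^{-tD_{1-t}(W_{\theta,x}\|Q)}\big)=tI_{1-t}(P,W_\theta)$, each piece contributes at most $\mathrm{poly}(n)\,e^{-n(tI_{1-t}(P,W_\theta)-tR_1)}$; maximising over $t$ gives $\max_{t\in[0,1]}(tI_{1-t}(P,W_\theta)-tR_1)$, and taking the minimum with $R_1-R$ yields \eqref{eq2}. The step I expect to be the main obstacle is making the R\'enyi--Clarke--Barron discretisation and the local product approximations sufficiently uniform --- especially near $s=0,1$, where $Q^*_{\theta,s}$ or the Clarke--Barron constant can degenerate, and, under Condition C, against the contribution to the mixture from far out in the noncompact $\Theta$ --- and this is exactly what the regularity in Conditions A--C (continuity of $D_{1+s}(W_{\theta,x}\|W_{\theta',x})$, existence and continuity of the Fisher-type quantities $J_{\theta,x}$, and compactness or exhaustion by compacts) is designed to supply.
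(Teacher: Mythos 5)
Your proposal captures the structure of the paper's proof --- packing-lemma encoder, information-spectrum threshold decoder with Bayesian mixture reference distributions, R\'enyi Clarke--Barron to control the mixture's fidelity, and a missed-detection/false-alarm split --- and the false-alarm analysis (change of measure from $(W_\theta\cdot P)^n$ to the denominator mixture via Clarke--Barron, then a threshold/Markov step giving $e^{-nR_1}$ per competitor) is essentially the paper's argument. The genuine point of departure is the missed-detection term. You enrich the denominator mixture to include the Gallager output distributions $Q^*_{\theta',s'}$ over a second integration in $s'$, discretize it into $\mathrm{poly}(n)$ product pieces, apply the Chernoff bound with parameter $t$, distribute the $t$-th power across the pieces by subadditivity, and extract $tI_{1-t}$ from the piece near $Q^*_{\theta,t}$. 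This works (each piece with reference distribution $Q_i$ contributes a factor $\bigl(\sum_x P(x)e^{-tD_{1-t}(W_{\theta,x}\|Q_i)}\bigr)^n\le e^{-ntI_{1-t}(P,W_\theta)}$, because $tI_{1-t}=\inf_Q\bigl(-\log\sum_x P(x)e^{-tD_{1-t}(W_{\theta,x}\|Q)}\bigr)$, an \emph{infimum}, not the supremum as you wrote --- your displayed identity has the wrong extremum but your stated conclusion ``each piece contributes at most\ldots'' is exactly what the $\inf$ delivers). However, it is strictly more machinery than necessary: precisely because $sI_{1-s}$ is the infimum over $Q$, \emph{any} probability measure on ${\cal Y}^n$ in the denominator already yields the exponent $sI_{1-s}-sR_1$, so there is nothing to gain by embedding the $Q^*_{\theta,s}$ family. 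The paper exploits this by keeping the denominator $Q_P^{(n)}$ as a plain Bayesian mixture of $W_{\theta'}\cdot P$ output laws and, after type-averaging to $P^n$, applying H\"older with exponents $\frac{1}{1-s}$ and $\frac{1}{s}$; the factor $\bigl(\int (Q_P^{(n)})^{s/s}\bigr)^s$ equals $1$ and disappears, and the other factor is exactly the Sibson/Gallager form $e^{-nsI_{1-s}}$ of \eqref{eq20}. This eliminates the $(\theta',s')$ discretization, removes the degeneracy you flag near $s\to 1$ where $Q^*_{\theta,s}$ and its $\frac{1}{1-s}$ exponent blow up, and also sidesteps a small technical wrinkle in your false-alarm step: if the $s'$-marginal of your mixing measure is continuous, the ``$s'=0$ slice'' that you need to dominate the competitor's codeword-average has measure zero, so you would need an atom at $s'=0$ or a separate limiting argument. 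With either construction the final exponent is $\max_{t\in[0,1]}(tI_{1-t}(P,W_\theta)-tR_1)\wedge(R_1-R)$, so your route gives the theorem; the paper's H\"older step is the cleaner one and is what the regularity in Conditions A--C actually needs to support.
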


The proof of Theorem \ref{Th1} is given as the combination of Sections \ref{s3}, \ref{s5}, and  \ref{s6}.
Section \ref{s3} gives the encoder and
Section \ref{s5} gives the decoder and a part of the evaluation of the decoding error probability.
Section \ref{s6} evaluates the decoding error probability by using several formulas given in 
Sections \ref{s3} and \ref{s5}.
While $R$ expresses the coding rate, our code needs another rate $R_1>R$, which decides the form of the universal decoder.
This is because our decoder employs the likelihood ratio test of a certain simple hypothesis testing,
and the rate $R_1$ describes the likelihood ratio.
Hence, it is crucial to decide the rate $R_1$ for the construction of our universal code.

For simplicity, we consider the case when we can identify the parameter $\btheta$, we can choose $\bTheta_0$ to be $\{\btheta\}$.
In this case, the best choice of $R_1$ is the real number $R_1$ satisfying 
$\max_{s \in [0,1]}(s I_{1-s}(P,\bW_{\btheta} ) -s R_1)= R_1-R$,
which is given as 
$R_1=R+\max_{s \in [0,1]}\frac{1}{1+s}(s I_{1-s}(P,\bW_{\btheta} ) -s R)$,
which can be seen as a special case of Lemma \ref{L11-27-1} later.
Hence, our lower bound of the exponent is 
$\max_P  \max_{s\in [0,1]}\frac{1}{1+s}(s I_{1-s}(P,\bW_{\btheta} ) -s R) $.
This value is strictly smaller than Gallager's exponent
$\max_{s \in [0,1]}\frac{1}{1-s}(s I_{1-s}(P,\bW_{\btheta} ) -s R)$\cite{Gal}.

Now, we consider the general case when we assume that the true channel parameter $\btheta$ belongs to a subset $\bTheta_0 \subset \bTheta$.
Then, we propose the following method to choose the rate $R_1$ and
the distribution $P$ to identify our code for a given transmission rate 
$R< \sup_P \inf_{\btheta \in \bTheta_0} I(P, \bW_{\btheta})$.

\begin{description}
\item[(M1)]
When $\argmax_P \inf_{\btheta \in \bTheta_0} I(P, \bW_{\btheta})$ 
is a non-empty set, we fix an element $P_1$ in this set. 
Otherwise, we choose a distribution $P_1$ such that
$\inf_{\btheta \in \bTheta_0} I(P_1, \bW_{\btheta})$ is sufficiently close to 
$\sup_P \inf_{\btheta \in \bTheta_0} I(P, \bW_{\btheta}))$.
Next, we choose $R_1 \in (R, \inf_{\btheta \in \bTheta_0} I(P_1, \bW_{\btheta}))$.
Then, the exponential decreasing rate is greater than
$\min(\max_{s\in [0,1]}(s I_{1-s}(P,\bW_{\btheta} ) -s R_1) ,R_1-R)>0$
when the true parameter is $\btheta \in \bTheta_0$.
\end{description}

Now, we consider another method to improve the bound 
$\min(\max_{s\in [0,1]}(s I_{1-s}(P,\bW_{\btheta} ) -s R_1) ,R_1-R)$.
For this purpose, we prepare the following lemma whose proof of Lemma \ref{L11-27-1} is given in Appendix \ref{A1}.

\begin{lemma}\Label{L11-27-1}
When the function $s \mapsto s I_{1-s}(P,\bW_{\btheta} )$
is a $C^1$ function for any $\btheta \in \bTheta$, we have
\begin{align}
& \max_{R_1} \inf_{\btheta\in \bTheta_0}\min (\max_{s \in [0,1]}
(s I_{1-s}(P,\bW_{\btheta} ) -s R_1),R_1-R) \nonumber \\
&=
\inf_{\btheta\in \bTheta_0}\max_{R_1} \min 
(\max_{s \in [0,1]}( s I_{1-s}(P,\bW_{\btheta} ) -s R_1),R_1-R)\nonumber  \\
&=\inf_{\btheta\in \bTheta_0} 
\max_{s\in [0,1]}\frac{1}{1+s}(s I_{1-s}(P,\bW_{\btheta} ) -s R) .\Label{eq11-27-1}
\end{align}
The maximum value $\max_{R_1} \min (\max_{s \in [0,1]}
(s I_{1-s}(P,\bW_{\btheta} ) -s R_1),R_1-R)$ is attained when 
$R_1= R+\max_{s\in [0,1]}\frac{1}{1+s}(s I_{1-s}(P,\bW_{\btheta} ) -s R)$.
In particular, the maximum value\par 
\noindent$\max_{R_1} \inf_{\btheta\in \bTheta_0}\min (\max_{s \in [0,1]}(s I_{1-s}(P,\bW_{\btheta} ) -s R_1),R_1-R)$ is attained when 
$R_1= R+$\par 
\noindent$\inf_{\btheta\in \bTheta_0}\max_{s\in [0,1]}\frac{1}{1+s}(s I_{1-s}(P,\bW_{\btheta} ) -s R)$.
\hfill $\square$\end{lemma}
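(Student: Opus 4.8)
The plan is to prove Lemma \ref{L11-27-1} by analyzing the inner optimization over $R_1$ for a fixed $\theta$, and then handling the outer $\inf_\theta$ and $\max_{R_1}$ via a minimax argument. First I would fix $\theta \in \Theta_0$ and examine the function
\begin{align}
f(R_1) := \min\Big(\max_{s \in [0,1]}\big(s I_{1-s}(P,W_\theta) - s R_1\big),\; R_1 - R\Big).
\end{align}
The second argument $R_1 - R$ is affine and strictly increasing in $R_1$. The first argument, $g(R_1) := \max_{s \in [0,1]}(s I_{1-s}(P,W_\theta) - s R_1)$, is a supremum of affine functions of $R_1$ with slopes $-s \le 0$, hence convex and nonincreasing in $R_1$. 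Therefore $f$ is the minimum of an increasing affine function and a nonincreasing convex function; it is unimodal, and its maximum over $R_1$ is attained at the crossing point where $g(R_1) = R_1 - R$ (using that $g(R_1) \to 0$ as $R_1 \to \infty$ while $R_1 - R \to \infty$, and that for small $R_1$ the term $g(R_1)$ dominates provided the relevant quantities are positive; the degenerate cases are easy). So the key equation to solve is $g(R_1^\star) = R_1^\star - R$, and the common value equals $R_1^\star - R$.

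Next I would identify $R_1^\star - R$ explicitly. Write $R_1 = R + t$; then $g(R+t) = \max_{s \in [0,1]}(s I_{1-s}(P,W_\theta) - sR - st)$ and the crossing condition becomes $\max_{s \in [0,1]}(s I_{1-s}(P,W_\theta) - sR - st) = t$, i.e. $\max_{s \in [0,1]}(s I_{1-s}(P,W_\theta) - sR) = t + \max_{s\in[0,1]} st = $ ... more carefully, I want the value $t^\star$ with $t^\star = \max_{s\in[0,1]}(sI_{1-s}(P,W_\theta)-sR-st^\star)$. Rearranging the bracket, $s I_{1-s}(P,W_\theta) - sR - st^\star = sI_{1-s}(P,W_\theta) - sR - s t^\star$; setting the max equal to $t^\star$ gives, for the optimal $s$, $(1+s)t^\star = sI_{1-s}(P,W_\theta) - sR$, hence $t^\star = \frac{s(I_{1-s}(P,W_\theta)-R)}{1+s}$. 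Taking the $\max$ over $s$ (which the $C^1$ assumption guarantees is attained in the interior or at an endpoint, and makes the two optimizations over $s$ consistent) yields $t^\star = \max_{s\in[0,1]} \frac{1}{1+s}(sI_{1-s}(P,W_\theta) - sR)$. This establishes the second displayed equality and the stated location $R_1 = R + \max_{s\in[0,1]}\frac{1}{1+s}(sI_{1-s}(P,W_\theta)-sR)$ of the per-$\theta$ maximizer.

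For the first equality — swapping $\max_{R_1}$ and $\inf_{\theta \in \Theta_0}$ — I would argue by the standard sandwich: $\max_{R_1}\inf_\theta \le \inf_\theta\max_{R_1}$ holds trivially, so it suffices to prove $\ge$. For this I would plug the uniform (over $\theta$) choice $R_1 = R + \inf_{\theta\in\Theta_0}\max_{s\in[0,1]}\frac{1}{1+s}(sI_{1-s}(P,W_\theta)-sR)$ into the left side. With this $R_1$, for every $\theta$ we have $R_1 - R = \inf_{\theta'}\max_s(\cdots) \le \max_s(\cdots)|_\theta$, so on the increasing branch $R_1$ is at or below the per-$\theta$ crossing point $R_1^\star(\theta)$, whence $f_\theta(R_1) = R_1 - R$ for every $\theta$ (here I use monotonicity of the two branches established in the first paragraph). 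Taking $\inf_\theta$ gives exactly $R_1 - R = \inf_\theta\max_s(\cdots)$, matching the claimed value, and this proves the $\ge$ direction and hence all three quantities coincide; it also gives the stated global maximizer.

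The main obstacle I anticipate is the careful treatment of the crossing-point argument when the relevant quantities can be zero or when the maximizing $s$ sits at an endpoint of $[0,1]$: I must check that $g(R_1) \ge R_1 - R$ holds for $R_1$ slightly above $R$ (so that a genuine crossing exists and the maximum is not at the left endpoint of the domain of $R_1$), which is where the $C^1$ hypothesis and positivity of $I(P,W_\theta)$-type quantities enter, and I must confirm that the two appearances of "$\max$ over $s\in[0,1]$" — one inside $g$ and one in the final formula — are genuinely achieved so that the algebraic rearrangement $(1+s)t^\star = s(I_{1-s}-R)$ is valid at a common optimal $s$. Everything else is elementary convexity/monotonicity bookkeeping.
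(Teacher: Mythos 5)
Your proposal is correct and follows the same overall architecture as the paper's proof: fix $\theta$, observe that $\max_s(sI_{1-s}(P,W_\theta)-sR_1)$ is nonincreasing in $R_1$ while $R_1-R$ is increasing so the inner $\max_{R_1}$ is attained at the crossing point, and then prove the minimax exchange by plugging in the single choice $R_1=R+\inf_\theta\max_s\frac{1}{1+s}(sI_{1-s}(P,W_\theta)-sR)$ and using monotonicity to show the minimum is $R_1-R$ uniformly in $\theta$. The one place you genuinely diverge from the paper is in identifying the crossing value $t^\star$ with $\max_s\frac{1}{1+s}(sI_{1-s}-sR)$: the paper proceeds by matching first-order stationarity conditions for the two $s$-optimizations (which is where its $C^1$ hypothesis on $s\mapsto sI_{1-s}$ is actually used, and where it must separately handle the boundary cases $s^*\in\{0,1\}$), whereas your sketch points toward a direct algebraic argument. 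Your sketch as written is slightly loose (the optimal $s$ for the fixed-point equation need not a priori be the optimal $s$ for the final formula), but it can be made rigorous by a two-sided sandwich requiring no differentiation: for \emph{every} $s\in[0,1]$ the inequality $t^\star\ge sI_{1-s}-sR-st^\star$ rearranges to $t^\star\ge\frac{1}{1+s}(sI_{1-s}-sR)$, giving $t^\star\ge\max_s\frac{1}{1+s}(\cdots)$, while at the maximizing $s^*$ in the fixed-point equation the inequality becomes an equality, giving $t^\star\le\max_s\frac{1}{1+s}(\cdots)$. That route is cleaner than the paper's calculus argument and sidesteps the endpoint case analysis; it only needs continuity in $s$ (so the maxima are attained), not $C^1$. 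So the proposal is sound, same structure, with a more elementary (once completed) treatment of the crossing-value identity.
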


Based on Lemma \ref{L11-27-1},we propose the following method to choose the rate $R_1$ and
the distribution $P$, which improve the above method.

\begin{description}
\item[(M2)]
When $\argmax_P 
\inf_{\btheta\in \bTheta_0} 
\max_{s\in [0,1]}\frac{1}{1+s}(s I_{1-s}(P,\bW_{\btheta} ) -s R)$ is a non-empty set, 
we fix an element $P_2$ in this set. 
Otherwise, we choose a distribution $P_2$ such that
$\inf_{\btheta\in \bTheta_0} 
\max_{s\in [0,1]}\frac{1}{1+s}(s I_{1-s}(P_2,\bW_{\btheta} ) -s R)$ 
is arbitrarily close to 
$\sup_P \inf_{\btheta\in \bTheta_0} 
\max_{s\in [0,1]}\frac{1}{1+s}(s I_{1-s}(P,\bW_{\btheta} ) -s R)$.
Then, we choose $R_1^*:=
\argmax_{R_1} \inf_{\btheta\in \bTheta_0}\min (\max_{s \in [0,1]}
(s I_{1-s}(P_2,\bW_{\btheta} ) -s R_1),R_1-R)$.
\end{description}

For simplicity, we consider the case when 
$\max_P \inf_{\btheta\in \bTheta_0} 
\max_{s\in [0,1]}\frac{1}{1+s}(s I_{1-s}(P,\bW_{\btheta} ) -s R)$ exists.
The worst case of the lower bound of the exponential decreasing rate of (M2) is
$\inf_{\btheta \in \bTheta_0}\min(\max_{s\in [0,1]}(s I_{1-s}(P_2,\bW_{\btheta} ) -s R_1^*) ,R_1^*-R)$.
Due to \eqref{eq11-27-1} in Lemma \ref{L11-27-1},
the lower bound is calculated as
\begin{align}
&\inf_{\btheta \in \bTheta_0}\min(\max_{s\in [0,1]}(s I_{1-s}(P_2,\bW_{\btheta} ) -s R_1^*) ,R_1^*-R)
\nonumber \\
=&
\max_{R_1} \inf_{\btheta\in \bTheta_0}\min (\max_{s \in [0,1]}
(s I_{1-s}(P_2,\bW_{\btheta} ) -s R_1),R_1-R)
\nonumber \\
=&
\inf_{\btheta\in \bTheta_0} 
\max_{s\in [0,1]}\frac{1}{1+s}(s I_{1-s}(P_2,\bW_{\btheta} ) -s R) \nonumber \\
=&
\max_P \inf_{\btheta\in \bTheta_0} 
\max_{s\in [0,1]}\frac{1}{1+s}(s I_{1-s}(P,\bW_{\btheta} ) -s R) 
\ge 
\inf_{\btheta\in \bTheta_0} 
\max_{s\in [0,1]}\frac{1}{1+s}(s I_{1-s}(P_2,\bW_{\btheta} ) -s R) .
\end{align}
Since 
$\inf_{\btheta\in \bTheta_0} 
\max_{s\in [0,1]}\frac{1}{1+s}(s I_{1-s}(P_2,\bW_{\btheta} ) -s R) $
expresses the worst case of the lower bound of the exponential decreasing rate of (M1), 
(M2) provides a better lower bound than (M1).

\if0
In both cases, it is enough to know that the true channel belongs to 
a family of channels $\{\bW_{\btheta}\}_{\btheta \in \bTheta_0}$.
That is, we do not need to specify the true parameter $\btheta$ perfectly.
Although the method (M2) optimizes the worst case among $\bTheta_0$,
other cases also have the same lower bound of the exponent.
However, the method (M1) produces a better exponent when 
the true parameter is not the worst case. 
 \fi

\subsection{Second order evaluation}
When we need to realize the transmission rate close to the capacity,
we need to employ the second order analysis.

\begin{theorem}\Label{Th2}
Given real numbers $R_1^*$ and $R_2^*$,
a distribution $P$ on ${\cal X}$,
and a family of channels $\{\bW_{\btheta}\}_{\btheta}$,
we assume that the family of channels $\{\bW_{\btheta}\}_{\btheta}$ satisfies Condition A, B, or C,
and that $0<I(P,\bW_{\btheta} )< \infty$ and $0<V(P,\bW_{\btheta})<\infty$ for
any parameter $\btheta\in \bTheta$.
Then, there exists a sequence of codes $\Phi_n$ 
with the size $|\Phi_n|=e^{nR_1^*+\sqrt{n} R_2^*- n^{\frac{1}{4}}}$ 
satisfying the following conditions.
(1) 
When $I(P,\bW_{\btheta} ) > R_1^*$,
we have \begin{align}
\lim_{n \to \infty}
e_{\btheta} (\Phi_n)=0.\Label{11-27-6b}
\end{align}
(2) 
When $I(P,\bW_{\btheta} )=R_1^*$,
we have 
\begin{align}
\lim_{n \to \infty}
e_{\btheta} (\Phi_n)
\le
\int_{-\infty}^{\frac{R_2^*}{\sqrt{V(P,\bW_{\btheta} )}}}
 \frac{1}{\sqrt{2\pi}}
\exp( - \frac{x^2}{2}) dx
\Label{eq2-2} .
\end{align}
\hfill $\square$\end{theorem}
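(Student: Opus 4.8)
The plan is to mirror the structure used for Theorem~\ref{Th1}, but replace the exponential-type estimates with a central-limit / information-spectrum analysis. First I would fix the input distribution $P$ and build the encoder via the packing lemma of the method of types exactly as in the discrete case: since the encoder depends only on $P$ and the rate, it is automatically universal, and the code size is taken to be $|\Phi_n|=e^{nR_1^*+\sqrt n R_2^*-n^{1/4}}$. The small correction term $-n^{1/4}$ in the exponent is what buys us room to absorb the $o(\sqrt n)$ deficiency coming from the Bayesian approximation of the output distribution. As in \cite{Ha1}, one shows that the error probability of this deterministic code is bounded by the average error probability of the corresponding random code, up to a controlled loss, so it suffices to analyze a random-coding decoder that uses a \emph{universal} surrogate for the true output distribution $W_\theta\cdot P$.

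Second, the key new ingredient is the $\alpha$-R\'enyi version of the Clarke--Barron formula derived in Section~\ref{s4}. I would use it to replace the unknown product output distribution $(W_\theta\cdot P)^n$ by the Bayesian mixture $Q_n$ over $\theta\in\Theta$ (restricted to $\Theta_i$ under Condition~C), and control the discrepancy: Clarke--Barron gives that the relevant divergence between $(W_{\theta,x}\cdot P)^n$ and $Q_n$ is $\frac{k}{2}\log n + O(1)$, hence $o(\sqrt n)$. Then the decoder is the information-spectrum (threshold) decoder built from $Q_n$ in place of the true output distribution. The analysis of its error probability reduces, by the standard information-spectrum bound, to estimating a probability of the form $\Pr\big[\tfrac1n\log\frac{W^n_{\theta,x^n}(Y^n)}{Q_n(Y^n)} \le \text{(rate)}\big]$ plus a term controlled by the code size.

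Third, I would invoke the central limit theorem. Writing the log-likelihood ratio as a sum of i.i.d.\ terms $\log\frac{W_{\theta,x_i}(Y_i)}{W_\theta\cdot P(Y_i)}$ with mean $I(P,W_\theta)$ and variance $V(P,W_\theta)$ (here the hypotheses $0<I<\infty$, $0<V<\infty$ are exactly what make CLT applicable), the Clarke--Barron correction of order $o(\sqrt n)$ shifts the normalized sum negligibly. In case (1), when $I(P,W_\theta)>R_1^*$, the mean strictly exceeds the rate, so the probability that the normalized log-likelihood falls below the rate goes to $0$, giving \eqref{11-27-6b}. In case (2), when $I(P,W_\theta)=R_1^*$, after centering and scaling by $\sqrt n$ the threshold sits at $R_2^*/\sqrt{V(P,W_\theta)}$, and the CLT yields the Gaussian integral on the RHS of \eqref{eq2-2}; the second-order penalty $-n^{1/4}$ in $\log|\Phi_n|$ vanishes after division by $\sqrt n$, so it does not affect the limit.

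The main obstacle I anticipate is making the universality uniform over $\theta$ while keeping all error terms $o(\sqrt n)$: under Condition~C the parameter space is only a countable union of compacts $\Theta_i$, so one must choose the Bayesian prior and the index $i=i(n)\to\infty$ slowly enough that the Clarke--Barron remainder (whose $O(1)$ constant depends on the compact set and on $J_{\theta,x}$) stays negligible compared to $\sqrt n$, yet fast enough that every fixed $\theta$ is eventually covered. Handling this, together with verifying that the continuity and convergence hypotheses in B3/C3 legitimately give the $o(\sqrt n)$ bound in the R\'enyi--Clarke--Barron estimate uniformly, is the delicate part; the CLT step itself is routine once the surrogate distribution is in place.
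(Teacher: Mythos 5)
Your proposal follows the same architecture as the paper's proof: packing-lemma encoder, universal output surrogate via the R\'enyi--Clarke--Barron bound, information-spectrum threshold decoder built from the Bayesian mixture, and CLT analysis of the log-likelihood ratio to get the Gaussian integral in case (2) and the vanishing limit in case (1). Your treatment of Condition C (letting a slowly growing compact absorb the non-compactness) is one workable route; the paper instead uses a fixed weighted mixture $Q_F^n=\sum_i\frac{6}{\pi^2 i^2}\sum_{\theta\in\Theta_{[n,i]}}\frac{1}{|\Theta_{[n,i]}|}P_\theta^n$ so that no $i(n)$ schedule is needed and the penalty for a fixed $\theta$ is $O(1)$.

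One point to correct before executing the plan: the $-n^{1/4}$ in $\log|\Phi_n|$ does \emph{not} provide the room that absorbs the Clarke--Barron deficiency. That $-n^{1/4}$ is an artifact of applying the packing lemma with slack $n^{-3/4}$, and it is exactly cancelled by the $e^{+n^{1/4}}$ approximation factor coming from \eqref{8} when the deterministic codeword average is replaced by the type distribution. The genuine room comes from a decoder-threshold offset: the paper sets the decision threshold at $R_1=R_1^*+R_2^*/\sqrt n+n^{-2/3}$, strictly above the code rate $R=R_1^*+R_2^*/\sqrt n$, so that after the H\"older step \eqref{eq22} the false-alarm term is $\frac{1}{c_{n,P}}e^{-n^{1/3}}e^{(1-s)D_{1/s}(W_\theta^n\cdot P_{T_P}\|Q_P^{(n)})}\lesssim n^{k/2}e^{-n^{1/3}}\to 0$. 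Without this $n^{-2/3}$ offset (i.e.\ with $R_1=R$) you would be left with only the $n^{k/2}$ factor, which does not vanish; with it, since $n^{1/3}/\sqrt n\to 0$ the offset disappears in the CLT limit and does not affect \eqref{eq2-2}. If you carry out your plan you will be forced into exactly this choice, so it is an omission of detail rather than a flaw in the strategy, but it is worth flagging because your description attributes the $o(\sqrt n)$ absorption to the wrong term.
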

The proof of Theorem \ref{Th2} is given in Section \ref{s8}.
Therefore, in order that the upper bound is smaller than $\frac{1}{2}$, i.e., 
$\int_{-\infty}^{\frac{R_2^*}{\sqrt{V(P,\bW_{\btheta} )}}}
 \frac{1}{\sqrt{2\pi}}
\exp( - \frac{x^2}{2}) dx < \frac{1}{2}$, 
$R_2^*$ needs to be a negative value.
For example, when $\bTheta=\{\theta\}$,
we choose $P$ as an element of
$\{P' | I(P,\bW_{\btheta})=\max_P I(P,\bW_{\btheta}),
V(P,\bW_{\btheta})=V_-\}$
with 
$V_-:=\min_{P } 
\{V(P,\bW_{\btheta})|I(P,\bW_{\btheta})=\max_P I(P,\bW_{\btheta})\}.$
Then, we can realize the minimum error 
$ 
\int_{-\infty}^{\frac{R_2^*}{\sqrt{V_-}}}
 \frac{1}{\sqrt{2\pi}}
\exp( - \frac{x^2}{2}) dx$
with the coding length $n \max I(P,\bW_{\btheta})+\sqrt{n}R_2^* $ when $R_2^*\le 0$ 
\cite{Strassen,Hsec,Pol}.

However, most of channels $W_{\btheta}$ do not satisfy the condition $I(P,\bW_{\btheta} )=R_1^*$
for a specific first order rate $R_1^*$.
That is, Theorem \ref{Th2} gives the asymptotic average error probability $0$ or $1$
as the coding result for most of channels $W_{\btheta}$.
This argument does not reflect the real situation properly because 
many channels have their average error probability between $0$ and $1$, i.e., 
$0$ nor $1$ is not an actual value of average error probability.
To overcome this problem, 
we introduce the second order parameterization $\btheta_1 +\frac{1}{\sqrt{n}}\btheta_2\in \bTheta$.
This parametrization is applicable when the unknown parameter belongs to the neighborhood of $\btheta_1$.
Given a parameter $\btheta$, we choose 
the parameter $\btheta_2:= (\btheta-\btheta_1)\sqrt{n}$.
This parametrization is very conventional in statistics.
For example, in statistical hypothesis testing, 
when we know that the true parameter $\btheta$ belongs to the neighborhood of $\btheta_1$,
using the new parameter $\btheta_2$,
we approximate the distribution family by the Gaussian distribution family \cite{Lehmann}.
In this sense, the $\chi^2$-test gives the asymptotic optimal performance. 
As another example, this kind of parametrization is employed to discuss local minimax theorem \cite[8.11 Theorem]{Vaart}.

The range of the neighborhood of this method depends on $n$ and $\btheta_1$.
In the realistic case, we have some error for our guess of channel and the number $n$ is finite.
When the range of the error of our prior estimation of the channel
is included in the neighborhood,
this method effectively works. 
When our estimate of channel has enough precision as our prior knowledge,
we can expect such a situation.
In this scenario, we may consider the following situation.
We fix the parameter $\btheta_1$ as a basic (or standard) property of the channel.
We choose the next parameter $\btheta_2$ as a fluctuation
depending on the daily changes or the individual specificity of the channel.
Here, $n$ is chosen depending on the our calculation ability of encoding and decoding.
Hence, it is fixed priorly to the choice of $\btheta_2$.
Then, Theorem \ref{Th2} is refined as follows.

\begin{theorem}\Label{Th3}
Given real numbers $R_1^*$ and $R_2^*$,
a distribution $P$ on ${\cal X}$,
and a family of channels $\{W_{\btheta_1 +\frac{1}{\sqrt{n}}\btheta_2}\}$
with the second order parameterization,
we assume the following conditions in addition to the assumptions of Theorem \ref{Th3}.
The function $\btheta \mapsto I(P,\bW_{\btheta} )$ is a $C^1$ function on $\bTheta$,
and the function $\btheta \mapsto V(P,\bW_{\btheta} )$ is a continuous function on $\bTheta$.
Under these conditions, there exists a sequence of codes $\Phi_n$ 
with the size $|\Phi_n|=e^{nR_1^*+\sqrt{n} R_2^*- n^{\frac{1}{4}}}$ 
satisfying the following conditions.
(1) 
When $I(P,\bW_{\btheta_1} ) > R_1^*$,
we have \begin{align}
\lim_{n \to \infty}
e_{\btheta_1 +\frac{1}{\sqrt{n}}\btheta_2} (\Phi_n)=0.\Label{11-27-6C}
\end{align}
(2) When $I(P,W_{\btheta_1} )=R_1^*$,
we have 
\begin{align}
\lim_{n \to \infty}
e_{\btheta_1 +\frac{1}{\sqrt{n}}\btheta_2} (\Phi_n)
\le
\int_{-\infty}^{\frac{
R_2^*-f(\btheta_2)}{\sqrt{V(P,\bW_{\btheta} )}}}
 \frac{1}{\sqrt{2\pi}}
\exp( - \frac{x^2}{2}) dx
\Label{eq2-2C} ,
\end{align}
where
$f(\btheta_2):=
\sum_i\frac{\partial I(P,\bW_{\btheta} )}{\partial \theta^i}|_{\btheta=\btheta_1}
\theta_2^i$.
Further, 
when the convergence \eqref{12-25-12x} is uniform with respect to $\btheta_2$ in any compact subset,
there exists an upper bound 
$\bar{e}_{\btheta_1 +\frac{1}{\sqrt{n}}\btheta_2} (\Phi_n)$
of 
$e_{\btheta_1 +\frac{1}{\sqrt{n}}\btheta_2} (\Phi_n)$
such that
the bound $\bar{e}_{\btheta_1 +\frac{1}{\sqrt{n}}\btheta_2} (\Phi_n)$
converges to the RHS of \eqref{eq2-2C} uniformly for $\btheta_2$ in any compact subset.
\hfill $\square$\end{theorem}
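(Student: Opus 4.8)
The plan is to deduce Theorem~\ref{Th3} from Theorem~\ref{Th2} by exploiting the second‑order parametrization and a careful tracking of how the relevant information quantities move when the channel parameter is shifted by $\frac{1}{\sqrt n}\theta_2$. First I would note that the code $\Phi_n$ furnished by Theorem~\ref{Th2} is constructed from the input distribution $P$, the rate pair $(R_1^*,R_2^*)$, and the approximating (Bayesian) output distributions, and hence does \emph{not} depend on which $\theta$ in the family is the true channel; so the same sequence of codes may be used here. The task then is purely analytic: to re‑evaluate the asymptotic error of this fixed code under the moving channel $W_{\theta_1+\frac{1}{\sqrt n}\theta_2}$ rather than under a fixed $W_\theta$.

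Next I would carry out the following steps. (1) Recall from the proof of Theorem~\ref{Th2} that the decoding error of $\Phi_n$ is controlled by an information‑spectrum quantity of the form $\Pr\bigl[\frac1n\sum_{i=1}^n \log\frac{W_{\theta,x_i}(Y_i)}{(\text{mixture})(Y_i)} \le \frac1n\log M_n + o(1)\bigr]$ together with the Bayesian‑mixture/R\'enyi‑approximation bounds of Section~\ref{s4}, and that under a \emph{fixed} $\theta$ the central limit theorem turns this into the Gaussian integral with mean $I(P,W_\theta)$ and variance $V(P,W_\theta)$. (2) For the moving channel I would apply a triangular‑array CLT (Lindeberg or Lyapunov form): the per‑letter log‑likelihood‑ratio random variables now depend on $n$ through $\theta^{(n)}:=\theta_1+\frac1{\sqrt n}\theta_2$. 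Their mean is $n\, I(P,W_{\theta^{(n)}})$ and, by the assumed $C^1$‑ness of $\theta\mapsto I(P,W_\theta)$, a first‑order Taylor expansion gives $n\,I(P,W_{\theta^{(n)}}) = n\,I(P,W_{\theta_1}) + \sqrt n\, f(\theta_2) + o(\sqrt n)$, with $f(\theta_2)=\sum_i \frac{\partial I(P,W_\theta)}{\partial\theta^i}\big|_{\theta_1}\theta_2^i$ exactly as in the statement. The variance is $n\,V(P,W_{\theta^{(n)}})\to n\,V(P,W_{\theta_1})$ using continuity of $\theta\mapsto V(P,W_\theta)$; one checks a Lyapunov/Lindeberg condition holds uniformly near $\theta_1$ because all moments involved are continuous on the compact neighbourhood. (3) Substituting $\frac1n\log M_n = R_1^* + \frac{R_2^*}{\sqrt n} - n^{-3/4}$ and centering/normalizing, the threshold becomes, in the case $I(P,W_{\theta_1})=R_1^*$, $\frac{R_2^* - f(\theta_2) + o(1)}{\sqrt{V(P,W_{\theta_1})}}$, yielding the bound~\eqref{eq2-2C}; in the case $I(P,W_{\theta_1})>R_1^*$ the centered threshold tends to $-\infty$, giving~\eqref{11-27-6C}. (4) For the final uniformity claim, I would keep the error bound in the form of an explicit function of $\theta_2$ (the Berry--Esseen‑type or monotone envelope of the spectrum quantity plus the R\'enyi‑approximation remainder from Section~\ref{s4}); since by hypothesis the convergence in \eqref{12-25-12x} is compact uniform in $\theta_2$ and the CLT/Berry--Esseen remainder is controlled by moments that are continuous on compacts, the resulting upper bound $\bar e_{\theta_1+\frac1{\sqrt n}\theta_2}(\Phi_n)$ converges to the RHS of~\eqref{eq2-2C} compact uniformly in $\theta_2$, as claimed.

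The main obstacle I anticipate is not the Taylor expansion of $I(P,W_\theta)$ — that is immediate from $C^1$‑ness — but rather justifying that the error analysis of Theorem~\ref{Th2} survives being applied to a \emph{varying} channel $W_{\theta^{(n)}}$: one must re‑examine each estimate in Section~\ref{s8} (in particular the R\'enyi‑divergence Clarke--Barron bounds of Section~\ref{s4} and the information‑spectrum converse/achievability steps) and verify that the constants and the $o(\cdot)$ terms there are uniform over a compact neighbourhood of $\theta_1$, so that evaluating them along $\theta^{(n)}$ is legitimate. This is where the added hypotheses (the compact‑uniform convergence in \eqref{12-25-12x}, continuity of $V$, and the compactness built into Conditions~B/C) do the real work. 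Once that uniformity is in place, the argument is essentially the triangular‑array CLT applied to a shifted mean, and the statement follows; the same uniformity then immediately delivers the compact‑uniform convergence of the envelope $\bar e_{\theta_1+\frac1{\sqrt n}\theta_2}(\Phi_n)$.
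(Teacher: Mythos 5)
Your proposal is correct and follows essentially the same route as the paper: reuse the code from Theorem~\ref{Th2}, Taylor-expand $I(P,W_{\theta_1+\theta_2/\sqrt n})$ to obtain the shifted mean $I(P,W_{\theta_1})+\tfrac{1}{\sqrt n}f(\theta_2)+o(\tfrac1{\sqrt n})$, apply a CLT for the triangular array of log-likelihood ratios to derive \eqref{12-25-12x}, and then re-trace the term-by-term estimates of Section~\ref{s8} checking that the R\'enyi--Clarke--Barron bound (Lemma~\ref{l2}) is compact uniform in $\theta$ so that the second term of \eqref{eq6} and the auxiliary terms in \eqref{eq7} and \eqref{12-25-10} still vanish. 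Your explicit invocation of a Lindeberg/Lyapunov triangular-array CLT makes one step slightly more precise than the paper's terse ``This property yields the relation,'' but the decomposition and the role each hypothesis plays are the same.
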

The proof of Theorem \ref{Th3} is given in Section \ref{s8}.

Now, we consider how to realize the transmission rate close to the capacity,
In this case, we need to know that the true parameter $\btheta_0$ belongs to the neighborhood of $\btheta_1$.
More precisely, given a real number $C>0$ and the parameter $\btheta_1$, 
we need to assume that the true parameter $\btheta_0$ belongs to the set
$\{ \btheta_1+\frac{1}{\sqrt{n}}\btheta_2 | \|\btheta_2\| \le C \}$, where
$ \|\btheta\|:=\sum_{i=1}^{k} (\theta^i)^2$.
To achieve the aim, we choose the distribution $P$ such that 
$I(P,W_{\btheta_1})$ equals the capacity of the channel $W_{\btheta_1}$.
Then, we choose $R_1^*$ to be the capacity, i.e., $I(P,W_{\btheta_1})$.
To keep the average error probability approximately less than $\epsilon$, 
we choose $R_2^*$ as the maximum number satisfying
$\max_{\btheta_2: \|\btheta_2\| \le C}
\int_{-\infty}^{\frac{
R_2^*-f(\btheta_2)}{\sqrt{V(P,\bW_{\btheta} )}}}
 \frac{1}{\sqrt{2\pi}}
\exp( - \frac{x^2}{2}) dx \le \epsilon$.
Then, using Theorem \ref{Th3}, we can realize a rate close to the capacity and 
the average error probability approximately less than $\epsilon$.
When we have larger order perturbation than $\frac{1}{\sqrt{n}}$,
our upper bound of $\lim_{n \to \infty}e_{\btheta} (\Phi_n)$
becomes $0$ or $1$ because such a case corresponds to the case when
$\|\btheta_2\|$ goes to infinity.
Hence, we find that $\frac{1}{\sqrt{n}}$ is the maximum order of the perturbation
to discuss the framework of the second order.

One might doubt the validity of the above approximation, however, it can be justified as follows.
When the true parameter belongs to the neighborhood of $\btheta_1$ satisfying 
$I(P, \bW_{\btheta_1} )=R_1^*$,
the above theorem gives the approximation of average error probability.
Theorem \ref{Th3} reflects such a real situation.
Indeed, the convergence of the upper bound $\bar{e}_{\btheta} (\Phi_n)$ is not uniform for $\btheta$
in the neighborhood of such a parameter $\btheta_1$
because the limit is discontinuous at $\btheta_1$.
However, 
the convergence of $e_{\btheta_1 +\frac{1}{\sqrt{n}}\btheta_2} (\Phi_n)$
is uniform for $\btheta_2$ in any compact subset.
The reason for the importance of the second order asymptotics is that the limiting error probability can be used for the approximation of the true average error probability
because the convergence is uniform for the second order rate in any compact subset \cite{Haya}.
Due to the same reason, 
the RHS of \eqref{eq2-2C}
can be used for the approximation of our upper bound of the true average error probability.

\section{Method of types and universal encoder}\Label{s3}
In this section, we define our universal encoder for a given
distribution $P$ on ${\cal X}$ and a real positive number $ R <H(P)$
by the same way as \cite{Ha1}.
Although the contents of this section is the same as a part of \cite{Ha1},
since the paper \cite{Ha1} is written with quantum terminology,
we repeat the same contents as a part of \cite{Ha1}
for readers' convenience.

The key point of this section is to provide a code to satisfy the following property by using the method of types.
In information theory, we usually employ the random coding method.
However, to construct universal coding, we cannot employ this method because 
a code whose decoding error probability is less than the average might depend on the true channel.
Hence, we need to choose a deterministic code whose decoding error probability is less than the average.
To resolve this problem, we employ a deterministic code whose decoding error probability can be upper bounded 
by a polynomial times of the average of the decoding error probability under the random coding.
To realize this idea, we employ the method of types.

First, we prepare notations for the method of types.
Given an element ${x}^n\in {\cal X}^n$, 
we define the integer $n_x:=| \{ i| x_i=x\} |$ for $x\in {\cal X}$
and the empirical distribution 
$T_Y(x^n):= (\frac{n_1}{n},\ldots, \frac{n_d}{n})$, which is called a type.
The set of types is denoted by $T_n({\cal X})$.
For ${P} \in T_n({\cal X})$, a subset of ${\cal X}^n$ is defined by:
\begin{align*}
T_{P}:= \{{x}^n \in {\cal X}^n| \hbox{The empirical distribution of } {x}^n
\hbox{ is equal to }P\}.
\end{align*}
Hence, we define the distribution
\begin{align}
P_{T_P}(x^n):=
\left\{
\begin{array}{cc}
\frac{1}{|T_P|} & {x}^n \in T_P \\
0 & {x}^n \notin T_P.
\end{array}
\right.
\end{align}
Then, we define the constant $c_{n,P}$ by 
\begin{align}
\frac{1}{c_{n,P}} P_{T_P}({x'}^n) =  P^{n}({x'}^n)
=e^{\sum_{i=1}^d n_i \log P(i) } =e^{-nH(P)} 
\Label{eq3-2}.
\end{align}
for ${x'}^n \in T_{P}$.
So, the constant $c_{n,P}$ is bounded as
\begin{align}
c_{n,P} \le |T_n({\cal X})| 
\Label{eq3}.
\end{align}

Further, the sequence of types ${\bV}=({v}_1, \ldots, {v}_d)\in 
T_{n_1}({\cal X})\times \cdots \times T_{n_d}({\cal X})$ 
is called a conditional type for ${x}^n$
when the type of $x^n$ is $(\frac{n_1}{n}, \ldots, \frac{n_d}{n})$
\cite{CK}.
We denote the set of conditional types for ${x}^n$ 
by $V({x}^n,{\cal X})$.
For any conditional type $\bV$ for ${x}^n$, 
we define the subset of ${\cal X}^n$:
\begin{align*}
T_{{\bV}}({x}^n) 
:= \left\{{x'}^n \in {\cal X}^n\left|
T_Y((x_1,x_1'), \ldots, (x_n,x_n'))
={\bV} \cdot P
\right.\right\},
\end{align*}
where 
${P}$ is the empirical distribution of ${x}^n$.

According to Csisz\'{a}r and K\"{o}rner\cite{CK},
the proposed code is constructed as follows.
The main point of this section is to establish
that Csisz\'{a}r-K\"{o}rner's Packing lemma \cite[Lemma 10.1]{CK}
provides a code whose performance is essentially equivalent to
the average performance of random coding in the sense of (\ref{8}).

\begin{lemma}\Label{l1}
For a positive number $R>0$, there exists a sufficiently large integer $N$ satisfying the following.
For any integer $n \ge N$ and any type ${P} \in T_n({\cal X})$ satisfying $R< H(P)$,
there exist $M_n:=e^{n(R-n^{\frac{1}{4}})}$ distinct elements 
\begin{align*}
\hat{\cal M}_n:=\{E_{n,P,R}(1),\ldots, E_{n,P,R}({M_n})\} \subset T_P
\end{align*}
such that 
the inequality
\begin{align}
|T_{{\bV}}({x}^n) \cap (\hat{\cal M}_n\setminus \{{x}^n\})|
\le |T_{{\bV}}({x}^n)| 
e^{-n(H(P)-R)} 
\Label{20}
\end{align}
holds for every ${x}^n \in \hat{\cal M}_n \subset T_{{P}}$ and 
every conditional type ${\bV} \in V({x}^n,{\cal X})$.
\hfill $\square$\end{lemma}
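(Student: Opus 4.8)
The plan is to prove Lemma~\ref{l1} by the probabilistic method with expurgation, in the manner of Csisz\'ar and K\"orner \cite{CK}. Fix $\delta>0$, a type $P\in T_n(\mathcal X)$, and $R<H(P)$; set $M:=\lceil 2e^{n(R-\delta)}\rceil$ and take $n$ large enough that $M\le|T_P|$, which is possible because \eqref{eq3} and \eqref{eq3-2} give $|T_P|=e^{nH(P)}/c_{n,P}\ge(n+1)^{-d}e^{nH(P)}$ while $R-\delta<H(P)$ (the statement is implicitly asymptotic, since $e^{n(R-\delta)}$ need not be an integer). Draw $\Xi_1,\dots,\Xi_M$ independently and uniformly from $T_P$, and call an index $i$ \emph{bad} if $\Xi_i=\Xi_j$ for some $j\ne i$, or if there is a conditional type $V\in V(\Xi_i,\mathcal X)$ for which strictly more than $|T_V(\Xi_i)|\,e^{-n(H(P)-R)}$ of the distinct codewords different from $\Xi_i$ lie in $T_V(\Xi_i)$; let $B$ be the number of bad indices. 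The key structural observation is that deleting all bad codewords leaves a set that satisfies the lemma outright: the survivors are distinct elements of $T_P$, and for any surviving $\Xi_i$ and any $V$ the count against the survivors is no larger than the count against the entire list, which is $\le|T_V(\Xi_i)|\,e^{-n(H(P)-R)}$ exactly because $i$ was not bad. Hence it suffices to show $\bbE[B]\le M/2$ for all large $n$: then some realization leaves at least $M/2\ge e^{n(R-\delta)}$ good codewords, and we keep $M_n$ of them.

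By symmetry $\bbE[B]\le M\,\Pr[1\text{ is bad}]$, which I bound conditionally on $\Xi_1=x^n\in T_P$. The coincidence part contributes at most $M(M-1)/|T_P|$ to $\bbE[B]$, and this is $o(M)$ because $|T_P|\ge(n+1)^{-d}e^{nH(P)}$ and $R<H(P)$ give $M^2/|T_P|\le 9(n+1)^d e^{n(2R-2\delta-H(P))}=o(e^{n(R-\delta)})$. For the packing part, fix $V$, set $N_V:=|\{j\ge2:\Xi_j\in T_V(x^n)\}|$ (an upper bound for the number of distinct such codewords), and write $\theta_V:=|T_V(x^n)|\,e^{-n(H(P)-R)}$ for the threshold. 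Since each $\Xi_j$ is uniform on $T_P$ and $|T_V(x^n)\cap T_P|\le|T_V(x^n)|$, linearity of expectation gives
\begin{align}
\bbE[N_V]\le(M-1)\frac{|T_V(x^n)|}{|T_P|},
\end{align}
and the factor $|T_V(x^n)|$ cancels against $\theta_V$, so Markov's inequality and $|T_P|\ge(n+1)^{-d}e^{nH(P)}$ yield
\begin{align}
\Pr[N_V\ge\theta_V]\le\frac{\bbE[N_V]}{\theta_V}\le\frac{(M-1)e^{n(H(P)-R)}}{|T_P|}\le M(n+1)^d e^{-nR}\le 4(n+1)^d e^{-n\delta}.
\end{align}
Taking a union bound over the at most $(n+1)^{d^2}$ conditional types $V$ shows that the packing part of the bad event has probability at most $4(n+1)^{d^2+d}e^{-n\delta}$, so it contributes at most $4M(n+1)^{d^2+d}e^{-n\delta}=o(M)$ to $\bbE[B]$. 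Altogether $\bbE[B]=o(M)\le M/2$ for all large $n$, as needed.

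The argument is essentially routine, and the main (mild) obstacle is the bookkeeping: every polynomial-in-$n$ prefactor --- the bound $c_{n,P}\le|T_n(\mathcal X)|\le(n+1)^d$ from \eqref{eq3}, the ensuing lower bound on $|T_P|$, and the polynomial count $|V(x^n,\mathcal X)|\le(n+1)^{d^2}$ of conditional types --- has to be swallowed by the single exponential $e^{-n\delta}$, which is precisely why the rate $R-\delta$ (and not $R$ itself) appears in $M_n$ even though the exponent in the conclusion is $H(P)-R$. The only other subtlety, already noted, is the monotonicity of expurgation: removing codewords can only decrease the counts in the inequality, so a single round of deletion suffices and no iteration is required.
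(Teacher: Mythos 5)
Your proof is correct, and it amounts to reconstructing from scratch the argument that the paper simply delegates: the paper's entire proof of Lemma~\ref{l1} is the one-line remark that it is Csisz\'ar--K\"orner's Packing Lemma (Lemma~5.1 of \cite{CK}) specialized to $\hat V(x|x')=\delta_{x,x'}$, and your random-coding-plus-expurgation argument is precisely the technique by which that Packing Lemma is itself established, so in substance the two routes coincide. The only thing worth noting is that the paper immediately re-invokes the lemma with the $n$-dependent choice $\delta=n^{-3/4}$ (yielding $M_n=e^{nR-n^{1/4}}$); your argument handles this equally well, since the coincidence term $M/|T_P|\le 3(n+1)^d e^{n(R-H(P))-n^{1/4}}$ and the packing term $4(n+1)^{d^2+d}e^{-n^{1/4}}$ both still vanish, the point being that the polynomial prefactors are beaten by any $e^{-n\delta}$ with $n\delta\to\infty$, not only by a constant~$\delta$.
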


This lemma is shown in Appendix \ref{Ap3} from Csisz\'{a}r and K\"{o}rner\cite[Lemma 10.1]{CK}.
Now we define our universal encoder $E_{n,P,R}$ by using Lemma \ref{l1}.
\if0
, i.e., we choose $M_n:= e^{n R- n^{\frac{1}{4}}}$ distinct elements 
\begin{align*}
\hat{\cal M}_n:=\{E_{n,P,R}(1),\ldots, E_{n,P,R}({M_n})\} \subset T_{P}
\end{align*}
such that the empirical distributions of $E_{n,P,R}(1),\ldots, E_{n,P,R}({M_n})$
are ${P}$ and the inequality
\begin{align}
|T_{{\bV}}({x}^n) \cap (\hat{\cal M}_n\setminus \{{x}^n\})|
\le |T_{{\bV}}({x}^n)| 
e^{-n(H({P})-R)} 
 \Label{20}
\end{align}
holds for any ${x}^n \in \hat{\cal M}_n \subset T_{{P}}$ and 
any ${\bV} \in V({x}^n,{\cal X})$.
\fi
Note that this encoder $\hat{\cal M}_n$ does not depend on the output alphabet because the employed Packing lemma treats the conditional types from the input alphabet to the input alphabet.
Due to this property, as shown below, the decoding error probability is upper bounded by a polynomial times of the average of the decoding error probability under the random coding.
Now, we transform the property (\ref{20}) to a form applicable to our evaluation.

Using the encoder $\hat{\cal M}_n$, 
we define the distribution $P_{\hat{\cal M}_n}$ as
\begin{align*}
P_{\hat{\cal M}_n}({x}^n):= 
\left\{
\begin{array}{cc}
\frac{1}{{M}_n} & {x}^n \in \hat{\cal M}_n \\
0 & {x}^n \notin \hat{\cal M}_n.
\end{array}
\right.
\end{align*}
Now, we focus on the permutation group $S_n$ on $\{1, \ldots, n\}$.
For any ${x}^n \in {\cal X}^n$, we define an invariant subgroup 
$S_{{x}^n}\subset S_n$, where $S_n$ is the permutation group with degree $n$:
\begin{align*}
S_{{x}^n}: = \{g \in S_n | g({x}^n)={x}^n \}.
\end{align*}
For $\bV \in V({x}^n,{\cal X})$,
the probability 
$\sum_{g \in S_{{x}^n}}
\frac{1}{|S_{{x}^n}|}
P_{\hat{\cal M}_n}\circ g ({x'}^n)$
does not depend on the element ${x'}^n \in T_{{\bV}}({x}^n)  \subset T_{P}$.
Since
$
\sum_{{x'}^n \in T_{{\bV}}({x}^n)}
\sum_{g \in S_{{x}^n}}
\frac{1}{|S_{{x}^n}|}
P_{\hat{\cal M}_n}\circ g ({x'}^n)
=
\sum_{{x'}^n \in T_{{\bV}}({x}^n)}
P_{\hat{\cal M}_n} ({x'}^n)
=
|T_{{\bV}}({x}^n) \cap (\hat{\cal M}_n\setminus \{{x}^n\})|
\cdot \frac{1}{{M}_n}$,
any element ${x'}^n \in T_{{\bV}}({x}^n)  \subset T_{P}$
satisfies
$
\sum_{g \in S_{{x}^n}}
\frac{1}{|S_{{x}^n}|}
P_{\hat{\cal M}_n}\circ g ({x'}^n)
=
\frac{|T_{{\bV}}({x}^n) \cap \hat{\cal M}_n|}{|T_{{\bV}}({x}^n)|}
\cdot
\frac{1}{{M}_n}$.
Thus, any element ${x'}^n \in T_{{\bV}}({x}^n)  \subset T_{P}$
satisfies
\begin{align}
&\sum_{g \in S_{{x}^n}}
\frac{1}{|S_{{x}^n}|}
P_{\hat{\cal M}_n}\circ g ({x'}^n)
=
\frac{|T_{{\bV}}({x}^n) \cap \hat{\cal M}_n|}{|T_{{\bV}}({x}^n)|}
\cdot
\frac{1}{{M}_n} \nonumber \\
= &
\frac{|T_{{\bV}}({x}^n) \cap (\hat{\cal M}_n\setminus \{{x}^n\})|}
{|T_{{\bV}}({x}^n)|{M}_n} \nonumber \\
\stackrel{(a)}{\le} &
e^{-nH(P)}
e^{n^{\frac{1}{4}} } 
\stackrel{(b)}{=} 
{P}^{n}({x'}^n) e^{n^{\frac{1}{4}} }
=\frac{1}{c_{n,P}} P_{T_P} ({x'}^n) e^{n^{\frac{1}{4}} }
  \Label{8}
\end{align}
when the conditional type ${\bV}$ is not identical, i.e., $\bV(x|x')\neq \delta_{x,x'}$, 
where $(a)$ and $(b)$ follow from  \eqref{20}
and \eqref{eq3-2}, respectively.
Notice that $c_{n,P}$ is defined in \eqref{eq3-2}.
Relation (\ref{8}) holds for any ${x'}^n(\neq {x}^n) \in T_P$
because there exists a conditional type ${\bV}$ such that 
${x'}^n \in T_{{\bV}}({x}^n)$ and ${\bV}$ is not identical.

\section{$\alpha$-R\'enyi-relative-entropy version of Clarke-Barron formula}\Label{s4}
In this section, we discuss an $\alpha$-R\'enyi-relative-entropy version of Clarke-Barron formula \cite{CB}
for a family $\{P_{\btheta}\}_{\btheta \in \bTheta}$ of distributions on a probability space ${\cal Y}$.
In the quantum paper \cite{Ha1},
the key idea to evaluate the decoding error probability
is to upper bound the max relative entropy $D_{\max}(P_{\btheta}^n\|Q^{(n)})$ 
between each independent and identical distribution 
$P_{\btheta}^n$ and a certain distribution $Q^{(n)}$.
That is, as shown in \cite{Ha1}, when the output alphabet ${\cal Y}$ is a finite set,
we can find a distribution $Q^{(n)}$ such that
the max relative entropy $D_{\max}(P_{\btheta}^n\|Q^{(n)})$ 
behaves as $ O( \log n)$, i.e.,
\begin{align}
D_{\max}( P_{\btheta}^{n}\| Q^{(n)}) =O(\log n)
\hbox{ for } \forall \btheta \in \bTheta.
\Label{17-1} 
\end{align}
For the detail, see Remark \ref{r5-24}.
However, when the output alphabet ${\cal Y}$ is not a finite set, 
it is not easy to directly evaluate the max relative entropy.
Hence, in this paper, we focus on an $\alpha$-R\'enyi-relative-entropy version of Clarke-Barron formula as follows.
For a given distribution $\nu$ on the parameter space $\bTheta$,
we define the mixture distribution $Q_{\nu}^n[\{P_{\btheta}\}]$ with the density function 
$q_{\nu}^n[\{P_{\btheta}\}](y^n):= \int_{\bTheta} q_{\btheta}^n(y^n) \nu(\btheta)d \theta$.
When we need to clarify the family of distributions $\{P_{\btheta}\}$,
we simplify it by $Q_{\nu}^n$.



\begin{lemma}\Label{l2-A}
Assume that $\nu(\btheta)$ is continuous for $\theta$ and the support of $\nu$ is $\bTheta$.
Then, an exponential family $\{P_{\btheta}\}$ satisfies that 
\begin{align}
& D_{1+s}(P_{\btheta}^n\|Q_{\nu}^n), D_{1+n}(P_{\btheta}^n\|Q_{\nu}^n)  \le  \frac{k}{2}\log n+O(1) \Label{3-28-Ab} \\
& D_{1+s}(P_{\btheta}^n\|Q_{\nu}^n)
\le  \frac{k}{2}\log \frac{n}{2\pi}+\frac{1}{2}\log \det J_{\btheta}
+\log \frac{1}{\nu(\btheta)}- \frac{k}{2s}\log (1+s)+o(1)
\Label{3-28-A} 
\end{align}
for $s>0$, as $n$ goes to infinity.
When the continuity for $\nu(\btheta)$ is uniform for $\theta$ in any compact set,
the constant $O(1)$ on the RHS of \eqref{3-28-Ab} can be chosen uniformly in any compact set
for $\theta$.
\hfill $\square$\end{lemma}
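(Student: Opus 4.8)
The plan is to reduce the $\alpha$-R\'enyi bound to a Laplace-type asymptotic evaluation of the integral defining $Q_w^n$, mirroring the classical Clarke--Barron argument but keeping careful track of the exponent $s$. First I would write out the defining identity
\begin{align}
s D_{1+s}(P_\theta^n\|Q_w^n)
= \log \int_{{\cal Y}^n} P_\theta^n(y^n)^{1+s}\, Q_w^n(y^n)^{-s}\, dy^n
= \log E_{P_\theta^n}\Bigl[\bigl(P_\theta^n(Y^n)/Q_w^n(Y^n)\bigr)^{s}\Bigr],
\end{align}
and then substitute $Q_w^n(y^n)=\int_\Theta P_{\theta'}^n(y^n)\,w(d\theta')$. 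The key step is to understand the inner integral $\int_\Theta P_{\theta'}^n(y^n)\,w(d\theta')$ when $y^n$ is a typical sample from $P_\theta$: by the exponential-family structure, $P_{\theta'}^n(y^n)=P_\theta^n(y^n)\exp\bigl(-n D_{\theta,\theta'}(y^n)\bigr)$ for an explicit quantity that, near the maximum-likelihood parameter $\hat\theta(y^n)$, behaves like a quadratic form in $\theta'-\hat\theta$ with Hessian $n J_{\hat\theta}$ (here I use that $\phi$ is $C^2$, so the Fisher information $J_\theta=\mathrm{Hess}\,\phi$ is continuous). A standard Laplace approximation then gives
\begin{align}
Q_w^n(y^n) \approx P_\theta^n(y^n)\cdot w(\hat\theta(y^n))\cdot \Bigl(\frac{2\pi}{n}\Bigr)^{k/2} (\det J_{\hat\theta(y^n)})^{-1/2},
\end{align}
so that $P_\theta^n(y^n)/Q_w^n(y^n)\approx (n/2\pi)^{k/2}(\det J_{\hat\theta})^{1/2} w(\hat\theta)^{-1}\cdot\bigl(P_\theta^n/P_{\hat\theta}^n\bigr)(y^n)$.

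Next I would raise this ratio to the power $s$, take the expectation over $Y^n\sim P_\theta^n$, and take the logarithm. The prefactor $(n/2\pi)^{k/2}(\det J_\theta)^{1/2} w(\theta)^{-1}$ (with $\hat\theta\to\theta$ by the law of large numbers) contributes, after multiplying by $s$ in $sD_{1+s}$, exactly $\frac{sk}{2}\log\frac{n}{2\pi}+\frac{s}{2}\log\det J_\theta+s\log\frac{1}{w(\theta)}$; dividing by $s$ recovers the first three terms on the RHS of \eqref{3-28-A}. The remaining factor is $E_{P_\theta^n}\bigl[(P_\theta^n(Y^n)/P_{\hat\theta(Y^n)}^n(Y^n))^{s}\bigr]$. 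Writing $\hat\theta(Y^n)\approx\theta+\frac{1}{\sqrt n}Z$ with $Z$ asymptotically Gaussian with covariance $J_\theta^{-1}$, and expanding the log-likelihood ratio to second order, one gets $\log\bigl(P_\theta^n/P_{\hat\theta}^n\bigr)(Y^n)\approx -\tfrac12 Z^\top J_\theta Z$, so this expectation converges to $E\bigl[e^{-\frac{s}{2}Z^\top J_\theta Z}\bigr]=\det(I+sJ_\theta^{-1}J_\theta)^{-1/2}=(1+s)^{-k/2}$; its logarithm divided by $s$ is $-\frac{k}{2s}\log(1+s)$, which is precisely the last term of \eqref{3-28-A}. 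For \eqref{3-28-Ab} I would take $s=n$: the same analysis shows $nD_{1+n}(P_\theta^n\|Q_w^n)$ is dominated by the prefactor term $\frac{nk}{2}\log n+O(n)$ (the $-\frac{k}{2}\log(1+n)$ correction is absorbed into $\frac{k}{2}\log n+O(1)$ after dividing by $n$, and all other contributions are $O(n)$ hence $O(1)$ after the division), giving $D_{1+n}\le\frac{k}{2}\log n+O(1)$.

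The main obstacle is making the Laplace approximation and the Gaussian-limit step \emph{rigorous and uniform}, rather than heuristic: one must control the contribution to $\int_\Theta P_{\theta'}^n(y^n)\,w(d\theta')$ from $\theta'$ far from $\hat\theta(y^n)$ (a large-deviation tail bound, where compactness of $\Theta$ under Condition B, or exhaustion by compacts under Condition C, is essential), control the error in the quadratic expansion of the cumulant generating function near $\theta$ (using the $C^2$ assumption and continuity of $J_\theta$), and justify interchanging the limit with the expectation $E_{P_\theta^n}[\,\cdot\,^s]$ — this last requires a uniform-integrability argument, since the random variable $(P_\theta^n/Q_w^n)^s$ could have heavy tails when $\hat\theta(Y^n)$ lands in a region where $w$ is small. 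I would handle uniform integrability by splitting the sample space into a high-probability "typical" set (where the quadratic approximation and $\hat\theta\approx\theta$ hold) and its complement, bounding the latter's contribution by a crude but summable estimate. Finally, for the compact-uniform statement in \eqref{3-28-Ab}, I would check that every $O(\cdot)$ and $o(\cdot)$ term above depends on $\theta$ only through continuous functions ($J_\theta$, $\phi$ and its Hessian, $w(\theta)$) and through the large-deviation rate, all of which are continuous hence bounded and uniformly controlled on compact subsets of $\Theta$, so the constant in \eqref{3-28-Ab} can be chosen uniformly there.
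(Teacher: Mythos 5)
Your overall strategy — Laplace approximation of $Q_w^n(y^n)=\int P_{\theta'}^n(y^n)\,w(d\theta')$, followed by a Gaussian/$\chi^2$-type limit for the resulting expectation — is indeed the paper's strategy, and your computation $E[e^{-\frac{s}{2}Z^\top J_\theta Z}]=(1+s)^{-k/2}$ correctly explains where the $-\frac{k}{2s}\log(1+s)$ term in \eqref{3-28-A} comes from. One minor stylistic difference: the paper Taylor-expands $\log(P_\theta^n/P_{\theta_0}^n)$ around the \emph{true} parameter $\theta_0$, exploiting that for an exponential family the Hessian of $\theta\mapsto\log P_\theta(y)$ is $-J_\theta$, deterministic in $y$; this makes the quadratic-plus-remainder decomposition exact and sidesteps any existence/consistency issues of the MLE $\hat\theta(y^n)$ that your pivot would require. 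The asymptotics are the same because $\sqrt{n}(\hat\theta-\theta_0)\approx J_{\theta_0}^{-1}l_{\theta_0;n}$, so $Z^\top J_{\theta_0}Z \approx l_{\theta_0;n}^\top J_{\theta_0}^{-1} l_{\theta_0;n}$, which is the $\chi^2_k$ statistic the paper uses.

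There is, however, a genuine gap in how you derive \eqref{3-28-Ab}. You propose setting $s=n$ in the same Laplace-plus-Gaussian-limit formula and arguing that the $-\frac{k}{2s}\log(1+s)$ term is absorbed into $O(1)$. But the convergence $E_{P_{\theta_0}^n}[e^{-\frac{s}{2}l^\top J^{-1}l}]\to(1+s)^{-k/2}$ is a consequence of the central limit theorem for \emph{fixed} $s$; when $s=s_n=n\to\infty$, the expectation puts weight on an $O(1/n)$-neighborhood of $l=0$, precisely where the CLT gives no uniform control. In other words, letting $s$ grow with $n$ destroys the Gaussian-limit step. The paper avoids this entirely by first proving a \emph{pointwise} bound
\begin{align*}
\frac{1}{n^{k/2}}\cdot\frac{P_{\theta_0}^n(y^n)}{Q_w^n(y^n)}\le
e^{\|J_{\theta_0}\|(1+\epsilon)/2}\,\frac{2\Gamma(k/2+1)}{(1-\epsilon)\pi^{k/2}}
\end{align*}
valid for all $y^n$ and $n\ge\delta^{-2}$, obtained by lower-bounding the Laplace integral over a fixed-volume half-ball in $z=\sqrt{n}(\theta-\theta_0)$-space rather than computing the full Gaussian integral. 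Being pointwise, this bound can be raised to any power $s$, including $s=n$, and the expectation trivially gives \eqref{3-28-Ab}. This pointwise bound is also exactly what you need to resolve the uniform-integrability issue you flag for \eqref{3-28-A}: the paper splits $E_{P_{\theta_0}^n}[(\cdot)^s]$ over $C_n=\{\|l_{\theta_0;n}\|<R\}$ and its complement, applies the refined Laplace estimate on $C_n$, and uses the crude pointwise bound (together with $P_{\theta_0}^n(C_n^c)\le\epsilon$) on $C_n^c$. So the two inequalities are not proved in parallel by the same argument with different $s$; \eqref{3-28-Ab} (or rather its underlying pointwise bound) is the tool that makes the proof of \eqref{3-28-A} go through, and your proposal is missing that step.
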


The proof of Lemma \ref{l2-A} is given in Appendix \ref{A2}.
Clarke-Barron's paper \cite{CB} showed the relation
\begin{align}
D(P_{\btheta}^n\|Q_{\nu}^n)
= \frac{k}{2}\log n +O(1)
\Label{3-28-AT} 
\end{align}
as $n$ goes to infinity.
Since $D(P_{\btheta}^n\|Q_{\nu}^n)
\le D_{1+s}(P_{\btheta}^n\|Q_{\nu}^n) $ for $s>0$, combining \eqref{3-28-A}, we have
\begin{align}
D_{1+s}(P_{\btheta}^n\|Q_{\nu}^n)
= \frac{k}{2}\log n +O(1)
\Label{3-28-AT2} 
\end{align}
as $n$ goes to infinity.
That is, to show \eqref{3-28-AT2}, it is sufficient to show \eqref{3-28-A}.

Now, we explain how to refine the condition \eqref{17-1} with a non-finite alphabet ${\cal Y}$
by using Lemma \ref{l2-A}.
Since
\begin{align}
& P_{\btheta}^n(
\{y^n| p_{\btheta}^n(y^n) > q_{\nu}^n(y^n) e^{n\delta}\})
\nonumber \\
\le & e^{-n s' \delta+ s' D_{1+s'}( P_{\btheta}^n\| Q_{\nu}^n)},
\end{align}
(\ref{3-28-A}) implies that
\begin{align}
& -\frac{1}{n}\log 
P_{\btheta}^n (
\{y^n| p_{\btheta}^n(y^n) > q_{\nu}^n(y^n) e^{n\delta}\})
\nonumber \\
\ge &
s' \delta- \frac{s'}{n} D_{1+s'}( P_{\btheta}^n\| Q_{\nu}^n) \nonumber \\
\ge &
s' \delta- \frac{s'}{n} 
\bigg(\frac{k}{2}\log n +O(1)
\bigg) \nonumber \\
\to & s' \delta \Label{eq8B}
\end{align} 
for $s'>0$.
Since $s'$ is arbitrary,
we have
\begin{align}
\lim_{n\to \infty}
-\frac{1}{n}\log 
P_{\btheta}^n(
\{y^n| p_{\btheta}^n(y^n) >  q_{\nu}^n (y^n)e^{n\delta}\})
=
\infty \Label{eq1B}.
\end{align} 
Since $\delta>0$ can be chosen arbitrarily small,
\eqref{eq1B} can be regarded as the refinement of \eqref{17-1}. 
(See the definition of the max relative entropy in \eqref{e5-21-T}.)

Next, we consider a family of distribution $\{P_{\btheta}\}_{\btheta\in \bTheta}$
to satisfy Condition C.
In this case, we choose the discrete mixture distribution $Q_C^n$ as follows.
Since $\bTheta$ is compact, 
the subset $\bTheta_{[n]}:=  \frac{1}{\sqrt{n}} \mathbb{Z}^k \cap \bTheta$ has finite cardinality,
which increases with the order $O( n^{\frac{k}{2}})$.
Then, we define the distribution $Q_C^n[\{P_{\btheta}\}_{\btheta \in \bTheta}]$ by the probability density function
$q_C^n[\{P_{\btheta}\}_{\btheta \in \bTheta}](y^n):= \sum_{\btheta\in \bTheta_{[n]}}
\frac{1}{|\bTheta_{[n]}|} p_{\btheta}^n(y^n)$ and have the following lemma by simplifying it by $Q_C^n$.

\begin{lemma}\Label{l2-E}
When a family $\{P_{\btheta}\}$ satisfies Condition E, 
we have
\begin{align}
D_{1+s}(P_{\btheta}^n\|Q_{C}^n)
\le & \frac{k}{2}\log n+O(1)
\Label{3-28-AE} 
\end{align}
for $s>0$, as $n$ goes to infinity.
More precisely, there is an upper bound $A_{\btheta,n} $ of 
$D_{1+s}(P_{\btheta}^n\|Q_{C}^n)$ satisfying the following.
$A_{\btheta,n} $ does not depend on $s>0$,
and the quantity $A_{\btheta,n} -\frac{k}{2}\log n$ converges 
uniformly for $\btheta$.
\hfill $\square$\end{lemma}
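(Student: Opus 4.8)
The plan is to reduce everything to a bound on the max-divergence and then to use the monotonicity of the R\'enyi divergence in its order. Since $s\mapsto D_{1+s}(P\|Q)$ is nondecreasing on $(0,\infty)$ (it equals the logarithm of the $L^s(P)$-norm of $P/Q$) and $\lim_{s\to\infty}D_{1+s}(P\|Q)=D_{\max}(P\|Q)$, we have $\sup_{s>0}D_{1+s}(P_{\theta}^n\|Q_E^n)=D_{\max}(P_{\theta}^n\|Q_E^n)$. Hence it suffices to exhibit an upper bound $A_{\theta,n}$ of $D_{\max}(P_{\theta}^n\|Q_E^n)$ of the form $\frac{k}{2}\log n+O(1)$, with the $O(1)$ remainder converging uniformly for $\theta\in\Theta$; such an $A_{\theta,n}$ is automatically independent of $s$ and dominates every $D_{1+s}(P_{\theta}^n\|Q_E^n)$, which is exactly what the lemma asserts.

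For the max-divergence, since $Q_E^n$ is the uniform average of $P_{\theta'}^n$ over $\theta'\in\Theta_{[n]}$, keeping only the largest term gives $Q_E^n(y^n)\ge|\Theta_{[n]}|^{-1}\max_{\theta'\in\Theta_{[n]}}P_{\theta'}^n(y^n)$, so
\begin{align}
D_{\max}(P_{\theta}^n\|Q_E^n)\le\log|\Theta_{[n]}|+\sup_{y^n}\Big(\log P_{\theta}^n(y^n)-\max_{\theta'\in\Theta_{[n]}}\log P_{\theta'}^n(y^n)\Big).
\end{align}
Because $\Theta$ is compact (Condition E1), $\Theta_{[n]}=\frac{1}{\sqrt{n}}\mathbb{Z}^k\cap\Theta$ has cardinality $O(n^{k/2})$, hence $\log|\Theta_{[n]}|\le\frac{k}{2}\log n+O(1)$ with a $\theta$-independent $O(1)$. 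For the supremum, fix $y^n$; by continuity of $\theta'\mapsto P_{\theta'}^n(y^n)$ (which follows from Condition E) and compactness of $\Theta$, a maximum-likelihood point $\hat\theta(y^n)\in\Theta$ exists, so $\log P_{\theta}^n(y^n)\le\log P_{\hat\theta(y^n)}^n(y^n)$. Choosing a grid point $\theta^{\sharp}\in\Theta_{[n]}$ within $O(1/\sqrt{n})$ of $\hat\theta(y^n)$ (using mild regularity of $\Theta$ near its boundary), we get $\max_{\theta'\in\Theta_{[n]}}\log P_{\theta'}^n(y^n)\ge\log P_{\theta^{\sharp}}^n(y^n)$, and the problem reduces to showing $\log P_{\hat\theta(y^n)}^n(y^n)-\log P_{\theta^{\sharp}}^n(y^n)=O(1)$ uniformly in $y^n$ and $\theta$.

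To control this empirical log-likelihood ratio I would Taylor-expand $\sum_{i=1}^n\log\big(P_{\hat\theta}(y_i)/P_{\theta^{\sharp}}(y_i)\big)$ about the maximum-likelihood point $\hat\theta=\hat\theta(y^n)$ via the likelihood-ratio derivative $l_{\theta}$ supplied by Condition E2: the first-order (score) term vanishes at an interior maximizer, while the second-order term has size $\|\hat\theta-\theta^{\sharp}\|^2=O(1/n)$ times the observed information (the sum over $i$ of second $\theta$-derivatives of $\log P_{\theta}(y_i)$), which is $O(n)$ by the uniform Fisher-information-type behavior guaranteed by Condition E2 and the continuity of $\theta\mapsto J_{\theta}$ on the compact $\Theta$; this yields the required $O(1)$, uniformly in $y^n$ and $\theta$. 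Assembling the estimates gives $D_{\max}(P_{\theta}^n\|Q_E^n)\le\frac{k}{2}\log n+O(1)$ with remainder converging uniformly in $\theta$, and we take this bound as $A_{\theta,n}$. The main obstacle is precisely this last step: one must upgrade Condition E's pointwise-in-$\theta$, population-level quadratic expansion of $D_{1+s}(P_{\theta}\|P_{\theta'})$ to a bound on the \emph{empirical} log-likelihood ratio valid simultaneously for every sequence $y^n$ (including atypical ones) and every $\theta$, and one must handle a boundary maximum-likelihood point as well as guarantee enough grid points in the $O(1/\sqrt{n})$-neighborhood of any point of $\Theta$; this is exactly where the compactness and the uniformity clauses of Condition E are essential.
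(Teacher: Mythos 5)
Your reduction to $D_{\max}$ is a genuinely different route from the paper's, and it does not go through under Condition E. The paper never leaves the realm of finite-order R\'enyi divergences: it uses the two elementary facts that $Q_E^n\ge|\Theta_{[n]}|^{-1}P_{\theta'}^n$ for any single grid point $\theta'$ and that $D_{1+s}$ is additive over i.i.d.\ products, obtaining
$D_{1+s}(P_\theta^n\|Q_E^n)\le\log|\Theta_{[n]}|+nD_{1+s}(P_\theta\|P_{\theta'})$,
and then invokes Condition E2 to pick a grid point with $D_{1+s}(P_\theta\|P_{\theta'})\le\frac{J_\theta+\epsilon}{2n}$. That is the entire proof; the hard analytic content resides in the \emph{single-copy, population-level} quantity $D_{1+s}(P_\theta\|P_{\theta'})$, which is precisely what Condition E2 controls.

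Your proposal instead passes to the supremum over $s$ and tries to bound the \emph{pointwise} log-likelihood ratio $\sup_{y^n}\bigl(\log P_\theta^n(y^n)-\max_{\theta'\in\Theta_{[n]}}\log P_{\theta'}^n(y^n)\bigr)$ via an MLE $\hat\theta(y^n)$ and a second-order Taylor expansion. This is where the argument breaks. Condition E2 controls $D_{1+s}(P_\theta\|P_{\theta'})$ (an expectation under $P_\theta$), not the empirical second derivative $-\sum_i\partial^2_\theta\log P_{\hat\theta}(y_i)$, which for a general family satisfying Condition E is a random quantity that can be arbitrarily large for atypical $y^n$. The score term may also fail to vanish when $\hat\theta(y^n)$ sits on the boundary of $\Theta$. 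You flag both issues at the end, but they are not repairable details within Condition E: in fact, for the Gaussian location family $\{N(\theta,1)\}_{\theta\in[-1,1]}$ with $\theta$ off the grid $\Theta_{[n]}$, the log-likelihood ratio $\log P_\theta^n(y^n)-\log P_{\theta'}^n(y^n)=n(\theta-\theta')\bar y-\tfrac{n}{2}(\theta^2-\theta'^2)$ is unbounded in $\bar y$ for every fixed grid point $\theta'$, so $D_{\max}(P_\theta^n\|Q_E^n)=\infty$ while each $D_{1+s}(P_\theta^n\|Q_E^n)$ is finite. Thus a $D_{\max}$ bound simply does not exist in the generality of Condition E, and the lemma must be proved — as the paper does — by working with $D_{1+s}$ directly rather than its $s\to\infty$ limit.
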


\begin{proof}
We fix $\epsilon>0$.
Due to the assumption and the definition of $\bTheta_{[n]}$, 
we can choose an integer $N$ satisfying the following conditions.
Notice that $\max_{\btheta \in \bTheta,\btheta'\in  \bTheta_{[n]}}
\|\btheta-\btheta'\|^2=\frac{k}{4n}$.
Due to Condition C, for $n \ge N$ and $\btheta \in \bTheta$, 
we can choose $\btheta' \in \bTheta_{[n]}$ such that
$D_{1+s}(P_{\btheta}\|P_{\btheta'}) \le 
\frac{
((\max_{\xi:\|\xi \|=1}
\sum_{i,j}J_{\btheta|i,j}\xi_i \xi_j)+\epsilon) k}{8 n} $.
Since $ \frac{1}{|\bTheta_{[n]} |}p_{\btheta'}^n (y^n)
\le q_{C}^n(y^n)$, 
using the constant $B_C:=  
\max_{\btheta \in \bTheta}
\frac{
((\max_{\xi:\|\xi \|=1}
\sum_{i,j}J_{\btheta|i,j}\xi_i \xi_j)+\epsilon) k}{8 } $,
we have
\begin{align}
e^{sD_{1+s}(P_{\btheta}^n\|Q_{C}^n)}
\le 
|\bTheta_{[n]} |^s
e^{sD_{1+s}(P_{\btheta}^n\|P_{\btheta'}^n )}
\le
|\bTheta_{[n]} |^s
e^{sn \frac{B_C}{n} }
=|\bTheta_{[n]} |^s
e^{s B_C }.
\end{align}
That is,
\begin{align}
D_{1+s}(P_{\btheta}^n\|Q_{C}^n)
\le
\log |\bTheta_{[n]} | +B_C .
\end{align}
Since the compactness of $\bTheta$ implies $\log |\bTheta_{[n]} |
=\frac{k}{2}\log n +O(1)$,
we obtain the desired argument.
\end{proof}

Now, we consider Condition E, which is weaker than Condition C.
In this case, we choose the discrete mixture distribution $Q_E^n$ as follows.
Since $\bTheta^i$ is compact, 
the subset $\bTheta_{[n,i]}:=  \frac{1}{\sqrt{n}} \mathbb{Z}^k \cap \bTheta^i$ has finite cardinality,
which increases with the order $O( n^{\frac{k}{2}})$.
Then, we define the distribution $Q_E^n[\{P_{\btheta}\}, \{\bTheta^i\}]$ by the probability density function
$q_E^n[\{P_{\btheta}\}, \{\bTheta^i\}](y^n):= 
\sum_{i=1}^{\infty}\frac{6}{ \pi^2 i^2}
\sum_{\btheta\in \bTheta_{[n,i]}}
\frac{1}{|\bTheta_{[n,i]}|} p_{\btheta}^n(y^n)$ and have the following lemma by simplifying it by $Q_E^n$.

\begin{lemma}\Label{l2-F}
When a family $\{P_{\btheta}\}$ satisfies Condition E, 
we have the same inequality as \eqref{3-28-AE} with $Q_{E}^n$.
More precisely, there is an upper bound $B_{\theta,n} $ of 
$D_{1+s}(P_{\btheta}^n\|Q_{E}^n)$ satisfying the similar condition as 
$A_{\btheta,n} $.
The difference is that
the quantity $B_{\theta,n} -\frac{k}{2}\log n$ converges 
uniformly for $\theta$ in any compact subset.
because $\bTheta$ is not necessarily compact.
Hence, the obtained inequality contains the statement like \eqref{3-28-A}.
\hfill $\square$\end{lemma}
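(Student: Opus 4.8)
\proof (proposal) The plan is to reduce everything to the argument already given for Lemma~\ref{l2-E}, with the compactness of $\Theta$ replaced by the exhaustion $\{\Theta_i\}$ of Condition F1. First observe that $Q_F^n$ is a genuine probability distribution, since $\sum_{i=1}^{\infty}\frac{6}{\pi^2 i^2}=1$ and each $\sum_{\theta\in\Theta_{[n,i]}}\frac{1}{|\Theta_{[n,i]}|}P_\theta^n$ is a probability distribution. Fix $\theta\in\Theta$ and, using Condition F1, fix an index $i_0$ with $\theta\in\Theta_{i_0}$. Because $\Theta_{i_0}$ is compact and the convergence in Condition F2 is uniform on $\Theta_{i_0}$, the lattice approximation used for $\Theta_{[n]}$ in the proof of Lemma~\ref{l2-E} applies verbatim with $\Theta_{[n]}$ replaced by $\Theta_{[n,i_0]}$: for every $\epsilon>0$ there is an integer $N$ (now allowed to depend on $i_0$) such that for $n\ge N$ one can choose $\theta'\in\Theta_{[n,i_0]}$ with $D_{1+s}(P_{\theta}\|P_{\theta'})\le\frac{J_\theta+\epsilon}{2n}$, uniformly for $\theta\in\Theta_{i_0}$.

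Since $Q_F^n(y^n)\ge\frac{6}{\pi^2 i_0^2}\frac{1}{|\Theta_{[n,i_0]}|}P_{\theta'}^n(y^n)$ by keeping only the $i=i_0$, $\theta''=\theta'$ term of the mixture, the same chain of inequalities as in Lemma~\ref{l2-E} gives
\begin{align}
e^{sD_{1+s}(P_{\theta}^n\|Q_F^n)}
\le\Big(\tfrac{\pi^2 i_0^2}{6}|\Theta_{[n,i_0]}|\Big)^{s} e^{sD_{1+s}(P_{\theta}^n\|P_{\theta'}^n)}
\le\Big(\tfrac{\pi^2 i_0^2}{6}|\Theta_{[n,i_0]}|\Big)^{s} e^{s\frac{J_\theta+\epsilon}{2}},
\end{align}
using additivity $D_{1+s}(P_{\theta}^n\|P_{\theta'}^n)=nD_{1+s}(P_{\theta}\|P_{\theta'})$, and hence
\begin{align}
D_{1+s}(P_{\theta}^n\|Q_F^n)\le\log\tfrac{\pi^2 i_0^2}{6}+\log|\Theta_{[n,i_0]}|+\tfrac{J_\theta+\epsilon}{2}=:B_{\theta,n}.
\end{align}
Note that $B_{\theta,n}$ does not depend on $s>0$. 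Since $\Theta_{i_0}$ is compact, $\log|\Theta_{[n,i_0]}|=\frac{k}{2}\log n+O(1)$, so $B_{\theta,n}=\frac{k}{2}\log n+O(1)$, which is the asserted inequality of the form \eqref{3-28-AE}. Because $\log\frac{\pi^2 i_0^2}{6}$ is a constant on the piece $\Theta_{i_0}$, because $J_\theta$ is continuous (hence bounded) on $\Theta_{i_0}$ by Condition F2, and because the estimate on $\theta'$ in the previous paragraph is uniform on $\Theta_{i_0}$, the quantity $B_{\theta,n}-\frac{k}{2}\log n$ converges uniformly for $\theta\in\Theta_{i_0}$; as $\{\Theta_i\}$ exhausts $\Theta$, this is exactly compact uniform convergence on $\Theta$, which is the weakened analogue of the property of $A_{\theta,n}$ in Lemma~\ref{l2-E}.

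The only genuine difference from Lemma~\ref{l2-E}, and the point requiring care, is that $\Theta$ is no longer compact: there is no single $N$ making the lattice approximation work simultaneously for all $\theta\in\Theta$, and the weight $\frac{6}{\pi^2 i_0^2}$ that one pays in passing from $Q_F^n$ to its $i_0$-th uniform mixture grows as one enlarges the compact piece. Both effects are harmless on each fixed $\Theta_i$ — the first because $N$ may depend on $i$, the second because $\log\frac{\pi^2 i_0^2}{6}$ is then an absorbed constant — which is precisely why only compact uniform (rather than uniform) convergence survives. The residual technical point, that a lattice point $\theta'\in\frac{1}{\sqrt n}\mathbb{Z}^k$ close to $\theta$ can be chosen inside $\Theta_{i_0}$ even for $\theta$ near $\partial\Theta_{i_0}$, is handled exactly as the corresponding step in the proof of Lemma~\ref{l2-E}.
\endproof
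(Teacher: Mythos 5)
Your argument is correct and follows essentially the same route as the paper: use Condition F1 to place $\theta$ in a compact piece $\Theta_{i_0}$, keep only the $i_0$-th term of the mixture $Q_F^n$ (paying the factor $\frac{6}{\pi^2 i_0^2}$), and then run the Lemma~\ref{l2-E} lattice-approximation argument on $\Theta_{[n,i_0]}$, which yields $D_{1+s}(P_\theta^n\|Q_F^n)\le\log\frac{\pi^2 i_0^2}{6}+\log|\Theta_{[n,i_0]}|+\frac{J_\theta+\epsilon}{2}$ and hence compact uniform convergence. Your writeup is cleaner than the paper's (which contains typos writing $Q_E^n$ and $\Theta_{[n]}$ for $Q_F^n$ and $\Theta_{[n,i]}$), and the closing discussion of why uniformity degrades to compact uniformity, while not in the paper, is accurate.
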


\begin{proof}
For any compact subset $\bTheta'\subset \bTheta$ and $\epsilon>0$,
we choose $i$ such that $\bTheta' \subset \bTheta_i$.
We fix $\epsilon>0$.
Due to the assumption and the definition of $\bTheta_{[n,i]}$, 
we can choose an integer $N$ satisfying the following conditions.
Notice that $\max_{\btheta \in \bTheta,\btheta'\in  \bTheta_{[n,i]}}
\|\btheta-\btheta'\|^2=\frac{k}{4n}$.
For $n \ge N$ and $\btheta \in \bTheta'$, we can choose $\btheta' \in \bTheta_{[n,i]}$ such that
$D_{1+s}(P_{\btheta}\|P_{\btheta'}) \le 
\frac{
((\max_{\xi:\|\xi \|=1}
\sum_{i,j}J_{\btheta|i,j}\xi_i \xi_j)+\epsilon) k}{8 n} $.
Since $ \frac{6}{\pi^2 i^2|\bTheta_{[n,i]} |}p_{\btheta'}^n (y^n)
\le q_{E}^n(y^n)$, 
using the constant $B_E:=  
\max_{\btheta \in \bTheta'}
\frac{
((\max_{\xi:\|\xi \|=1}
\sum_{i,j}J_{\btheta|i,j}\xi_i \xi_j)+\epsilon) k}{8 } $,
we have
\begin{align}
e^{sD_{1+s}(P_{\btheta}^n\|Q_{E}^n)}
\le 
\frac{\pi^{2s} i^{2s}}{6^s}
|\bTheta_{[n,i]} |^s
e^{sD_{1+s}(P_{\btheta}^n\|P_{\btheta'}^n )}.
\end{align}
That is,
\begin{align}
D_{1+s}(P_{\btheta}^n\|Q_{E}^n)
\le
\log \frac{\pi^{2} i^{2}}{6}
+\log |\bTheta_{[n,i]} | + B_E .
\end{align}
So, we obtain the desired argument.
\end{proof}

\begin{rem}
The paper \cite{HO} seems discuss a topic related to $\alpha$-R\'enyi-relative-entropy version of Clarke-Barron formula.
The main issue of the paper \cite{HO} is to evaluate $D(P_{\btheta}^n\|Q_{\nu}^n) $
not but $D_{\alpha}(P_{\btheta}^n\|Q_{\nu}^n) $.
For this purpose, the paper \cite{HO} discusses the $\alpha$-R\'enyi relative entropy 
$D_{\alpha}(P_{\btheta}\|P_{\btheta'}) $ between two close points $\theta$ and $\btheta'$.
\end{rem}

\begin{rem}\Label{r5-24}
In the relation to \eqref{eq1B}, we consider the finite output alphabet case under the same assumption as Lemma \ref{l2-A}.
We additionally assume that $\{P_{\btheta}\}$ equals the set of distributions on ${\cal Y}$.
So, for any type $P \in T_n({\cal Y})$, there exists a parameter $\btheta_P \in \bTheta $
such that $ P=P_{\btheta_P}$.
For a constant $c>0$, a type $P \in T_n({\cal Y})$, and an positive integer $n$,
there exists a constant $c_{n,P,c}$ such that
when the distribution $P_{\btheta}$ is close to the type $P \in T_n({\cal Y})$, i.e.,
$D(P_{\btheta}\| P)\ge \frac{c}{n}$, we have
$c_{n,P,c} P_{\btheta}^n (y^n) \ge  P_{T_P}(y^n) $.
Similar to \eqref{eq3}, the constant $c_{n,P,c}$ increases only polynomially with respect to $n$.

Since $D(P_{\btheta}\| P_{\btheta_P})\cong 
\frac{1}{2}\sum_{i,j} J_{\btheta|i,j}(\theta^i-\theta_P^i) (\theta^j-\theta_P^j)$, the probability 
$\nu (B_{P,n,c})$ of the set $B_{P,n,c}:=\{ \btheta \in \bTheta | D(P_{\btheta}\| P)\ge \frac{c}{n} \}$
behaves as the inverse of a polynomial of $n$.
The mixture distribution $ Q_{\nu}$ is evaluated as
$ Q_{\nu}^n(y^n) \ge \sum_{P \in T_n({\cal X})} 
\nu (B_{P,n,c}) c_{n,P,c}^{-1}P^n(y^n)$.
This fact shows that
for any $\delta>0$, there exists $N$ such that we have
\begin{align}
  e^{n\delta} 
Q_{\nu}^n(y^n) \ge \sum_{P \in T_n({\cal Y})}  P(y^n)
\ge P_{\btheta}^n(y^n)
\Label{Eq224}
\end{align}
for any integer $n \ge N$ and any $\btheta \in \bTheta$.
Hence, $ D_{\max}(P_{\btheta}^n \| Q_{\nu}^n) \le n \delta$,
which is a stronger statement than \eqref{eq1B}.

More strongly, the quantum paper \cite{Ha1} 
showed that there exists a distribution $Q^{(n)}$ such that
$ D_{\max}(P_{\btheta}^n \| Q^{(n)}) $ behaves as $O(\log n)$.
That is, the paper \cite{Ha1} essentially chosen such a distribution $Q^{(n)}$ in \eqref{17-1} to be the uniform mixture of the uniform distributions $P_{T_P}$, i.e., 
$\sum_{P \in T_n({\cal Y})} \frac{1}{|T_n({\cal Y})|}P_{T_P}$\cite{TH}.  
This distribution $ Q^{(n)}$ satisfies $ D_{\max}(P_{\btheta}^n \| Q^{(n)})
\le \log |T_n({\cal Y})|$.
The paper \cite{Ha1} employed this property to construct the universal channel coding even in the quantum setting.
However, in the infinite output alphabet case, 
we cannot show \eqref{Eq224}. 
Hence, we need this long discussion here.
The inequality \eqref{Eq224} in finite systems
also played an essential role in the papers \cite{TH,HT,Ha2}.
\end{rem}

\section{Universal decoder}\Label{s5}
The aim of this section is to make a universal decoder, 
which does not depend on the parameter $\theta$
and depends only on 
the distribution $P$, the family of channel $\{\bW_{\btheta}\}$,
prior distributions on $\bTheta$,
the coding rate $R$, and another rate $R_1$.
The other rate $R_1$ is decided by the distribution $P$, the family $\{\bW_{\btheta}\}$, and $R$
as discussed in the method (M1) or (M2) (large deviation) 
and Section \ref{s8} (second order).
The required property holds for any choice of the prior distributions on $\bTheta$.

\subsection{Exponential family of channels}
In this subsection, we construct a decoder only for 
exponential family of channels, i.e., under Condition B.
In Subsection \ref{s5-2}, under Condition D or F, we can construct it in the same way by replacing
$Q_w^{n}$ by $Q_C^{n}$ or $Q_E^{n}$ due to Lemmas \ref{l2-E} and \ref{l2-F}.
In the decoding process, we need to extract the message that has large correlation
with detected output signal like the maximum mutual information decoder \cite{CK}.
However, since the cardinality of the output alphabet ${\cal Y}$ is not necessarily finite, we cannot directly apply the maximum mutual information decoder
because it can be defined only with finite input and output alphabets.
Here, to overcome this problem,
we need to extract the message that has large correlation with detected output signal.
To achieve this aim, we firstly construct 
output distributions that universally approximate
the true output distribution and the mixture of the output distribution.
Then, we apply the decoder for information spectrum method \cite{Verdu-Han} to these distributions.
Using this idea, we construct our universal decoder.
That is, for a given number $R>0$ and a distribution $P$ on ${\cal X}$,
we define our universal decoder for our universal encoder $E_{n,P,R}$
given in Section \ref{s3} by using the type $P=
(\frac{n_1}{n}, \ldots, \frac{n_d}{n})$.

We choose prior distributions $\nu$ with support $\bTheta$.
For a given $x^n\in T_P$, we define
the distribution $Q_{x^n}^{(n)}$ on ${\cal Y}^n$ as follows.
First, for simplicity, we consider the case 
when 
\begin{align}
{x'}^n=(\underbrace{1,\ldots,1}_{n_1},\underbrace{2,\ldots,2}_{n_2},\ldots, 
\underbrace{d,\ldots, d}_{n_d}) \in T_P.
\Label{E5-21-7}
\end{align}
In this case, 
$Q_{{x'}^n}^{(n)}$ is defined as
$Q_{\nu}^{n_1}[\{\bW_{\btheta,1}\}]\times Q_{\nu}^{n_2} [\{\bW_{\btheta,2}\}]\times \cdots \times Q_{\nu}^{n_d}[\{\bW_{\btheta,d}\}]$. 
For a general element $x^n \in T_P$, 
the distribution $Q_{{x}^n}^{(n)}$ on ${\cal Y}^n$
is defined by 
the permuted distribution of $Q_{{x'}^n}^{(n)}$ by 
the application of the permutation 
converting ${x'}^n$ to $x^n$.

Now, to make our decoder,
we apply the decoder for information spectrum method \cite{Verdu-Han} to  
the case when 
the output distribution is given by $Q_{{x}^n}^{(n)}$
and the mixture of the output distribution is given by 
$Q_{P}^{(n)} := \sum_{x^n \in T_P} P_{T_P}({x}^n) Q_{{x}^n}^{(n)}$.
Then, we define the subset 
\begin{align}
\hat{{\cal D}}_{x^n}:= \{y^n| 
q_{{x}^n}^{(n)} (y^n) \ge e^{n R_1} q_{P}^{(n)} (y^n)
\}.
\Label{11-27-6}
\end{align}
Given a number $R_1 >R$ and an element $i \in {\cal M}_n$, 
we define the subset 
${\cal D}_{i}:= 
\hat{{\cal D}}_{E_{n,P,R}(i)} \setminus 
\cup_{j=1}^{i-1}\hat{{\cal D}}_{E_{n,P,R}(i)}$,
inductively.
Finally, we define our universal decoder 
$D_{n,P,R,R_1}$ as $D_{n,P,R,R_1}^{-1}(i):= {\cal D}_{i}$
and our universal code 
$\Phi_{n,P,R,R_1}:=(e^{nR- n^{\frac{1}{4}}},E_{n,P,R},D_{n,P,R,R_1})$.

Now, we discuss several important properties related to our universal decoder
by using Lemma \ref{l2-A}.

\if0
For this purpose, we rewrite Lemma \ref{l2-A} in terms of an exponential family of channels as follows. 
\begin{lemma}\Label{l2}
Assume that $w(\btheta)$ is continuous for $\theta$.
Then, an exponential family $\{W_{\theta,x}\}$ of channels satisfies that 
\begin{align}
D_{1+n}(W_{\theta,(x,\ldots,x )}^n\|Q_{w,x}^n) 
\le & \frac{k}{2}\log n+O(1)
\Label{3-28b} 
\end{align}
for $s>0$ and for each $x\in {\cal X}$, as $n$ goes to infinity.
When the above continuity is uniform for $\theta$ in any compact subset,
the constant on the RHS of \eqref{3-28b} can be chosen uniformly
for $\theta$ in any compact subset.
\hfill $\square$\end{lemma}
\fi

For $x^n \in T_P$ and $s>0$,
\eqref{3-28-A} of Lemma \ref{l2-A} guarantees that
\begin{align}
& D_{1+s}( W^{n}_{\btheta,x^n}\| Q_{{x}^n}^{(n)}) \nonumber \\
\le & \sum_{x \in {\cal X}}
\Bigl(\frac{k}{2}\log (n P(x)) \Bigr)+O(1).
\Label{3-28-1}
\end{align}
Hence, similar to \eqref{eq8B}, (\ref{3-28-1}) implies that
\begin{align}
\liminf_{n\to \infty}
-\frac{1}{n}\log 
W^{n}_{\btheta,x^n}(
\{y^n| w^{n}_{\btheta,x^n}(y^n) > e^{n\delta} q_{{x}^n}^{(n)}(y^n) \})
\ge  s' \delta \Label{eq8}
\end{align} 
for $s'>0$.
Since $s'$ is arbitrary,
we have
\begin{align}
\lim_{n\to \infty}
-\frac{1}{n}\log 
W^{n}_{\btheta,x^n}(
\{y^n| w^{n}_{\btheta,x^n} (y^n) > e^{n\delta} q_{{x}^n}^{(n)}(y^n) \})
=
\infty \Label{eq1}.
\end{align} 

Modifying the derivation in \eqref{eq8B},
we have
\begin{align}
-\frac{1}{\sqrt{n}}\log 
W^{n}_{\btheta,x^n}(
\{y^n| w^{n}_{\btheta,x^n}(y^n) > e^{\sqrt{n}\delta} q_{{x}^n}^{(n)}(y^n) \})
\ge  s' \delta +o(1)\Label{eq8T}
\end{align} 
for $s'>0$.
Here, the term $o(1)$ in \eqref{eq8T} can be chosen as a arbitrary small constant uniformly with respect to $n$ and ${\btheta}$ in any compact set due to Lemma \ref{l2-A}.

Further, since
$Q_{P}^{(n)} = \sum_{x^n \in T_P} P_{T_P}(x^n) Q_{{x}^n}^{(n)}$,
the information processing inequality for R\'{e}nyi relative entropy yields that
$e^{s D_{1+s}( \bW^{n}_{\btheta}\cdot P_{T_P}\| Q_{P}^{(n)})}
\le
\sum_{x^n \in T_P} P_{T_P} (x^n)e^{s D_{1+s}( W^{n}_{\btheta,x^n}\| Q_{{x}^n}^{(n)})}
=
e^{s D_{1+s}( W^{n}_{\btheta,x^n}\| Q_{{x}^n}^{(n)})}$, which implies 
\begin{align}
& D_{1+s}( W^{n}_{\btheta}\cdot P_{T_P}\| Q_{P}^{(n)}) \nonumber \\
\le & \sum_{x \in {\cal X}}
\Bigl(\frac{k}{2}\log (n P(x))\Bigr)+O(1).
\Label{3-28-1b}
\end{align}
Here, the constant $O(1)$ can be chosen uniformly with respect to ${\btheta}$ in any compact set 
 due to Lemma \ref{l2-A}.

Similar to \eqref{eq1}, \eqref{3-28-1b} implies that
\begin{align}
\lim_{n \to \infty} \frac{-1}{n}\log 
\bW_{\btheta}^n\cdot P_{T_P}
\{y^n| 
\bw_{\btheta}^n\cdot P_{T_P}(y^n) > e^{n \delta} q_P^{(n)}  (y^n)
\} 
=\infty.\Label{eq13UU}
\end{align}

\subsection{General assumption}\Label{s5-2}
Next, under more general assumption, i.e., Conditions D and F, we construct 
the universal decoder by replacing
$Q_w^{n}$ by $Q_C^{n}$ or $Q_E^{n}$ due to Lemmas \ref{l2-E} and \ref{l2-F}.
Under Condition D, first, 
for simplicity, we consider the case of \eqref{E5-21-7}.
In this case, 
$Q_{{x'}^n}^{(n)}(y^n)$ is defined as
$Q_{C}^{n_1}[\{\bW_{\btheta,1}\}_{\btheta \in \bTheta}]\times 
Q_{C}^{n_2} [\{\bW_{\btheta,2}\}_{\btheta \in \bTheta}]\times \cdots \times 
Q_{C}^{n_d}[\{\bW_{\btheta,d}\}_{\btheta \in \bTheta}]$. 
For a general element $x^n \in T_P$, 
the distribution $Q_{{x}^n}^{(n)}$ on ${\cal Y}^n$
is defined by 
the permuted distribution of $Q_{{x'}^n}^{(n)}$ by 
the application of the permutation 
converting ${x'}^n$ to $x^n$.
Using $Q_{P}^{(n)} := \sum_{x^n \in T_P} P_{T_P}(x^n) Q_{{x}^n}^{(n)}$,
we define the decoder as \eqref{11-27-6}.
So, in the same way, we can show \eqref{eq1}, \eqref{eq8T}, \eqref{3-28-1b}, and \eqref{eq13UU} by using \eqref{3-28-AE} of Lemma \ref{l2-E}.
In particular, the constants $o(1)$ and $O(1)$ in \eqref{eq8T} and \eqref{3-28-1b} can be chosen uniformly 
with respect to ${\btheta}$ due to the same reason, respectively.

Under Condition F, in the case of \eqref{E5-21-7},
the distribution $Q_{{x'}^n}^{(n)}$ is defined as
$Q_{E}^{n_1}[\{\bW_{\btheta,1}\}, \{ \bTheta^i\}]\times 
Q_{E}^{n_2} [\{\bW_{\btheta,2}\}, \{ \bTheta^i\}]\times \cdots \times 
Q_{E}^{n_d}[\{\bW_{\btheta,d}\}, \{ \bTheta^i\}]$. 
For a general element $x^n \in T_P$, 
the distribution $Q_{{x}^n}^{(n)}$ on ${\cal Y}^n$
is defined by 
the permuted distribution of $Q_{{x'}^n}^{(n)}$ by 
the application of the permutation 
converting ${x'}^n$ to $x^n$.
Using the distribution $Q_{P}^{(n)} := \sum_{x^n \in T_P} P_{T_P}(x^n) Q_{{x}^n}^{(n)}$,
we define the decoder as \eqref{11-27-6}.
So, in the same way, we can show \eqref{eq1}, \eqref{eq8T}, \eqref{3-28-1b}, and \eqref{eq13UU} by using Lemma \ref{l2-E}.
Again, the constants $o(1)$ and $O(1)$ in \eqref{eq8T} and \eqref{3-28-1b} can be chosen uniformly 
with respect to ${\btheta}$ due to the same reason, respectively.

\section{Error exponent}\Label{s6}
In this section, 
using the property \eqref{eq3} of type, the encoder property \eqref{8},
and the decoder properties \eqref{eq1} and \eqref{eq13UU},   
we will prove Theorem \ref{Th1}, i,e., show that the code
$\Phi_{n,P,R,R_1}$ satisfies that
\begin{align}
\lim_{n \to \infty}\frac{-1}{n} 
\log e_{\btheta} (\Phi_{n,P,R,R_1})
\ge \min( \max_{s \in [0,1]}
 (s I_{1-s}(P,\bW_{\btheta} ) -s R_1), R_1-R ). \Label{eq5}
\end{align}
Since  the decoder properties \eqref{eq1} and \eqref{eq13UU} holds for all of Conditions B, D, and F,
the following proof is valid under Conditions B, D, and F.

First, we have
\begin{align}
& \sum_{i=1}^{M_n}\frac{1}{M_n}
W^{n}_{{\btheta},E_{n,P,R}(i)}({\cal D}_{i}^c) \nonumber \\
\le &
\sum_{i=1}^{M_n}\frac{1}{M_n} \Big(
(W^{n}_{{\btheta},E_{n,P,R}(i)}(\hat{{\cal D}}_{E_{n,P,R}(i)}^c) \nonumber \\
&\qquad \qquad+
\sum_{j\neq i}W^{n}_{{\btheta},E_{n,P,R}(i)}(\hat{{\cal D}}_{E_{n,P,R}(j)}))\Big)
\nonumber \\
= &
\sum_{i=1}^{M_n}\frac{1}{M_n}
\Big( W^{n}_{{\btheta},E_{n,P,R}(i)}(\hat{{\cal D}}_{E_{n,P,R}(i)}^c) \nonumber \\
&+
\sum_{j=1}^{M_n} \frac{1}{M_n}
\sum_{i \neq j} W^{n}_{{\btheta},E_{n,P,R}(i)}(\hat{{\cal D}}_{E_{n,P,R}(j)})\Big) .
\Label{eq6}
\end{align}

In the following, we evaluate the first term of (\ref{eq6}).
Any element $x^n \in T_P$ satisfies that
\begin{align}
& W^{n}_{{\btheta},x^n}(\hat{{\cal D}}_{x^n}^c) \nonumber \\
=&
W^{n}_{{\btheta},x^n}(
\{y^n| q_{{x}^n}^{(n)} (y^n) < e^{n R_1} q_{P}^{(n)} (y^n) \}) \nonumber \\
\le &
W^{n}_{{\btheta},x^n}(
\{y^n| w^{n}_{{\btheta},x^n}(y^n) < e^{n (R_1+\delta)} q_{P}^{(n)} (y^n) \})
\nonumber \\
&+
W^{n}_{{\btheta},x^n}(
\{y^n| w^{n}_{{\btheta},x^n}(y^n) > e^{n\delta} q_{{x}^n}^{(n)}(y^n) \})
\Label{eq7}
\end{align}
because when 
$w^{n}_{{\btheta},x^n}(y^n) \le e^{n\delta} q_{{x}^n}^{(n)}(y^n)$,
the condition $q_{{x}^n}^{(n)} (y^n) < e^{n R_1} q_{P}^{(n)} (y^n)$
implies $w^{n}_{{\btheta},x^n}(y^n) < e^{n (R_1+\delta)} q_{P}^{(n)} (y^n)$.
To get the latter exponent, we choose any elements 
$s\in [0,1]$ and $x^n \in T_P$. 
Using the measure
$\mu^n(dy^n):=\mu(dy_1)\cdots \mu(dy_n)$,
we have
\begin{align}
& W^{n}_{{\btheta},x^n}(
\{y^n| w^{n}_{{\btheta},x^n}(y^n) < e^{n (R_1+\delta)} q_{P}^{(n)} (y^n) \})\nonumber \\
\le &
\int_{{\cal Y}^n}
w^{n}_{{\btheta},x^n}(y^n)^{1-s}
 e^{n s(R_1+\delta)} q_{P}^{(n)} (y^n)^s \mu^n(dy^n)
\nonumber \\
\stackrel{(a)}{=} &
\sum_{x^n \in T_P}\frac{1}{|T_P|}
\int_{{\cal Y}^n} 
 w^{n}_{{\btheta},x^n}(y^n)^{1-s}
 e^{n s(R_1+\delta)} q_P^{(n)} (y^n)^s 
\mu^n(dy^n)
\nonumber \\
\stackrel{(b)}{\le} &
|T_n({\cal X})|
\nonumber \\
& \cdot
\sum_{x^n \in {\cal X}^n}P^n(x^n)
\int_{{\cal Y}^n} 
 w^{n}_{{\btheta},x^n}(y^n)^{1-s}
 e^{n s(R_1+\delta)} q_P^{(n)} (y^n)^s \mu^n(dy^n)
\nonumber \\
\stackrel{(c)}{=} &
|T_n({\cal X})|
e^{n s(R_1+\delta)}
\nonumber \\
& \cdot
\int_{{\cal Y}^n} 
\Big(\sum_{x^n \in {\cal X}^n}P^n(x^n)
 w^{n}_{{\btheta},x^n}(y^n)^{1-s}\Big)
  q_P^{(n)} (y^n)^s \mu^n(dy^n)
\nonumber \\
\stackrel{(d)}{\le} &
|T_n({\cal X})|
e^{n s(R_1+\delta)}
\nonumber \\
& \cdot
\Big(\int_{{\cal Y}^n} 
\Big(\sum_{x^n \in {\cal X}^n}P^n(x^n)
 w^{n}_{{\btheta},x^n}(y^n)^{1-s}\Big)^{\frac{1}{1-s}}
\mu^n(dy^n) \Big)^{1-s} \nonumber \\
& \cdot 
\Big(\int_{{\cal Y}^n} q_P^{(n)} (y^n)^\frac{s}{s} \mu^n(dy^n)\Big)^{\frac{1}{s}}
\nonumber \\
=&
|T_n({\cal X})|
e^{n s(R_1+\delta)}
\nonumber \\
& \cdot
\Big(\int_{{\cal Y}^n} 
\Big(\sum_{x^n \in {\cal X}^n}P^n(x^n)
 w^{n}_{{\btheta},x^n}(y^n)^{1-s}\Big)^{\frac{1}{1-s}}
\mu^n(dy^n) \Big)^{1-s} 
\nonumber \\
= &
|T_n({\cal X})|
e^{n s(R_1+\delta)}
\nonumber \\
& \cdot
\Big(\int_{{\cal Y}} 
\Big(\sum_{x \in {\cal X}}P (x)
 w_{{\btheta},x}(y)^{1-s}\Big)^{\frac{1}{1-s}}
\mu(dy)\Big)^{n(1-s)}  \nonumber \\
=& |T_n({\cal X})|
e^{n (s(R_1+\delta)-s I_{1-s}(P,\bW_{\btheta} ) )},\Label{eq9}
\end{align}
where $(b)$ and $(d)$ follow from 
(\ref{eq3}) and the H\"{o}lder inequality, respectively.
Step $(c)$ follows from the exchange of the orders of finite sum and the integral.

Since the value of the integral 
$\int_{{\cal Y}^n} w^{n}_{{\btheta},x^n}(y^n)^{1-s}
 e^{n s(R_1+\delta)} q_{P}^{(n)} (y^n)^s \mu^n(dy^n) $
does not change when the order of $y_1, \ldots, y_n$,
we have the relation $$
\int_{{\cal Y}^n} w^{n}_{{\btheta},x^n}(y^n)^{1-s}
 e^{n s(R_1+\delta)} q_{P}^{(n)} (y^n)^s \mu^n(dy^n)=
\int_{{\cal Y}^n} w^{n}_{{\btheta},{x^n}'}(y^n)^{1-s}
 e^{n s(R_1+\delta)} q_{P}^{(n)} (y^n)^s \mu^n(dy^n)$$
 with ${x^n}'\neq {x^n}\in T_P $, which implies Step $(a)$.

Due to \eqref{eq1} and \eqref{eq7}, 
the exponential decreasing rate of $W^{n}_{{\btheta},x^n}(\hat{{\cal D}}_{x^n}^c)$
equals that of 
$w^{n}_{{\btheta},x^n}(\{y^n| w^{n}_{{\btheta},x^n}(y^n) < e^{n (R_1+\delta)} q_{P}^{(n)} (y^n) \})$.
Thus, using \eqref{eq9}, we have
\begin{align*}
\lim_{n\to \infty}
-\frac{1}{n}\log W^{n}_{{\btheta},x^n}(\hat{{\cal D}}_{x^n}^c) 
\ge 
-s(R_1+\delta) +s I_{1-s}(P,\bW_{\btheta} ).
\end{align*} 
Since $\delta>0$ is arbitrary, we have
\begin{align}
\lim_{n\to \infty}
-\frac{1}{n}\log W^{n}_{{\btheta},x^n}(\hat{{\cal D}}_{x^n}^c) 
\ge 
-s R_1+ s I_{1-s}(P,\bW_{\btheta} ).
\Label{eq10}
\end{align} 

Next, we proceed to the second term of (\ref{eq6}).
In the following, we simplify 
$\bW_{{\btheta}}\cdot P $ to be $W_{{\btheta},P} $. 
\begin{align}
& \frac{1}{M_n} \sum_{j=1}^{M_n} \sum_{i \neq j}
W^{n}_{{\btheta},E_{n,P,R}(i)}(\hat{{\cal D}}_{E_{n,P,R}(j)}) \nonumber\\
\stackrel{(a)}{=}  &
\sum_{j=1}^{M_n} \sum_{i\neq j}
\sum_{g\in S_{E_{n,P,R}(j)}}\frac{1}{|S_{E_{n,P,R}(j)}|}
\frac{1}{M_n}
W^{n}_{{\btheta},g (E_{n,P,R}(i))}(\hat{{\cal D}}_{E_{n,P,R}(j)}) \nonumber\\
= &
\sum_{j=1}^{M_n}
\sum_{g\in S_{E_{n,P,R}(j)}}\frac{1}{|S_{E_{n,P,R}(j)}|}
\sum_{{x'}^n (\neq E_{n,P,R}(j) )\in T_P}
P_{\hat{\cal M}_{n}} (g({x'}^n))
W^{n}_{{\btheta},{x'}^n}(\hat{{\cal D}}_{E_{n,P,R}(j)})
\nonumber\\
\stackrel{(b)}{\le}  &
\frac{e^{n^{\frac{1}{4}}}}{c_{n,P}}
\sum_{j=1}^{M_n}
\sum_{x^n \in {\cal X}^n}
P_{T_P}(x^n)
W^{n}_{{\btheta},x^n}(\hat{{\cal D}}_{E_{n,P,R}(j)}) \nonumber\\
=&
\frac{e^{n^{\frac{1}{4}}}}{c_{n,P}}
\sum_{j=1}^{M_n}
\bW_{{\btheta}}^n\cdot P_{T_P}(\hat{{\cal D}}_{E_{n,P,R}(j)}) \nonumber\\
=&
e^{n^{\frac{1}{4}}}
\sum_{j=1}^{M_n}
W_{{\btheta}}^n\cdot P_{T_P}
\{y^n| 
q_{E_{n,P,R}(j)}^{(n)} (y^n) \ge e^{n R_1} q_P^{(n)} (y^n)
\} \nonumber\\
\le &
\frac{e^{n^{\frac{1}{4}}}}{c_{n,P}}
\sum_{j=1}^{M_n}
\bW_{{\btheta}}^n\cdot P_{T_P}
\{y^n| 
q_{E_{n,P,R}(j)}^{(n)} (y^n) \ge e^{n (R_1-\delta)} w_{{\btheta}}^n\cdot P_{T_P} (y^n)
\} \nonumber\\
& +
\frac{e^{n^{\frac{1}{4}}}}{c_{n,P}}
\sum_{j=1}^{M_n}
\bW_{{\btheta}}^n\cdot P_{T_P}
\{y^n| 
\bw_{{\btheta}}^n\cdot P_{T_P}(y^n) > e^{n \delta} q_P^{(n)}(y^n)
\} \nonumber\\
\le &
\frac{e^{n^{\frac{1}{4}}}}{c_{n,P}}
\sum_{j=1}^{M_n}
e^{-n (R_1-\delta)}
Q_{E_{n,P,R}(j)}^{(n)}
\{y^n| 
q_{E_{n,P,R}(j)}^{(n)} (y^n) \ge e^{n (R_1-\delta)} w_{{\btheta}}^n\cdot P_{T_P} (y^n)
\}
\nonumber \\
& +
\frac{e^{n^{\frac{1}{4}}}}{c_{n,P}}
\sum_{j=1}^{M_n}
\bW_{{\btheta}}^n\cdot P_{T_P}
\{y^n| 
 \bw_{{\btheta}}^n\cdot P_{T_P} (y^n) > e^{n \delta} q_P^{(n)} (y^n)
\} \nonumber \\
\le  &
\frac{e^{n^{\frac{1}{4}}}}{c_{n,P}}
\sum_{j=1}^{M_n}
e^{-n (R_1-\delta)}
 +
\frac{e^{n^{\frac{1}{4}}}}{c_{n,P}}
M_n
\bW_{{\btheta}}^n\cdot P_{T_P}
\{y^n| 
\bw_{{\btheta}}^n\cdot P_{T_P} (y^n) > e^{n \delta} q_P^{(n)} (y^n)
\} \nonumber \\
=  &
\frac{e^{n^{\frac{1}{4}}}}{c_{n,P}}
M_n
e^{-n (R_1-\delta)} +
e^{\sqrt{n}}
M_n
\bW_{{\btheta}}^n\cdot P_{T_P}
\{y^n| 
\bw_{{\btheta}}^n\cdot P_{T_P} (y^n) > e^{n \delta} q_P^{(n)} (y^n)
\} 
\Label{eq11}
\end{align}
where 
$(a)$ follows from the relation $
W^{n}_{{\btheta},g (E_{n,P,R}(i))}(\hat{{\cal D}}_{E_{n,P,R_1}(j)})
=W^{n}_{{\btheta},E_{n,P,R}(i)}(\hat{{\cal D}}_{E_{n,P,R_1}(j)}) $ 
for $g\in S_{E_{n,P,R}(j)}$
and $(b)$ follows from (\ref{8}).

Since 
$\bW_{{\btheta}}^n\cdot P_{T_P}
\{y^n| \bw_{{\btheta}}^n\cdot P_{T_P} (y^n)
 > e^{n \delta} q_P^{(n)} (y^n)
 \} $
satisfies the condition (\ref{eq13UU})
and\par
\noindent$\lim_{n \to \infty} \frac{1}{n}\log 
\frac{e^{n^{\frac{1}{4}}}}{c_{n,P}}
M_n= \lim_{n \to \infty} \frac{1}{n}\log 
M_n < \infty$, 
\begin{align}
\lim_{n \to \infty} \frac{-1}{n}\log 
\frac{e^{n^{\frac{1}{4}}}}{c_{n,P}}
M_n
\bW_{{\btheta}}^n\cdot P_{T_P}
\{y^n| 
\bw_{{\btheta}}^n\cdot P_{T_P} (y^n) > e^{n \delta} q_P^{(n)} (y^n)
\} 
=\infty.\Label{eq13}
\end{align}
Since
$ \lim_{n \to \infty} \frac{-1}{n}\log 
\frac{e^{n^{\frac{1}{4}}}}{c_{n,P}}
M_n e^{-n (R_1-\delta)} 
=R_1-\delta-R$,
\eqref{eq11} and \eqref{eq13} imply that
\begin{align}
\lim_{n \to \infty} \frac{-1}{n}\log 
 \frac{1}{M_n} \sum_{j=1}^{M_n} \sum_{i \neq j}
W^{n}_{{\btheta},E_{n,P,R}(i)}(\hat{{\cal D}}_{E_{n,P,R}(j)}) 
\ge R_1-\delta-R.
\end{align}
Since $\delta>0$ is arbitrary, we have
\begin{align}
\lim_{n \to \infty} \frac{-1}{n}\log 
 \frac{1}{M_n} \sum_{j=1}^{M_n} \sum_{i \neq j}
W^{n}_{{\btheta},E_{n,P,R}(i)}(\hat{{\cal D}}_{E_{n,P,R}(j)}) 
\ge R_1-R.\Label{11-27-3}
\end{align}
Combining (\ref{eq6}), (\ref{eq10}), and (\ref{11-27-3}), we obtain (\ref{eq5}).


\section{Second order}\Label{s8}
In this section, we show Theorems \ref{Th2} and \ref{Th3}.
First, we show Theorem \ref{Th2}. Next, we show Theorem \ref{Th3} by modifying the proof of Theorem \ref{Th2}.
These proofs are based on 
the decoder properties \eqref{eq8T} and \eqref{3-28-1b}
as well as the discussion in Section \ref{s6}.
To satisfy the condition of Theorem \ref{Th2}, we modify 
the encoder given in Section \ref{s3} 
and the decoder given in Section \ref{s5} as follows.
For our encoder, we choose $R$ to be $R_1^*+\frac{R_2^*}{\sqrt{n}}$,
and for our decoder, we choose $R_1$ to be $R_1^*+\frac{R_2^*}{\sqrt{n}}+ \frac{1}{n^{2/3}}$.
The modified code is denoted by $\Phi_{n,P,R_1^*,R_2^*}$.

\subsection{Evaluation of second term in \eqref{eq6}}
We evaluate the first and the second terms in \eqref{eq6}, separately.
Here, we show that the second term in \eqref{eq6} goes to zero.
Similar to \eqref{eq11}, 
we have
\begin{align}
& \frac{1}{M_n} \sum_{j=1}^{M_n} \sum_{i \neq j}
W^{n}_{{\btheta},E_{n,P,R}(i)}(\hat{{\cal D}}_{E_{n,P,R}(j)}) \nonumber\\
=& 
\frac{e^{n^{\frac{1}{4}}}}{c_{n,P}}
\sum_{j=1}^{M_n}
\bW_{{\btheta}}^n\cdot P_{T_P}(\hat{{\cal D}}_{E_{n,P,R}(j)}) \nonumber\\
=& 
\frac{e^{n^{\frac{1}{4}}}}{c_{n,P}}
\sum_{j=1}^{M_n}
\int_{\hat{{\cal D}}_{E_{n,P,R}(j)}}
\bw_{{\btheta}}^n\cdot P_{T_P}(y^n) q_P^{(n)} (y^n)^{s-1} Q_P^{(n)} (y^n)^{1-s}
\mu^n(dy^n)
\nonumber\\
\stackrel{(a)}{\le} &
\frac{e^{n^{\frac{1}{4}}}}{c_{n,P}}
\sum_{j=1}^{M_n}
\Big(\int_{\hat{{\cal D}}_{E_{n,P,R}(j)}}
\bw_{{\btheta}}^n\cdot P_{T_P}(y^n)^{\frac{1}{s}} q_P^{(n)} (y^n)^{\frac{s-1}{s}} 
\mu^n(dy^n)\Big)^{s}
\Big(\int_{\hat{{\cal D}}_{E_{n,P,R}(j)}}
q_P^{(n)} (y^n)^{\frac{1-s}{1-s}}
\mu^n(dy^n)\Big)^{1-s}
\nonumber\\
\le &
\frac{e^{n^{\frac{1}{4}}}}{c_{n,P}}
\sum_{j=1}^{M_n}
\Big(\int_{{\cal Y}^n}
\bw_{{\btheta}}^n\cdot P_{T_P}(y^n)^{\frac{1}{s}} Q_P^{(n)} (y^n)^{\frac{s-1}{s}} 
\mu^n(dy^n)\Big)^{s}
Q_P^{(n)} ( \hat{{\cal D}}_{E_{n,P,R}(j)})^{1-s}
\nonumber\\
= &
\frac{e^{n^{\frac{1}{4}}}}{c_{n,P}}
\sum_{j=1}^{M_n}
e^{s(\frac{1}{s}-1)D_{\frac{1}{s}} (\bW_{{\btheta}}^n\cdot P_{T_P}
\|Q_P^{(n)} ) }
Q_P^{(n)} ( \hat{{\cal D}}_{E_{n,P,R}(j)})^{1-s}
\nonumber\\
= &
\frac{e^{n^{\frac{1}{4}}}}{c_{n,P}}
\sum_{j=1}^{M_n}
e^{(1-s)D_{\frac{1}{s}} (\bW_{{\btheta}}^n\cdot P_{T_P}
\|Q_P^{(n)} ) }
Q_P^{(n)} ( \hat{{\cal D}}_{E_{n,P,R}(j)})^{1-s} 
\nonumber\\
=&
\frac{e^{n^{\frac{1}{4}}}}{c_{n,P}}
M_n
e^{(1-s)D_{\frac{1}{s}} (\bW_{{\btheta}}^n\cdot P_{T_P}
\|Q_P^{(n)} ) }
Q_P^{(n)} ( \hat{{\cal D}}_{E_{n,P,R}(j)})^{1-s} 
\nonumber\\
\le &
\frac{e^{n^{\frac{1}{4}}}}{c_{n,P}}
e^{n R_1^* + \sqrt{n}R_2^*-n^{\frac{1}{4}} }
e^{(1-s)D_{\frac{1}{s}} (\bW_{{\btheta}}^n\cdot P_{T_P} \|Q_P^{(n)} ) }
e^{-n R_1^* - \sqrt{n}R_2^* - n^{\frac{1}{3}} } 
Q_{ E_{n,P,R}(j) }^{(n)} ( \hat{{\cal D}}_{E_{n,P,R}(j)})^{1-s} 
\nonumber\\
= &
\frac{1}{c_{n,P}}
e^{- n^{\frac{1}{3}} } 
e^{(1-s)D_{\frac{1}{s}} (\bW_{{\btheta}}^n\cdot P_{T_P} \|Q_P^{(n)} ) },\Label{eq22}
\end{align}
where
$(a)$ follows from H\"{o}lder inequality.

Here, we choose $s$ to be $\frac{1}{n+1}$.
Since \eqref{3-28-1b} implies
\begin{align}
& (1-s)D_{\frac{1}{s}} (\bW_{{\btheta}}^n\cdot P_{T_P} \|Q_P^{(n)} ) 
=
(1-\frac{1}{n+1})D_{1+n} (\bW_{{\btheta}}^n\cdot P_{T_P} \|Q_P^{(n)} ) 
\nonumber\\
\le & (1-\frac{1}{n+1}) \frac{k}{2}\log n+O(1),
\Label{eq11b}
\end{align}
\eqref{eq22} yields 
\begin{align}
\lim_{n \to \infty} \frac{1}{M_n} \sum_{j=1}^{M_n} \sum_{i \neq j}
W^{n}_{{\btheta},E_{n,P,R}(i)}(\hat{{\cal D}}_{E_{n,P,R}(j)}) =0.\Label{eq23}
\end{align}
That is, the second term of \eqref{eq6} goes to zero.

\subsection{Evaluation of first term in \eqref{eq6}}
Now, we evaluate the first term in \eqref{eq6}, which is upper bounded by two terms in \eqref{eq7}.
So, we evaluate the first and the second terms in \eqref{eq7}.
For an arbitrary $\delta_2>0$,
we substitute $ \delta_2/\sqrt{n} $ into $\delta$ in \eqref{eq7}. 
So, \eqref{eq8T} implies that the second term in \eqref{eq7} goes to zero.
Further, for an arbitrary $\delta_3>0$, the first term in \eqref{eq7} is upper bounded as
\begin{align}
& W^{n}_{{\btheta},x^n}(
\{y^n| w^{n}_{{\btheta},x^n}(y^n) < 
e^{n R_1^*+ \sqrt{n}(R_2^*+\delta_2)} q_{P}^{(n)} (y^n) \}) 
\nonumber \\
\le & 
W^{n}_{{\btheta},x^n}(
\{y^n| w^{n}_{{\btheta},x^n}(y^n) < 
e^{n R_1^*+ \sqrt{n}(R_2^*+\delta_2+\delta_3)} w^{n}_{{\btheta},P}(y^n) \})
+
W^{n}_{{\btheta},x^n}(
\{y^n| w^{n}_{{\btheta},P}(y^n) < e^{-\sqrt{n} \delta_3 } q_{P}^{(n)} (y^n) \}) \Label{12-25-10}.
\end{align} 
We also have
\begin{align}
& W^{n}_{{\btheta},x^n}(
\{y^n| w^{n}_{{\btheta},P}(y^n) < e^{-\sqrt{n} \delta_3 } q_{P}^{(n)} (y^n) \}) \nonumber \\
=&
\sum_{{x^n}' \in T_P} \frac{1}{|T_P|}
W^{n}_{{\btheta},{x^n}'}(
\{y^n| w^{n}_{{\btheta},P}(y^n) < e^{-\sqrt{n} \delta_3 } q_{P}^{(n)} (y^n) \}) \nonumber \\
\le &
|T_n({\cal X})|
\sum_{{x^n}' \in {\cal X}^n} 
P^n(x)
W^{n}_{{\btheta},{x^n}'}(
\{y^n| w^{n}_{{\btheta},P}(y^n) < e^{-\sqrt{n} \delta_3 } q_{P}^{(n)} (y^n) \}) \nonumber \\
= &
|T_n({\cal X})|
W^{n}_{{\btheta},P}
(
\{y^n| w^{n}_{{\btheta},P}(y^n) < e^{-\sqrt{n} \delta_3 } q_{P}^{(n)} (y^n) \}) \nonumber \\
\le &
|T_n({\cal X})|
e^{-\sqrt{n} \delta_3 } Q_{P}^{(n)} 
(\{y^n| w^{n}_{{\btheta},P}(y^n) < e^{-\sqrt{n} \delta_3 } q_{P}^{(n)} (y^n) \}) \nonumber \\
\le &
|T_n({\cal X})|
e^{-\sqrt{n} \delta_3 } 
\to 0 \Label{12-25-11}.
\end{align} 
Therefore, due to \eqref{12-25-10} and \eqref{12-25-11},
the limit of the first term in \eqref{eq7} with 
$\delta= \delta_2/\sqrt{n}$
not larger than the limit of 
$\lim_{n \to \infty} 
W^{n}_{{\btheta},x^n}(
\{y^n| w^{n}_{{\btheta},x^n}(y^n) < 
e^{n R_1^*+ \sqrt{n}(R_2^*+\delta_2+\delta_3)} w^{n}_{{\btheta},P}(y^n) \})$.
So, \eqref{eq7} and \eqref{eq8T} guarantee that
\begin{align}
\lim_{n \to \infty} 
W^{n}_{{\btheta},x^n}(\hat{{\cal D}}_{x^n}^c)
\le 
\lim_{n \to \infty} 
W^{n}_{{\btheta},x^n}(
\{y^n| w^{n}_{{\btheta},x^n}(y^n) < 
e^{n R_1^*+ \sqrt{n}(R_2^*+\delta_2+\delta_3)} w^{n}_{{\btheta},P}(y^n) \})\label{12-25-T}
\end{align}
for arbitrary $\delta_2>0$ and $\delta_3>0$.

When $ I(P,W_{{\btheta}})>R_1^*$,
\begin{align}
W^{n}_{{\btheta},x^n}(
\{y^n| w^{n}_{{\btheta},x^n}(y^n) < e^{n R_1^*+ \sqrt{n}(R_2^*+\delta_2+\delta_3)} w^{n}_{{\btheta},P}(y^n) \})
\to 0
\Label{12-25-12B}.
\end{align} 
Now, we show that any element $x^n \in T_P$ satisfies that
\begin{align}
W^{n}_{{\btheta},x^n}(
\{y^n| w^{n}_{{\btheta},x^n}(y^n) < e^{n R_1^*+ \sqrt{n}(R_2^*+\delta_2+\delta_3)} w^{n}_{{\btheta},P}(y^n) \})
\to
\int_{-\infty}^{\frac{R_2^*+\delta_2+\delta_3}{\sqrt{V(P,\bW_{\btheta} )}}}
 \frac{1}{\sqrt{2\pi}}
\exp( - \frac{x^2}{2}) dx\Label{12-25-12}
\end{align} 
when $ I(P,\bW_{\btheta})=R_1^*$.
We have 
$\frac{1}{n} \log \frac{w^{n}_{{\btheta},x^n}(Y^n)}{w^{n}_{{\btheta},P}(Y^n) }
-R_1^*
=\frac{1}{n} \sum_{i=1}^n 
(\log \frac{w_{{\btheta},x_i}(Y_i)}{w_{{\btheta},P}(Y_i) }
- D(W_{\btheta,x_i}\| W_{\btheta,P} )
)$.
When $n$ goes to infinity, each element $x \in {\cal X}$ appears in $x^n \in T_P$ infinitely times.
So, the central limit theorem implies \eqref{12-25-12} with the assumptions of Theorem \ref{Th2},
So, when $ I(P,\bW_{\btheta})=R_1^*$,
combining \eqref{12-25-T}, \eqref{12-25-11}, and \eqref{12-25-12}, we have
\begin{align}
\lim_{n \to \infty} 
W^{n}_{{\btheta},x^n}(\hat{{\cal D}}_{x^n}^c)
\le
\int_{-\infty}^{\frac{R_2^*+\delta_2+\delta_3}{\sqrt{V(P,\bW_{\btheta} )}}}
 \frac{1}{\sqrt{2\pi}}
\exp( - \frac{x^2}{2}) dx. \Label{17-4}
\end{align} 
In summary, 
since $\delta_2$ and $\delta_3$ are arbitrary in \eqref{17-4}, 
we have
\begin{align}
\begin{array}{lll}
\lim_{n \to \infty} 
W^{n}_{{\btheta},x^n}(\hat{{\cal D}}_{x^n}^c)
&= 0 &\hbox{when } I(P, \bW_{{\btheta}})> R_1^*
\\
\lim_{n \to \infty} 
W^{n}_{{\btheta},x^n}(\hat{{\cal D}}_{x^n}^c)
&\le
\int_{-\infty}^{\frac{R_2^*}{\sqrt{V(P,\bW_{{\btheta}} )}}}
 \frac{1}{\sqrt{2\pi}}
\exp( - \frac{x^2}{2}) dx&\hbox{when } I(P,\bW_{{\btheta}})=R_1^*.
\end{array}
\Label{eq24}
\end{align} 
Hence, the relations \eqref{eq6}, \eqref{eq24}, and \eqref{eq23}
yield \eqref{eq2-2}.
Therefore, we complete our proof of Theorem \ref{Th2}.

\subsection{Proof of Theorem \ref{Th3}}
Now, we proceed to the proof of Theorem \ref{Th3}.
Similarly, we evaluate the average error probability by discussing the first and second terms of \eqref{eq6}, separately.
Fortunately, the upper bound in the RHS of \eqref{eq22} 
goes to zero uniformly with respect to ${\btheta}$ in any compact set 
because the constant $O(1)$ in \eqref{3-28-1b} can be chosen uniformly with respect to ${\btheta}$ in any compact set. 
So, the second term of \eqref{eq6} goes to zero uniformly with respect to ${\btheta}$ in any compact set.

Now, we evaluate the first term of \eqref{eq6}, which is upper bounded by two terms in \eqref{eq7}.
Due to \eqref{eq8T}, the second term of \eqref{eq7} goes to zero uniformly with respect to ${\btheta}$ in any compact set because the term $o(1)$ in \eqref{eq8T} can be chosen as a arbitrary small constant uniformly with respect to ${\btheta}$ in any compact set. 
Then, the remaining term is the first term of \eqref{eq7},
which is upper bounded by two terms in \eqref{12-25-10}.
The second term of \eqref{12-25-10} goes to zero due to \eqref{12-25-11}.
Therefore, it is enough to evaluate 
$\lim_{n \to \infty} 
W^{n}_{{\btheta},x^n}(
\{y^n| w^{n}_{{\btheta},x^n}(y^n) < 
e^{n R_1^*+ \sqrt{n}(R_2^*+\delta_2+\delta_3)} w^{n}_{{\btheta},P}(y^n) \})$, which is the first term of \eqref{12-25-10}.

Since the function ${\btheta} \mapsto I(P,\bW_{\btheta} )$ is a $C^1$ function on $\bTheta$
and the function ${\btheta} \mapsto V(P,\bW_{\btheta} )$ is a continuous function on $\bTheta$,
the likelihood ratio
$\frac{1}{n} \log \frac{w^{n}_{{\btheta},x^n}(Y^n)}{w^{n}_{{\btheta},P}(Y^n) }$
has the expectation 
$I(P,\bW_{{\btheta}_1})+\frac{1}{\sqrt{n}} f({\btheta}_2)+o(\frac{1}{\sqrt{n}})$
and the variance $\frac{1}{n}V(P,\bW_{\btheta} )+o(\frac{1}{n}) $
with ${\btheta}={\btheta}_1 +\frac{1}{\sqrt{n}}{\btheta}_2$ 
for $x^n \in T_P$.
This property yields 
the relation
\begin{align}
&\lim_{n \to \infty}
w^{n}_{{\btheta},x^n}(
\{y^n| w^{n}_{{\btheta},x^n}(y^n) < e^{n R_1^*+ \sqrt{n}R_2^*} w^{n}_{{\btheta},P}(y^n) \})\nonumber  \\
=&
\left\{
\begin{array}{ll}
0 
& \hbox{when } I(P,\bW_{{\btheta}_1} ) > R_1^* \\
\displaystyle \int_{-\infty}^{\frac{R_2^*-f({\btheta}_2)}{\sqrt{V(P,\bW_{{\btheta}_1} )}}}
 \frac{1}{\sqrt{2\pi}}
\exp( - \frac{x^2}{2}) dx
& \hbox{when } I(P,\bW_{{\btheta}_1} ) = R_1^* 
\end{array}
\right.
\Label{12-25-12x}
\end{align} 
for $x^n \in T_P$.
The convergence is uniform with respect to ${\btheta}_2$ in any compact set. 
Since $\delta_2$ and $\delta_3$ are arbitrary, we obtain the desired argument.
Since the uniformity in any compact subset has been discussed in the above discussion.
the proof of Theorem \ref{Th3} is completed, now.


\section{Conclusion}\Label{s7}
We have proposed 
a universal channel coding for general output alphabet
including continuous output alphabets.
Although our encoder is the same as the encoder given in the previous paper \cite{Ha1},
we cannot directly apply the decoder given in \cite{Ha1}
because 
it is not easy to make a distribution that universally approximates 
any independent and identical distribution in the sense of maximum relative entropy in the continuous alphabet.
To overcome the difficulty,
we have invented an $\alpha$-R\'{e}nyi-relative-entropy version of Clarke and
Barron's formula for Bayesian average distribution.
That is, we have shown that
the Bayesian average distribution well approximates 
any independent and identical distribution in the sense of 
$\alpha$-R\'{e}nyi relative entropy.
Then, we have made our universal decoder by applying the information spectrum method
to the Bayesian average distribution.
We have lower bounded the error exponent of our universal code,
which implies that our code attains the mutual information rate.
Since our approach covers the discrete and continuous cases and the exponential and the second-order type evaluations for 
the decoding error probability,
our method provides a unified viewpoint for the universal channel coding,
which is an advantage over the existing studies \cite{CK,de4,C,D,E,F,G,H,Do}.

Further, we have introduced the parametrization 
${\btheta}_1+\frac{{\btheta}_2}{\sqrt{n}}$ for our channel,
which is commonly used in statistics for 
discussing the asymptotic local approximation by normal distribution family 
\cite{Lehmann,Vaart}.
This parametrization matches the second order parameterization of the coding rate.
So, we can expect that 
this parametrization is applicable to the case when we have an unknown small disturbance in the channel.

Here, we compare our analysis on universal coding with the compound channel \cite{BBT,Wolfowitz,G}.
In the compound channel, we focus on the worst case of the average error 
probability for the unknown channel parameter.
Hence, we do not evaluate the average error 
probability when the channel parameter is not the worst case.
However, in the universal coding \cite{CK}, we evaluate the error probability for all possible channels.
Hence, we can evaluate how better the average error probability of each case is than the worst case.
In particular, our improved second order analysis in Theorem \ref{Th3} 
clarifies its dependence of the unknown parameter ${\btheta}_2$.

\section*{Acknowledgment}
The author is grateful for Dr. Hideki Yagi to 
explaining MIMO Gaussian channels
and informing the references \cite{de4,A,B,C,D,E,F,G,H}. 
He is grateful to the referee for Dr. Vincent Tan for helpfu comments.
He is also grateful to the referee of the first version of this paper for 
helpful comments.
The author is partially supported by JSPS Grants-in-Aid for Scientific Research (A) No. 23246071
and (A) No.17H01280.
The author is also partially supported by the National Institute of Information and Communication Technology (NICT), Japan.

\appendices
\section{Proof of Lemma \ref{L11-27-1}}\Label{A1}
When $\inf_{{\btheta}\in \bTheta_0}I(P,\bW_{\btheta} ) \le R$,
all terms in \eqref{eq11-27-1} are zero.
So, we can assume that 
$\inf_{{\btheta}\in \bTheta_0}I(P,\bW_{\btheta} ) > R$ without loss of generality.

Firstly, we show that
\begin{align}
&\max_{R_1} \min (\max_{s \in [0,1]}(s I_{1-s}(P,\bW_{\btheta} ) -s R_1),R_1-R) 
\nonumber \\
=&\max_{s\in [0,1]}\frac{1}{1+s}(sI_{1-s}(P,\bW_{\btheta} ) -sR) .\Label{2-15-1}
\end{align}
Since the function $R_1 \mapsto \max_{s \in [0,1]}(sI_{1-s}(P,\bW_{\btheta} ) -sR_1)$ is monotone decreasing and continuous 
and
the function $R_1 \mapsto R_1-R$ is 
monotone increasing and continuous, 
there exists a real number $R_1^*>R$ such that
$\max_{s \in [0,1]}(sI_{1-s}(P,\bW_{\btheta} ) -s R_1^*) =R_1^*-R$.
We also choose $s^*:=\argmax _{s \in [0,1]}(sI_{1-s}(P,\bW_{\btheta} ) -sR_1^*)$.
Here, we assume that $s^*\in (0,1)$.
$s^*$ satisfies $\frac{d s I_{1-s}(P,\bW_{\btheta} )}{ds}|_{s=s^*}=R_1^*$.
Since $(s^* I_{1-s^*}(P,\bW_{\btheta} ) -s^* R_1^*) =R_1^*-R$,
we have
$R_1^*= \frac{R-s^* I_{1-s^*}(P,\bW_{\btheta} )}{1+s^*}$.
So, $\max_{s \in [0,1]}(s I_{1-s}(P,\bW_{\btheta} ) -s R_1^*) =
\frac{ (s^* I_{1-s^*}(P,\bW_{\btheta} )-s^* R)}{1+s^*}$
and
$\frac{d s I_{1-s}(P,\bW_{\btheta} )}{ds}|_{s=s^*}=\frac{R-s^* I_{1-s^*}(P,\bW_{\btheta} )}{1+s^*}$.

Since the first derivative of 
$\frac{ (s I_{1-s}(P,\bW_{\btheta} )-s R)}{1+s}$ with respect to $s$
is 
$\frac{(1+s)\frac{d s I_{1-s}(P,\bW_{\btheta} )}{ds} -s I_{1-s}(P,\bW_{\btheta} )-R}{(1+s)^2}$
and 
$(1+s)\frac{d s I_{1-s}(P,\bW_{\btheta} )}{ds} -s I_{1-s}(P,\bW_{\btheta} )-R$
is monotone decreasing for $s \in [0,1]$,
$s_*:=\argmax_{s \in [0,1]}
\frac{ (s I_{1-s}(P,\bW_{\btheta} )-s R)}{1+s}$ 
satisfies the same condition
$\frac{d s I_{1-s}(P,\bW_{\btheta} )}{ds}|_{s=s_*}=\frac{R-s_* I_{1-s_*}(P,\bW_{\btheta} )}{1+s_*}$.
So, we find that
$\max_{s \in [0,1]}\frac{ (s I_{1-s}(P,\bW_{\btheta} )-s R)}{1+s}=
\frac{ (s^* I_{1-s^*}(P,\bW_{\btheta} )-s^* R)}{1+s^*}$.
Thus, we obtain \eqref{2-15-1} when $s^*\in (0,1)$.
When $s^*=0$, we can show $s_*=0$, which implies \eqref{2-15-1}.
Similarly, we can show \eqref{2-15-1} when $s^*=1$.

Since 
\begin{align}
\max_{R_1} \inf_{{\btheta}\in \bTheta_0}\min (\max_{s \in [0,1]}(s I_{1-s}(P,\bW_{\btheta} ) -s R_1),R_1-R)
\le
\inf_{{\btheta}\in \bTheta_0}\max_{R_1} \min (\max_{s \in [0,1]}(s I_{1-s}(P,\bW_{\btheta} ) -s R_1),R_1-R) ,
\end{align}
it is sufficient to show there exists $R_1$ such that
\begin{align}
\inf_{{\btheta}\in \bTheta_0}\min (\max_{s \in [0,1]}(sI_{1-s}(P,\bW_{\btheta} ) -s R_1),R_1-R)
\ge
\inf_{{\btheta}\in \bTheta_0} \max_{s\in [0,1]}\frac{1}{1+s}(sI_{1-s}(P,\bW_{\btheta} ) -s R) .\Label{2-15-2}
\end{align}
We choose $R_1$ to be $R+  \inf_{{\btheta}\in \bTheta_0} \max_{s\in [0,1]}\frac{1}{1+s}(sI_{1-s}(P,\bW_{\btheta} ) -sR) $.
Given a parameter ${\btheta} \in \bTheta_0$,
using the function $f(s,{\btheta}):=\frac{1}{1+s}(sI_{1-s}(P,\bW_{\btheta} ) -sR)$
and $s_{{\btheta}}:= \argmax_{s \in [0,1]} f(s,{\btheta})$, 
we have
\begin{align}
f(s_{{\btheta}},{\btheta})
\ge
\inf_{{\btheta}'\in \bTheta_0} \max_{s'\in [0,1]} f(s',{\btheta}'),
\end{align}
which implies that
\begin{align}
&
\max_{s \in [0,1]}(sI_{1-s}(P,\bW_{\btheta} ) -sR_1)
\ge 
(s_{{\btheta}} I_{1-s_{{\btheta}}}(P,\bW_{\btheta} ) -s_{{\btheta}} R_1)
\nonumber \\
=&
(s_{{\btheta}} I_{1-s_{{\btheta}}}(P,\bW_{\btheta} ) -s_{{\btheta}} R)+ s_{{\btheta}} 
\inf_{{\btheta}'\in \bTheta_0} \max_{s'\in [0,1]}f(s',{\btheta}')
\nonumber \\
=&
f(s_{{\btheta}},{\btheta}) + s_{{\btheta}} 
(f(s_{{\btheta}},{\btheta})-\inf_{{\btheta}'\in \bTheta_0} \max_{s'\in [0,1]} f(s',{\btheta}'))
\ge f(s_{{\btheta}},{\btheta})
\nonumber \\
\ge &
\inf_{{\btheta}\in \bTheta_0} \max_{s\in [0,1]}\frac{1}{1+s}(sI_{1-s}(P,\bW_{\btheta} ) -sR)
=R_1-R.
\Label{2-15-3}
\end{align}
Thus,
\begin{align}
\min (\max_{s \in [0,1]}(sI_{1-s}(P,\bW_{\btheta} ) -sR_1),R_1-R)
=
\inf_{{\btheta}\in \bTheta_0} 
\max_{s\in [0,1]}\frac{1}{1+s}(sI_{1-s}(P,\bW_{\btheta} ) -sR),
\end{align}
which implies \eqref{2-15-2}.
\endproof

\section{Proof of $\alpha$-R\'enyi relative entropy version of Clarke-Barron formula}\Label{A2}
\subsection{Preparation}
To show Lemma \ref{l2-A}, we prepare several formulas used in proofs of \eqref{3-28-Ab} and \eqref{3-28-A}.
Under the assumption of exponential family, 
we consider 
the logarithmic derivatives 
$l_{{\btheta}}(y):=(l_{{\btheta},1}(y), \ldots, l_{{\btheta},k}(y))$, where
\begin{align}
l_{{\btheta},j}(y)
:=\frac{\partial}{\partial {\theta}^j} \log p_{{\btheta}}(y) 
= g_{j}(y)-\frac{\partial}{\partial {\theta}^j} \phi({\btheta}).
\end{align}
The Fisher information matrix $J_{{\btheta},i,j}$
is given as 
\begin{align}
J_{{\btheta},i,j}
:=
\frac{\partial^2}{\partial \theta^i \partial \theta^j} \phi(\btheta)
=
\int_{{\cal Y}}
\frac{\partial}{\partial \theta^i} l_{\btheta,j}(y)
p_{\btheta}(y)\mu(dy).
\end{align}
Due to the above assumption, $J_{\btheta,i,j}$ is continuous for $\btheta\in \bTheta$. 
Hence, we have
\begin{align}
\frac{\partial^2}{\partial \theta^i \partial \theta^j} \log p_{\btheta}(y) 
=-J_{\theta,i,j}
\end{align}
which is independent of $y$.

Define 
\begin{align}
l_{\btheta,j;n}(y^n) &:=
\frac{1}{\sqrt{n}}
\sum_{i=1}^n \frac{\partial}{\partial \theta^j}
\log p_{\btheta}(y_i) .
\end{align}
Since $p_{\btheta}$ is an exponential family, we define
\begin{align}
J_{\btheta,j,j';n} :=
-\frac{1}{n}
\sum_{i=1}^n \frac{\partial^2}{\partial\theta^j \partial\theta^{j'}}
\log p_{\btheta}(y_i),\Label{Eq222}
\end{align}
which is also independent of for $y^n \in {\cal Y}^n$.

In the following discussion, we employ the 
Laplace approximation (Laplace method of approximation).
To use the Taylor expansion at $\btheta_0$, 
we choose $\btheta_1(\btheta) $ between $\btheta $ and $\btheta_0$.
Then, we have
\begin{align}
& 
\frac{p_{\btheta}^n(y^n)}{p_{\btheta_0}^n(y^n)}
\nu(\btheta) \nonumber \\
=& 
\frac{e^{\sum_{i=1}^n \log p_{\btheta}(y_i)}
}{p_{\btheta_0}^n(y^n)}
\nu(\btheta)  \nonumber \\
=& 
\frac{e^{
\sum_{i=1}^n \log p_{\btheta_0}(y_i)
+ \sum_{j=1}^k (\theta^j-\theta_0^j) 
\sum_{i=1}^n \frac{\partial}{\partial \theta^j}
\log p_{\btheta}(y_i)|_{\btheta=\btheta_0}
+ 
\sum_{j,j'=1}^k
\frac{(\theta^j -\theta_0^j )(\theta^{j'} -\theta_0^{j'})}{2}
\sum_{i=1}^n \frac{\partial^2}{\partial\theta^j \partial\theta^{j'}}
\log p_{\btheta}(y_i)|_{\theta=\theta_1}
}
}{p_{\btheta_0}^n(y^n)}
\nu(\btheta) \nonumber \\
=& 
e^{
-n (\btheta-\btheta_0)^T \frac{J_{\btheta_1(\btheta)}}{2}(\btheta-\btheta_0)
+\sqrt{n} (\btheta-\btheta_0)^T l_{\btheta_0;n} (y^n)
}
\nu(\btheta) ,\Label{12-25-9}
\end{align}
where the final equation follows from the properties of 
$l_{\btheta;n}(y^n):=(l_{\btheta,j;n}(y^n))$ and the matrix $J_{\btheta}$.

Next, for an arbitrary $\epsilon>0$,
we choose a neighborhood $
U_{\btheta_0,\delta}:=\{\btheta |  \|\btheta-\btheta_0\| \le \delta\} $
such that
$J_{\btheta} \le J_{\btheta_0} (1+\epsilon)$ and $\nu(\btheta) \ge \nu(\theta_0) (1-\epsilon)$
for $\btheta \in U_{\btheta_0,\delta}$.
For $\btheta \in U_{\btheta_0,\delta}$, we have
\begin{align}
& 
e^{
-n (\btheta-\btheta_0)^T \frac{J_{\btheta_1(\btheta)}}{2}(\btheta-\btheta_0)
+\sqrt{n} (\btheta-\btheta_0)^T l_{\theta_0;n} (y^n)
}
\nu(\btheta) \nonumber \\
\ge & 
e^{
-n (\btheta-\btheta_0)^T \frac{J_{\btheta_0}(1+\epsilon)}{2}(\btheta-\btheta_0)
+\sqrt{n} (\btheta-\btheta_0)^T l_{\btheta_0;n} (y^n)
}
\nu(\btheta_0) (1-\epsilon) \Label{12-25-8}\\
= &
e^{\frac{1}{2}
l_{\btheta_0;n}(y^n)^T
J_{\btheta_0}^{-1} (1+\epsilon)^{-1} 
l_{\btheta_0;n}(y^n)  }
e^{
-
(\sqrt{n} (\btheta-\btheta_0 ) - (J_{\btheta_0}(1+\epsilon ))^{-1} l_{\btheta_0;n}(y^n))^T
\frac{J_{\btheta_0}(1+\epsilon )}{2}
(\sqrt{n} (\btheta-\btheta_0 ) - (J_{\btheta_0}(1+\epsilon ))^{-1} l_{\btheta_0;n}(y^n))
}
\nu(\btheta_0) (1-\epsilon) .\Label{12-25-7}
\end{align}

\subsection{Proof of \eqref{3-28-Ab}}
Now, we prove \eqref{3-28-Ab}.  
We focus on the set
$B_{y^n}:=\{z \in \mathbb{R}^k| z^T l_{\btheta_0;n} (y^n) \ge 0,
\|z\|\le 1\}$ for $y^n$.
For $n \ge \frac{1}{\delta^2}$,
we have
\begin{align}
& \frac{q_{\nu}^n(y^n)}{p_{\btheta_0}^n(y^n)}
=\int_{\bTheta} 
\frac{p_{\btheta}^n(y^n)}{p_{\btheta_0}^n(y^n)}
\nu(\btheta) d \btheta \nonumber \\
\ge & \int_{U_{\btheta_0,\delta}} 
\frac{p_{\btheta}^n(y^n)}{p_{\btheta_0}^n(y^n)}
\nu(\btheta) d \btheta  \nonumber \\
\stackrel{(a)}{\ge} & 
\int_{U_{\btheta_0,\delta}} 
e^{
-n (\btheta-\theta_0)^T \frac{J_{\btheta_0}(1+\epsilon)}{2}(\btheta-\btheta_0)
+\sqrt{n} (\btheta-\btheta_0)^T l_{\btheta_0;n} (y^n)
}
\nu(\btheta_0) (1-\epsilon) 
d \btheta  \nonumber \\
\stackrel{(b)}{=} & 
\frac{1}{n^{\frac{k}{2}}}
\int_{\| z\|\le \sqrt{n} \delta} 
e^{
-z^T \frac{J_{\btheta_0}(1+\epsilon)}{2}z
+z^T l_{\btheta_0;n} (y^n)
}
\nu(\btheta_0) (1-\epsilon) 
d z  \nonumber \\
\stackrel{(c)}{\ge} & 
\frac{1}{n^{\frac{k}{2}}}
\int_{B_{y^n}} 
e^{
-z^T \frac{J_{\btheta_0}(1+\epsilon)}{2}z
+z^T l_{\btheta_0;n} (y^n)
}
\nu(\btheta_0) (1-\epsilon) 
d z  \nonumber \\
\stackrel{(d)}{\ge} & 
\frac{1}{n^{\frac{k}{2}}}
e^{-\frac{ \|J_{\btheta_0}\|(1+\epsilon)}{2}}
\int_{B_{y^n}} 
\nu(\btheta_0) (1-\epsilon) 
d z  \nonumber \\
\stackrel{(e)}{\ge} & 
\frac{1}{n^{\frac{k}{2}}}
e^{-\frac{ \|J_{\btheta_0}\|(1+\epsilon)}{2}}
\frac{\pi^{\frac{k}{2}}}{2\Gamma(\frac{k}{2}+1)}
 (1-\epsilon) ,
\end{align}
where $(a)$, $(b)$, $(c)$, and $(d)$
follow from 
\eqref{12-25-9} and \eqref{12-25-8},
the relation $z= \sqrt{n} (\btheta-\btheta_0)$,
the relation $n \ge \frac{1}{\delta^2}$,
and
the relation $ \|J_{\btheta_0} \| \ge z^T J_{\btheta_0}z$
for $\|z\| \le 1$, respectively.
The inequality $(e)$ is shown because the volume of $B_{y^n}$ is $\frac{\pi^{\frac{k}{2}}}{2\Gamma(\frac{k}{2}+1)}$.
That is, we have
\begin{align}
\frac{1}{n^{\frac{k}{2}}}\cdot 
\frac{p_{\btheta_0}^n(y^n)}
{q_{\nu}^n(y^n)}
\le 
e^{\frac{ \|J_{\btheta_0}\|(1+\epsilon)}{2}}
\frac{2\Gamma(\frac{k}{2}+1)}{( 1-\epsilon ) \pi^{\frac{k}{2}}}
\Label{12-25-2}.
\end{align}

Therefore, for $n \ge \frac{1}{\delta^2}$,
using \eqref{12-25-2} we have
\begin{align}
& n^{-s\frac{k}{2}}
e^{s D_{1+s}(P_{\btheta_0}^n\|Q_{\nu}^n)}\nonumber \\
=&
 n^{-s\frac{k}{2}}
E_{P_{\btheta_0}^n}
\bigg[\bigg(\frac{p_{\btheta_0}^n(Y^n)}{q_{\nu}^n(Y^n)}\bigg)^s\bigg]
\nonumber \\
\le &
E_{P_{\btheta_0}^n}
\bigg[
e^{s\frac{ \|J_{\btheta_0}\|(1+\epsilon)}{2}}
\bigg(\frac{2\Gamma(\frac{k}{2}+1)}{( 1-\epsilon ) \pi^{\frac{k}{2}}}\bigg)^s
\bigg]
\nonumber \\
\le &
e^{s\frac{ \|J_{\btheta_0}\|(1+\epsilon)}{2}}
\bigg(\frac{2\Gamma(\frac{k}{2}+1)}{( 1-\epsilon ) \pi^{\frac{k}{2}}}\bigg)^s.\Label{16-b}
\end{align}
Substituting $n$ into $s$ in \eqref{16-b}, 
we have
\begin{align*}
D_{1+n}(P_{\btheta}^n\|Q_{\nu}^n) - \frac{k}{2}\log n
=
\frac{1}{n}\log  n^{-n\frac{k}{2}}
e^{n D_{1+n}(P_{\btheta_0}^n\|Q_{\nu}^n)}
\le
\log \bigg[e^{\frac{ \|J_{\btheta_0}\|(1+\epsilon)}{2}}
\Big(\frac{2\Gamma(\frac{k}{2}+1)}{( 1-\epsilon ) \pi^{\frac{k}{2}}}\Big)\bigg].
\end{align*}
Since $\log \bigg[e^{\frac{ \|J_{\btheta_0}\|(1+\epsilon)}{2}}
\Big(\frac{2\Gamma(\frac{k}{2}+1)}{( 1-\epsilon ) \pi^{\frac{k}{2}}}\Big)\bigg]$ is a constant,
we obtain \eqref{3-28-Ab}.

\subsection{Proof of \eqref{3-28-A}}
When the continuity in the assumption is uniform for $\theta$ in any compact subset,
we can choose a common constant $\delta>0$ in a compact subset in $\bTheta$.
Since the constant $\delta$ decides the range for $n$ in the above discussion,
the constant on the RHS of \eqref{3-28-Ab} can be chosen uniformly
for $\theta$ in any compact subset.

Next, for a deeper analysis for proving \eqref{3-28-A}, 
we fix an arbitrary small real number $\epsilon>0$.
Then, we choose a sufficiently large real number $R$ 
and a large integer $N_1$ such that
the complement $C_n^c$ of the set $C_n:= \{ y^n | \|l_{\btheta_0;n}(y^n) \| < R\}$
satisfies
\begin{align}
P_{\btheta_0} (C_n^c) \le \epsilon\Label{12-25-3}
\end{align}
for $n \ge N_1$.
Then, we evaluate 
$\frac{q_{\nu}^n(y^n)}{p_{\btheta_0}^n(y^n)}$
under the assumption $\|l_{\btheta_0;n}(y^n) \|< R$ as follows.
Then, we can choose sufficiently large $N_2$ such that
\begin{align}
& \int_{ \|z\| \le \sqrt{n} \delta} 
e^{
-
(z - (J_{\btheta_0}(1+\epsilon ))^{-1} l_{\btheta_0;n}(y^n))^T
\frac{J_{\btheta_0}(1+\epsilon )}{2}
(z - (J_{\btheta_0}(1+\epsilon ))^{-1} l_{\btheta_0;n}(y^n))
}
d z \nonumber \\
\ge &
\frac{ (2\pi)^{\frac{k}{2}}}{ (\det (J_{\btheta_0}))^{\frac{1}{2}}
(1+\epsilon )^{\frac{k}{2}}}
(1-\epsilon) \Label{12-25-a}
\end{align}
for $n \ge N_2$ and $y^n$ satisfying 
that $\|l_{\btheta_0;n}(y^n) \|< R$
because the limit of LHS of \eqref{12-25-a} is $\frac{ (2\pi)^{\frac{k}{2}}}{ (\det (J_{\btheta_0}))^{\frac{1}{2}}
(1+\epsilon )^{\frac{k}{2}}}$.
Thus, when $n \ge N_2$, we have
\begin{align}
& \frac{ p_{\nu}^n(y^n)}{ p_{\btheta_0}^n(y^n)}
=\int_{\bTheta} 
\frac{ p_{\btheta}^n(y^n)}{ p_{\btheta_0}^n(y^n)}
w(\btheta) d \btheta \nonumber \\
\ge & \int_{U_{\btheta_0,\delta}} 
\frac{ p_{\btheta}^n(y^n)}{ p_{\btheta_0}^n(y^n)}
\nu(\btheta) d \btheta \nonumber \\
\stackrel{(a)}{\ge} & 
e^{\frac{1}{2}
l_{\btheta_0;n}(y^n)^T
J_{\btheta_0}^{-1} (1+\epsilon)^{-1} 
l_{\btheta_0;n}(y^n)  } \nonumber\\
& \cdot \int_{U_{\btheta_0,\delta}} 
e^{
-
(\sqrt{n} (\btheta-\btheta_0 ) - (J_{\btheta_0}(1+\epsilon ))^{-1} l_{\btheta_0;n}(y^n))^T
\frac{J_{\btheta_0}(1+\epsilon )}{2}
(\sqrt{n} (\btheta-\btheta_0 ) - (J_{\btheta_0}(1+\epsilon ))^{-1} l_{\btheta_0;n}(y^n))
}
\nu(\btheta_0) (1-\epsilon) 
d \theta \nonumber \\
\stackrel{(b)}{=} & 
n^{-\frac{k}{2}}
 e^{\frac{1}{2}
l_{\btheta_0;n}(y^n)^T
J_{\btheta_0}^{-1} (1+\epsilon)^{-1} 
l_{\btheta_0;n}(y^n)  }\nonumber \\
& \cdot \int_{ \|z\| \le \sqrt{n} \delta} 
e^{
-
(z - (J_{\btheta_0}(1+\epsilon ))^{-1} l_{\btheta_0;n}(y^n))^T
\frac{J_{\btheta_0}(1+\epsilon )}{2}
(z - (J_{\btheta_0}(1+\epsilon ))^{-1} l_{\btheta_0;n}(y^n))
}
\nu(\btheta_0) (1-\epsilon) 
d z \nonumber \\
\stackrel{(c)}{\ge} & 
n^{-\frac{k}{2}}
 e^{\frac{1}{2}
l_{\btheta_0;n}(y^n)^T
J_{\btheta_0}^{-1} (1+\epsilon)^{-1} 
l_{\btheta_0;n}(y^n)  } 
\nu(\btheta_0) (1-\epsilon)^2
\frac{ (2\pi)^{\frac{k}{2}}}{ (\det (J_{\btheta_0}))^{\frac{1}{2}}
(1+\epsilon )^{\frac{k}{2}}}\Label{12-25-1},
\end{align}
where 
$(a)$, $(b)$, and $(c)$ follow from 
\eqref{12-25-9} and \eqref{12-25-7}, the relation $z= \sqrt{n} (\btheta-\btheta_0)$, 
and \eqref{12-25-a}.

Now, we introduce a notation.
For a distribution $P$, a subset $S$ of the probability space,
and a random variable $X$, 
we denote the value
$\int_{S} X(\omega) P(d \omega)$ by
$E_{P|S}[X]$.
Therefore, using this notation,
we have
\begin{align}
& n^{-s\frac{k}{2}}
e^{s D_{1+s}(P_{\btheta_0}^n\|Q_{\nu}^n)}\nonumber \\
=&
 n^{-s\frac{k}{2}}
E_{P_{\btheta_0}^n}
\Bigl[ \Big(\frac{p_{\btheta_0}^n(Y^n)}{q_{\nu}^n(Y^n)}\Big)^s
\Bigr]
\nonumber \\
= &
 n^{-s\frac{k}{2}}
E_{P_{\btheta_0}^n|C_n}
\Bigl[\Big(\frac{p_{\btheta_0}^n(Y^n)}{q_{\nu}^n(Y^n)}\Big)^s\Bigr]
+
 n^{-s\frac{k}{2}}
E_{P_{\btheta_0}^n|C_n^c}
\Bigl[\Big(\frac{p_{\btheta_0}^n(Y^n)}{q_{\nu}^n(Y^n)}\Big)^s\Bigr]
\nonumber \\
\stackrel{(a)}{\le} &
E_{P_{\btheta_0}^n|C_n}
\Bigl[
e^{-\frac{s}{2}
l_{\btheta_0;n}(y^n)^T
J_{\btheta_0}^{-1} (1+\epsilon)^{-1} 
l_{\btheta_0;n}(y^n)  } 
\nu(\btheta_0)^{-s} (1-\epsilon)^{-2s}
\frac
{ (\det (J_{\btheta_0}))^{\frac{s}{2}}(1+\epsilon )^{\frac{sk}{2}}}
{ (2\pi)^{\frac{sk}{2}}}
\Bigr]
\nonumber \\
& +
E_{P_{\btheta_0}^n|C_n^c}
\Bigl[
e^{s\frac{ \|J_{\btheta_0}\|(1+\epsilon)}{2}}
\Big(\frac{2\Gamma(\frac{k}{2}+1)}{( 1-\epsilon ) \pi^{\frac{k}{2}}}\Big)^s
\Bigr]
\nonumber \\
\stackrel{(b)}{\le} &
E_{P_{\btheta_0}^n}
\biggl[
e^{-(1+\epsilon)^{-1} 
\frac{s}{2}
l_{\btheta_0;n}(y^n)^T
J_{\btheta_0}^{-1} 
l_{\btheta_0;n}(y^n)  } 
\nu(\btheta_0)^{-s} (1-\epsilon)^{-2s}
\frac
{ (\det (J_{\btheta_0}))^{\frac{s}{2}}(1+\epsilon )^{\frac{sk}{2}}}
{ (2\pi)^{\frac{sk}{2}}}
\biggr]
\nonumber \\
& +
e^{s\frac{ \|J_{\btheta_0}\|(1+\epsilon)}{2}}
\Big(\frac{2\Gamma(\frac{k}{2}+1)}{( 1-\epsilon ) \pi^{\frac{k}{2}}}\Big)^s
\epsilon,
\end{align}
where $(a)$ and $(b)$ follow from 
\eqref{12-25-2}, \eqref{12-25-1} and \eqref{12-25-3}, respectively.

Due to the central limit theorem, the random variable 
$ l_{\btheta_0;n}(Y^n)^T J_{\btheta_0}^{-1} l_{\btheta;n}(Y^n)$
asymptotically obeys the 
$\chi^2$ distribution with the degree $k$.
Hence, 
the expectation
$E_{P_{\btheta_0}^n}
[e^{-(1+\epsilon)^{-1} \frac{s}{2} 
l_{\btheta_0;n}(Y^n)^T J_{\btheta_0}^{-1} l_{\btheta_0;n}(Y^n)}]$
converges to $\frac{1}{(1+s(1+\epsilon)^{-1})^{k/2}}
$.
Thus,
\begin{align}
& \lim_{n \to \infty} n^{-s\frac{k}{2}}
e^{s D_{1+s}(P_{\btheta_0}^n\|Q_{\nu}^n)}\nonumber \\
\le &
\frac{1}{(1+s(1+\epsilon)^{-1})^{k/2}}
\nu(\btheta_0)^{-s} (1-\epsilon)^{-2s}
\frac
{ (\det (J_{\btheta_0}))^{\frac{s}{2}}(1+\epsilon )^{\frac{sk}{2}}}
{ (2\pi)^{\frac{sk}{2}}}
\nonumber \\
& +
e^{s\frac{ \|J_{\btheta_0}\|(1+\epsilon)}{2}}
\Big(\frac{2\Gamma(\frac{k}{2}+1)}{( 1-\epsilon ) \pi^{\frac{k}{2}}}\Big)^s
\epsilon.
\end{align}
Since $\epsilon>0$ is arbitrary, we have
\begin{align}
& \lim_{n \to \infty} n^{-s\frac{k}{2}}
e^{s D_{1+s}(P_{\theta_0}^n\|Q_{\nu}^n)}\nonumber \\
\le &
\frac{1}{(1+s)^{k/2}}
\nu(\btheta_0)^{-s} 
\frac
{ (\det (J_{\btheta_0}))^{\frac{s}{2}}}
{ (2\pi)^{\frac{sk}{2}}}.
\end{align}
Taking the logarithm, we have
\begin{align}
& \lim_{n \to \infty} 
s D_{1+s}(P_{\theta_0}^n\|Q_{\nu}^n)
-s\frac{k}{2} \log n
\le 
-\frac{k}{2}\log (1+s)
-s \log \nu(\btheta_0) 
+ \frac{s}{2}\log \det (J_{\btheta_0})
- \frac{sk}{2} \log   (2\pi),\Label{Eq223}
\end{align}
which implies \eqref{3-28-A}.
\endproof

\section{Proof of Lemma \ref{l1}}\Label{Ap3}
To prove Lemma \ref{l1}, we recall packing lemma \cite[Lemma 10.1]{CK}.
We assume that ${\cal X}$ and ${\cal Y}$ are finite sets.
For a type $P\in T_n({\cal X})$, 
let ${\cal V}(P, {\cal Y})$ be the set of conditional types of $P$ from ${\cal X}$ to ${\cal Y}$.
Then, we have the following proposition.

\begin{proposition}[\protect{\cite[Lemma 10.1]{CK}}]\Label{Pack}
For every $R>\delta>0$, there exists a sufficiently large integer $N$ satisfying the following.
For every $n \ge N$ and every type $P \in T_n({\cal X})$ satisfying $H(P)>R$, 
there exist at least $\exp[n(R-\delta)]$
distinct sequences $x_i^n \in T_p$ such that every pair of stochastic matrices 
$\bV,\hat{\bV} \in {\cal V}(P, {\cal Y})$
and every $i$ satisfy the inequality
\begin{align}
\bigg| T_{\bV}(x_i^n) \cap \bigg( \bigcup_{j\neq i} T_{\hat{\bV}}(x_j^n)\bigg) \bigg|
\le | T_{{\bV}}(x_i^n)\Big|
\exp [ -n |  I(P, \hat{\bV}) -R|^+]. \Label{Eq5-24}
\end{align}
\end{proposition}

\begin{proofof}{Lemma \ref{l1}}
Indeed, the proof of Proposition \ref{Pack} in \cite{CK}
employs only the property $\frac{\delta n}{\log n} $ goes to $\infty$ as $n \to \infty$
even when $\delta$ depends on $n$.
Hence, we can apply Proposition \ref{Pack}  to the case with $\delta=n^{-\frac{3}{4}}$.
Now, we apply Proposition \ref{Pack} to the case when 
${\cal Y}={\cal X}$,
$\hat{\bV}(x|x')=\delta_{x,x'}$ and $\delta=n^{-\frac{3}{4}}$.
Since $x^n_i \in T_P$, we have ${\cal V}(P, {\cal Y})= V(x_i^n, {\cal X}) $.
We also have $T_{\hat{\bV}}(x_j^n)=\{x_j^n\}$ for any $j$.
The relation $I(P, \hat{\bV}) = H(P)$ implies that $-n |  I(P, \hat{\bV}) -R|^+= -(nH(P)-R) $.
So, \eqref{Eq5-24} is the same as \eqref{20}.
Hence, we obtain Lemma \ref{l1}.
\end{proofof}

\end{document}